\newtheorem{de}{Definition}[section]
\newtheorem{theo}{Theorem}
\newtheorem{prop}[de]{Proposition}
\newtheorem{cor}[de]{Corollary}
\newtheorem{obs}[de]{Observation}
\title{Almost disjoint spanning trees: relaxing the conditions for completely independent spanning trees}
\author[1]{Benoit Darties}
\author[2]{Nicolas Gastineau}
\author[1]{Olivier Togni}
\affil[1]{\textit{Universit\'e de Bourgogne, 21078 Dijon cedex, France, Le2i, UMR CNRS 6303} }
\affil[2]{\textit{PSL, Université Paris-Dauphine, LAMSADE UMR CNRS 7243, France} }
\begin{document}
\maketitle
\begin{abstract}
The search of spanning trees with interesting disjunction properties has led to the introduction of edge-disjoint spanning trees,  independent spanning trees and more recently completely independent spanning trees. We group together these notions by defining $(i,j)$-disjoint spanning trees, where $i$ ($j$, respectively) is the number of vertices (edges, respectively) that are shared by more than one tree. We illustrate how $(i,j)$-disjoint spanning trees provide some nuances between the existence of disjoint connected dominating sets and completely independent spanning trees. We prove that determining if there exist two $(i,j)$-disjoint spanning trees in a graph $G$ is NP-complete, for every two positive integers $i$ and $j$. Moreover we prove that for square of graphs, $k$-connected interval graphs, complete graphs and several grids, there exist $(i,j)$-disjoint spanning trees for interesting values of $i$ and $j$.
\end{abstract}
\section{Introduction}
The graphs considered are assumed to be connected, since spanning trees are only interesting for connected graphs.
Let $k\ge 2$ be an integer and $T_1,\ldots, T_k$ be \emph{spanning trees} in a graph $G$. The spanning trees $T_1,\ldots, T_k$ are {\em edge-disjoint} if $\cup_{1\le \ell<\ell' \le k}E(T_\ell)\cap E(T_{\ell'})=\emptyset$.
A vertex is said to be an \emph{inner vertex} in a tree $T$ if it has degree at least 2 in $T$ and a \emph{leaf} if it has degree 1. We denote by $I(T)$ the set of inner vertices of tree $T$.
The spanning trees $T_1,\ldots, T_k$ are \emph{internally vertex-disjoint} if $I(T_1), \ldots, I(T_k)$ are pairwise disjoint.
Finally, the spanning trees $T_1,\ldots, T_k$ are \emph{completely independent spanning trees} if they are both pairwise edge-disjoint and internally vertex-disjoint.

In this paper, we introduce $(i,j)$-disjoint spanning trees:

\begin{de}
Let $k\ge 2$ be an integer and $T_1,\ldots, T_k$ be spanning trees in a graph $G$.
We let $I(T_1,\ldots, T_k)=\{u\in V(G)|\exists \ell , \ell' \  u\in I(T_{\ell} ) \cap I(T_{\ell '}), \ 1\le \ell <\ell' \le k \}$ be the set of vertices which are inner vertices in at least two spanning trees among $T_1,\ldots, T_k$, and we let  $E(T_1,\ldots, T_k)=\{e\in E(G)|\exists \ell , \ell', \ 1\le \ell <\ell' \le k, \  e\in E(T_{\ell} ) \cap E(T_{\ell '}) \}$  be the set of edges which belong to at least two spanning trees among $T_1,\ldots, T_k$. The spanning trees $T_1,\ldots, T_k$ are \emph{$(i,j)$-disjoint} for two positive integers $i$ and $j$, if the two following conditions are satisfied:
\begin{enumerate}
\item[i)] $|I(T_1,\ldots, T_k)|\le i$;
\item[ii)] $|E(T_1,\ldots, T_k)| \le j$.
\end{enumerate}
\end{de}
By $*$ we denote a large enough integer, i.e. an integer larger than $\max(|E(G)|, |V(G)|)$, for a graph $G$. Remark that $(0,0)$-disjoint spanning trees are completely independent spanning trees and that $(*,0)$-disjoint spanning trees are edge-disjoint spanning trees.
Notice also that there are infinitely many $(i,j)$-disjoint trees in $G$, for $i\ge \gamma_c(G)$ and $j\ge |V(G)|-1$, $\gamma_c(G)$ being the minimum size of a connected dominating set in $G$ (one can repeat infinitely the same tree with $\gamma_c(G)$ inner vertices).
\subsection{Related work}

Completely independent spanning trees were introduced by Hasunuma \cite{HA2001} and then have been studied on different classes of graphs, such as underlying graphs of line graphs \cite{HA2001}, maximal planar graphs \cite{HA2002}, Cartesian product of two cycles \cite{HA2012}, complete graphs, complete bipartite and tripartite graphs \cite{PAI2013}, variant of hypercubes \cite{CH2017,PAI2016} and chodal rings \cite{PAI2014}.
Moreover, determining if there exist two completely independent spanning trees in a graph $G$ is a NP-hard problem \cite{HA2002}. Recently, sufficient conditions inspired by the sufficient conditions for hamiltonicity have been determined in order to guarantee the existence of two completely independent spanning trees: Dirac's condition \cite{AR2014} and Ore's condition \cite{FA2014}. Moreover, Dirac's condition has been generalized to more than two trees \cite{CH2015,HA2016,HO2016} and has been independently improved \cite{HA2016,HO2016} for two trees. Also, a recent paper has studied the problem on the class of $k$-trees, for which the authors have proven that there exist at least $\lceil k/2 \rceil$ completely independent spanning trees \cite{MA2015}.

For a given tree $T$ and a given pair of vertices $(u,v)$ of $T$, let $P_{T}(u,v)$ be the set of vertices in the unique path between $u$ and $v$ in $T$. 
Remark that $T_1,\ldots, T_k$ are internally vertex-disjoint in a graph $G$ if and only if for any pair of vertices $(u,v)$ of $V(G)$, $\cup_{1\le \ell <\ell ' \le k}P_{T_\ell}(u,v)\cap P_{T_{\ell'}}(u,v)=\{u,v\}$.
Other works on disjoint spanning trees include independent spanning trees, i.e. focus on finding spanning trees $T_1,\ldots, T_k$ rooted at the same vertex $r$. In independent spanning trees, for any vertex $v$ the paths between $r$ and $v$ in $T_1,\ldots, T_k$ are pairwise internally vertex-disjoint, i.e. for each integers $i$ and $j$, $1\le i<j\le k$, $P_{T_i}(r,v)\cap P_{T_j}(r,v)=\{r,v\}$. In contrast with the notion of completely independent spanning trees, in independent spanning trees only the paths to $r$ are considered. Thus, $T_1,\ldots, T_k$ may share common vertices or edges, which is not admissible with completely independent spanning trees.  Independent spanning trees have been studied for several classes of graphs which include product graphs \cite{OIBI1996}, de Bruijn and Kautz digraphs \cite{GH1997,HN2001}, and chordal rings \cite{IKOI1999}. Related works also include edge-disjoint spanning trees, i.e. spanning trees which are pairwise edge-disjoint only. Edge-disjoint spanning trees have been studied on many classes of graphs, including hypercubes \cite{BA1999}, Cartesian product of cycles \cite{WA2001} and Cartesian product of two graphs \cite{KU2003}.

Some subsets of vertices $D_1,\ldots,D_k$ of a graph $G$ are $k$ \emph{disjoint connected dominating sets} if $D_1,\ldots,D_k$ are pairwise disjoint and each subset is a connected dominating set in $G$. There are some works about disjoint connected dominating sets that can be transcribed in terms of internally vertex-disjoint spanning trees (the disjoint connected dominating sets can be used to provide the inner vertices of internally vertex-disjoint spanning trees). The maximum number of disjoint connected dominating sets in a graph $G$ is the \emph{connected domatic number}. This parameter is denoted by $d_c(G)$ and has been introduced by Hedetniemi and Laskar \cite{HE1984} in 1984.  An interesting result about connected domatic number concerns planar graphs, for which Hartnell and Rall have proven that, except $K_4$ (which has connected domatic number $4$), their connected domatic number is bounded by 3 \cite{HAR2001}. The problem of constructing a connected dominating set is often motivated by wireless ad-hoc networks \cite{GU1998,WA2004} for which connected dominating sets are used to create a virtual backbone in the network.

\subsection{Motivation and basic facts about disjoint dominating sets}
Remark that $(0,*)$-disjoint spanning trees are internally vertex-disjoint, and consequently, are related to connected dominating sets. Hence, we call $(0,*)$-disjoint spanning trees, \emph{trees induced by disjoint connected dominating sets} and we give the properties about $(0,*)$-disjoint spanning trees using, when possible, the concept of disjoint connected dominating sets. Figure \ref{desCDS} illustrates how disjoint connected dominating sets are used to construct  $(0,*)$-disjoint spanning trees. As we observe in the next proposition, trees induced by disjoint connected dominating sets satisfy interesting properties.
First, an edge can only belong to at most two trees (Proposition \ref{megaprop}.i)). Second, the paths between two non-adjacent vertices in trees induced by disjoint connected dominating sets are edge-disjoint (Proposition \ref{megaprop}.ii)). Moreover, the fact that the paths between two adjacent vertices share a common edge implies that these vertices are inner vertices in different trees (Proposition \ref{megaprop}.iii)).
These properties illustrate the utility of disjoint connected dominating sets to broadcast a message following multiples routes in a network. 
For a spanning tree, an \emph{inner edge} is an edge between two inner vertices and a \emph{leaf edge} is an edge which is not an inner edge.
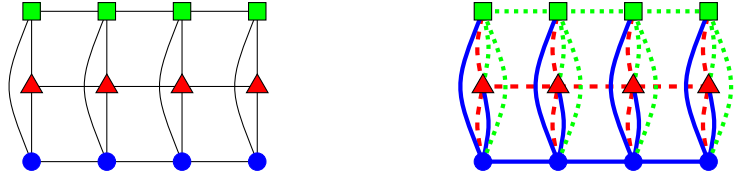
\begin{figure}[t]
\begin{center}
\begin{tikzpicture}
\draw (0,0) -- (3,0);
\draw (0,1) -- (3,1);
\draw (0,2) -- (3,2);
\draw (0,0) -- (0,2);
\draw (1,0) -- (1,2);
\draw (2,0) -- (2,2);
\draw (3,0) -- (3,2);
\draw (0,0) .. controls (-0.4,1).. (0,2);
\draw (1,0) .. controls (0.6,1).. (1,2);
\draw (2,0) .. controls (1.6,1).. (2,2);
\draw (3,0) .. controls (2.6,1).. (3,2);

\node at (0,0) [circle,draw=blue,fill=blue,scale=0.7]{};
\node at (0,1) [regular polygon, regular polygon sides=3,draw=black,fill=red,scale=0.5]{};
\node at (0,2) [regular polygon, regular polygon sides=4,draw=black,fill=green,scale=0.7]{};
\node at (1,0) [circle,draw=blue,fill=blue,scale=0.7]{};
\node at (1,1) [regular polygon, regular polygon sides=3,draw=black,fill=red,scale=0.5]{};
\node at (1,2) [regular polygon, regular polygon sides=4,draw=black,fill=green,scale=0.7]{};
\node at (2,0) [circle,draw=blue,fill=blue,scale=0.7]{};
\node at (2,1) [regular polygon, regular polygon sides=3,draw=black,fill=red,scale=0.5]{};
\node at (2,2) [regular polygon, regular polygon sides=4,draw=black,fill=green,scale=0.7]{};
\node at (3,0) [circle,draw=blue,fill=blue,scale=0.7]{};
\node at (3,1) [regular polygon, regular polygon sides=3,draw=black,fill=red,scale=0.5]{};
\node at (3,2) [regular polygon, regular polygon sides=4,draw=black,fill=green,scale=0.7]{};

\draw[ultra thick, style=dashed,color=red] (0+6,0) .. controls (-0.1+6,0.5).. (0+6,1);
\draw[ultra thick, style=dashed,color=red] (1+6,0) .. controls (0.9+6,0.5).. (1+6,1);
\draw[ultra thick, style=dashed,color=red] (2+6,0) .. controls (1.9+6,0.5).. (2+6,1);
\draw[ultra thick, style=dashed,color=red] (3+6,0) .. controls (2.9+6,0.5).. (3+6,1);
\draw[ultra thick, style=dashed,color=red] (0+6,1) .. controls (-0.1+6,1.5).. (0+6,2);
\draw[ultra thick, style=dashed,color=red] (1+6,1) .. controls (0.9+6,1.5).. (1+6,2);
\draw[ultra thick, style=dashed,color=red] (2+6,1) .. controls (1.9+6,1.5).. (2+6,2);
\draw[ultra thick, style=dashed,color=red] (3+6,1) .. controls (2.9+6,1.5).. (3+6,2);
\draw[ultra thick, style=dashed,color=red] (3+6,1) -- (0+6,1);
\draw[ultra thick,color=blue] (0+6,0) .. controls (0.1+6,0.5).. (0+6,1);
\draw[ultra thick,color=blue] (1+6,0) .. controls (1.1+6,0.5).. (1+6,1);
\draw[ultra thick,color=blue] (2+6,0) .. controls (2.1+6,0.5).. (2+6,1);
\draw[ultra thick,color=blue] (3+6,0) .. controls (3.1+6,0.5).. (3+6,1);
\draw[ultra thick,color=blue] (0+6,0) .. controls (-0.4+6,1).. (0+6,2);
\draw[ultra thick,color=blue] (1+6,0) .. controls (0.6+6,1).. (1+6,2);
\draw[ultra thick,color=blue] (2+6,0) .. controls (1.6+6,1).. (2+6,2);
\draw[ultra thick,color=blue] (3+6,0) .. controls (2.6+6,1).. (3+6,2);
\draw[ultra thick,color=blue] (0+6,0) -- (3+6,0);
\draw[ultra thick, style=dotted,color=green] (3+6,2) -- (0+6,2);
\draw[ultra thick, style=dotted,color=green] (0+6,1) .. controls (0.1+6,1.5).. (0+6,2);
\draw[ultra thick, style=dotted,color=green] (1+6,1) .. controls (1.1+6,1.5).. (1+6,2);
\draw[ultra thick, style=dotted,color=green] (2+6,1) .. controls (2.1+6,1.5).. (2+6,2);
\draw[ultra thick, style=dotted,color=green] (3+6,1) .. controls (3.1+6,1.5).. (3+6,2);
\draw[ultra thick, style=dotted,color=green] (0+6,0) .. controls (0.4+6,1).. (0+6,2);
\draw[ultra thick, style=dotted,color=green] (1+6,0) .. controls (1.4+6,1).. (1+6,2);
\draw[ultra thick, style=dotted,color=green] (2+6,0) .. controls (2.4+6,1).. (2+6,2);
\draw[ultra thick, style=dotted,color=green] (3+6,0) .. controls (3.4+6,1).. (3+6,2);

\node at (0+6,0) [circle,draw=blue,fill=blue,scale=0.7]{};
\node at (0+6,1) [regular polygon, regular polygon sides=3,draw=black,fill=red,scale=0.5]{};
\node at (0+6,2) [regular polygon, regular polygon sides=4,draw=black,fill=green,scale=0.7]{};
\node at (1+6,0) [circle,draw=blue,fill=blue,scale=0.7]{};
\node at (1+6,1) [regular polygon, regular polygon sides=3,draw=black,fill=red,scale=0.5]{};
\node at (1+6,2) [regular polygon, regular polygon sides=4,draw=black,fill=green,scale=0.7]{};
\node at (2+6,0) [circle,draw=blue,fill=blue,scale=0.7]{};
\node at (2+6,1) [regular polygon, regular polygon sides=3,draw=black,fill=red,scale=0.5]{};
\node at (2+6,2) [regular polygon, regular polygon sides=4,draw=black,fill=green,scale=0.7]{};
\node at (3+6,0) [circle,draw=blue,fill=blue,scale=0.7]{};
\node at (3+6,1) [regular polygon, regular polygon sides=3,draw=black,fill=red,scale=0.5]{};
\node at (3+6,2) [regular polygon, regular polygon sides=4,draw=black,fill=green,scale=0.7]{};
\end{tikzpicture}
\end{center}
\caption{Three disjoint connected dominating sets in $C_3\square P_4$ (on the left) and the spanning trees induced by these dominating sets (on the right) (circles: $D_1$ or $I(T_1)$; triangles: $D_2$ or $I(T_2)$; squares: $D_3$ or $I(T_3)$; plain lines: edges of $T_1$; dashed lines: edges of $T_2$; dotted lines: edges of $T_3$).}
\label{desCDS}
\end{figure}
\begin{prop}\label{megaprop}
Let $i$ and $j$ be two integers, $1\le i<j \le k$.
Let $G$ be a graph of order at least $3$, let $T_1, \ldots, T_k$ be spanning trees induced by $k$ disjoint connected dominating sets and let $u,v\in V(G)$. 
\begin{enumerate}
\item[i)] every edge belongs to at most two trees among $T_1, \ldots, T_k$;
\item[ii)] if $u$ and $v$ are not adjacent, then $P_{T_i}(u,v) \cap P_{T_j} (u,v) =\emptyset$;
\item[iii)] if $P_{T_i}(u,v) \cap P_{T_j} (u,v)\neq\emptyset$ then $\{u,v\}\not\subseteq I(T_i)$ and $\{u,v\}\not\subseteq I(T_j)$.
\end{enumerate} 
\end{prop}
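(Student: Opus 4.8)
The whole proposition rests on two elementary facts. First, since the trees are induced by the pairwise disjoint connected dominating sets $D_1,\ldots,D_k$, each tree satisfies $I(T_\ell)\subseteq D_\ell$, so the trees are internally vertex-disjoint: $I(T_a)\cap I(T_b)=\emptyset$ whenever $a\neq b$. Second, as $G$ has order at least $3$, every $T_\ell$ is a connected spanning tree on at least three vertices and therefore contains no edge both of whose endpoints are leaves; equivalently, \emph{every edge of $T_\ell$ has at least one endpoint in $I(T_\ell)$}. I would record this last observation first, because all three items follow from it together with internal vertex-disjointness.

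For i), fix an edge $e=xy$ and suppose it belongs to three of the trees. By the observation, each of these trees contributes an endpoint of $e$ lying among its own inner vertices; with only the two endpoints $x,y$ available for three trees, two of them, say $T_a$ and $T_b$, must select the same endpoint $w$. Then $w\in I(T_a)\cap I(T_b)$, contradicting internal vertex-disjointness. A bare pigeonhole argument thus settles i).

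For ii), I read $P_{T_i}(u,v)\cap P_{T_j}(u,v)$ as the set of edges common to the two paths, and argue by contradiction: suppose an edge $e=xy$ lies on both $P_{T_i}(u,v)$ and $P_{T_j}(u,v)$ while $u$ and $v$ are non-adjacent. The crucial point is that any endpoint of $e$ different from $u$ and $v$ is an internal vertex of the corresponding path and hence belongs to $I(T_i)$, respectively $I(T_j)$. Since $u$ and $v$ are non-adjacent we have $e\neq uv$, so $e$ has an endpoint $w\notin\{u,v\}$; as $e$ lies on both paths this forces $w\in I(T_i)\cap I(T_j)=\emptyset$, a contradiction. The only delicate case is when $e$ is incident to $u$ or to $v$: there one uses that non-adjacency makes each path have length at least two, so the far endpoint of the first or last edge is genuinely internal. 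This case analysis is the main, though modest, obstacle of the proof.

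For iii), I would combine the two previous items. If the paths share an edge then, by the reasoning of ii), $u$ and $v$ must be adjacent and the shared edge can only be $uv$ itself, since any other common edge would again produce a shared inner vertex; hence $uv\in E(T_i)\cap E(T_j)$. Suppose now, for a contradiction, that $\{u,v\}\subseteq I(T_i)$. Applying the basic observation to the edge $uv$ viewed inside $T_j$ shows that at least one of $u,v$ lies in $I(T_j)$, and that vertex would then belong to $I(T_i)\cap I(T_j)=\emptyset$. Therefore $\{u,v\}\not\subseteq I(T_i)$, and the symmetric argument gives $\{u,v\}\not\subseteq I(T_j)$.
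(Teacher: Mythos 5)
Your proof is correct and follows essentially the same route as the paper's: internal vertex-disjointness of the trees (since $I(T_\ell)\subseteq D_\ell$) combined with the observation that in a spanning tree on at least three vertices every edge has an endpoint in $I(T_\ell)$. Your reading of $P_{T_i}(u,v)\cap P_{T_j}(u,v)$ as an edge intersection is also the one the paper's own proof implicitly adopts (under the literal vertex-set definition of $P_T(u,v)$ item ii) could not hold, as $u$ and $v$ always lie in both paths), and your pigeonhole phrasing of i) and the detour through the edge $uv$ in iii) are only cosmetic variations on the paper's case analysis.
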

\begin{proof}
We prove that each of the three properties holds.\newline
i) Suppose that $uv$ is an inner edge in a spanning tree. Since the vertices $u$ and $v$ are leaves in any other tree, $uv$ cannot belong to more than one spanning tree. Suppose $uv$ is a leaf edge in at least two trees. The edge $uv$ can belong to at most two trees, the trees for which $u$ and $v$ are inner vertices.\newline
ii) Since the paths between $u$ and $v$ in the different trees have length at least $2$ and contain no common inner vertices, they share no common edges.\newline
iii) By Property ii), $u$ and $v$ are adjacent. Moreover, if $u,v\in I(T_i)$, then $P_{T_i} (u,v)$ only contains inner edges of $T_i$, and, as for Property i), each inner edge can not belong to another tree. Since $P_{T_i} (u,v)$ only contains inner edges of $T_i$, $P_{T_i}(u,v)\cap P_{T_j} (u,v)=\emptyset$. The same goes if $u,v\in I(T_j)$.
\end{proof}
Note that there is a relation between the minimum size of a connected dominating set in a graph $G$, denoted by $\gamma_c (G)$ and $d_c(G)$ (the maximum number of disjoint connected dominating sets) since $d_c(G)\le \lfloor |V(G)|/ \gamma_c(G) \rfloor$.
We also have to mention that Fan, Hong and Liu \cite{FA2014} have studied the line graph of cubic graphs of order at least $10$ and have proven that there are no two completely independent spanning trees in these cubic graphs. It could be possible, however, that it is not the case for two disjoint dominating sets.

If a graph satisfies $d_c(G)=k$ and does not contain $k$ completely independent spanning trees, then there exist an integer $j$ such that $G$ contains $k$ $(0,j)$-disjoint spanning trees.
Hence, the notion of $(0,j)$-disjoint spanning trees provides some nuances between the existence of disjoint connected dominating sets and completely independent spanning trees.

We say that $k$ connected dominating sets $D_1$, $\ldots$, $D_k$ are \emph{$\ell$-rooted connected dominating sets} if the set $A=\cup_{1\le i< j \le k} D_i\cap D_j $ satisfies $|A|\le \ell$. Remark that we can construct $k$ $(\ell,*)$-disjoint spanning trees in a graph that contains $k$ $\ell$-rooted connected dominating sets $D_1$, $\ldots$, $D_k$ by considering that $I(T_i)=D_i$, for every integer $i$, $1\le i\le k$.
Note also that trees $T_1,\ldots,T_k$ induced by $1$-rooted connected sets, i.e. $(1,*)$-disjoint spanning trees, are also independent spanning trees rooted at a vertex $r\in I(T_1,\ldots,T_k)$.
However, if $T_1,\ldots, T_k$ are independent spanning trees rooted at $r$ in $G$, then $T_1,\ldots, T_k$ are not always $(1,*)$-disjoint spanning trees in $G$. This difference is illustrated by the fact that if for two vertices $u,v\in V(G)$ and two spanning trees $T_i$ and $T_j$, $i\neq j$ , we have $P_{T_i}(u,r) \cap P_{T_j} (u,r) =\{ u,r\}$ and $P_{T_i}(v,r) \cap P_{T_j} (v,r) =\{ v,r\}$, then it does not imply that $P_{T_i}(u,v) \cap P_{T_j} (u,v) =\{ u,v,r\}$.

\subsection{Notation and Organization}
We denote by $\delta(G)$ the minimum degree of $G$, i.e., $\delta(G)=\min\{ N(u)|\ u \in V(G)\}$.
We denote by $d_G(u,v)$ the usual distance between two vertices $u$ and $v$ in a graph $G$.
The graph $G-e$ is the graph obtained from $G$ by removing an edge $e$ from $E(G)$ and $G-A$, for $A\subseteq V(G)$, is the graph obtained from $G$ by removing the vertices from $A$ and their incident edges.
For $A\subseteq V(G)$, we denote by $G[A]$, the graph $G-(V(G)\setminus A)$.
We say that a graph $G$ is $k$-connected if $|V(G)|\ge k+1$ and if for any set of vertices $A\subseteq V(G)$, with $|A|\le k-1$, $G-A$ is connected.
By $K_{n}$, $P_n$ and $C_n$, we denote the complete graph, path and cycle, respectively, of order $n$.
Let $n_1$ and $n_2$ be positive integers. By $G(n_1,n_2)$ we denote the \emph{square grid} with $n_1$ rows and $n_2$ columns. The graph $G(n_1,n_2)$ can be also defined as the Cartesian product of two paths $P_{n_1}$ and $P_{n_2}$. The \emph{cylinder}, denoted by $C(n_1,n_2)$, is the Cartesian product of one cycle $C_{n_1}$ and one path $P_{n_2}$.

This article is organized as follows. Section 2 presents alternative characterizations of $(i,j)$-disjoint spanning trees.
Section 3 is about the computational complexity of the following decision problem: is it true that a graph $G$ contains two $(i,j)$-disjoint spanning trees (with input the graph $G$).
Section 4 deals with $k$-connectivity and the conditions of Dirac and Ore for $(i,j)$-disjoint spanning trees.
Section 5 is about the required number of edges and distribution of inner vertices in $(i,j)$-disjoint spanning trees.
Section 6 presents some $(i,j)$-disjoint spanning trees in square of graphs, $k$-connected interval graphs, complete graphs, and square grids and cylinders.

\section{Characterizations in terms of partitions and dominating sets}
We begin this section by proving the following proposition.
\begin{prop}\label{prop1}
Let $G$ be a connected graph of order at least $3$ and let $T_1, \ldots, T_k$ be $(i,j)$-disjoint spanning trees in $G$. For every integer $\ell$, $1\le \ell \le k$, every vertex $u\in V(G)$ satisfies the two following properties:
\begin{enumerate}
\item[i)] if $u\notin I(T_\ell)$, then $u$ has a neighbor in $I(T_\ell)$;
\item[ii)] if $G$ has diameter at least $3$, then $u$ has a neighbor in $I(T_\ell)$.
\end{enumerate} 
\end{prop}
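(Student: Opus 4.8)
The plan is to prove both parts by elementary structural reasoning about the single tree $T_\ell$. I note first that the conclusions concern only one tree at a time and use only that $T_\ell$ is a spanning tree of a graph on at least three vertices; the $(i,j)$-disjointness hypothesis itself plays no role. For part i), the key observation is that since $T_\ell$ is a spanning tree of a connected graph, every vertex has degree at least $1$ in $T_\ell$, so $u\notin I(T_\ell)$ means precisely that $u$ is a leaf of $T_\ell$ with a unique $T_\ell$-neighbor $w$. I would then argue that $w\in I(T_\ell)$: if $w$ were also a leaf, its only neighbor in $T_\ell$ would be $u$, so the edge $uw$ would form an entire connected component of $T_\ell$, contradicting that $T_\ell$ is connected and spans at least three vertices. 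As $w$ is a neighbor of $u$ in $T_\ell$, and hence in $G$, this yields part i).

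For part ii), I would first dispose of the case $u\notin I(T_\ell)$ directly by part i), so the only real content is the case $u\in I(T_\ell)$. The plan is to argue by contradiction: assume $u\in I(T_\ell)$ but that no neighbor of $u$ in $G$ lies in $I(T_\ell)$. Since the neighbors of $u$ in $T_\ell$ form a subset of its neighbors in $G$, every $T_\ell$-neighbor of $u$ is then a leaf of $T_\ell$ and so has $u$ as its unique $T_\ell$-neighbor. Because $u$ is inner it has degree at least $2$ in $T_\ell$, and as $T_\ell$ is connected and spanning, no edge of $T_\ell$ leaves the set $\{u\}\cup N_{T_\ell}(u)$; this forces that set to be all of $V(G)$, so $T_\ell$ is a \emph{star} centered at $u$. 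But in such a star any two vertices are joined through $u$ by a path of length at most $2$, whence $d_G(x,y)\le 2$ for all $x,y\in V(G)$, contradicting the assumption that $G$ has diameter at least $3$. Therefore $u$ has a neighbor in $I(T_\ell)$.

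The only point requiring care is the bookkeeping between adjacency in $G$ and adjacency in $T_\ell$: the contradiction hypothesis in part ii) is phrased in $G$, whereas the star argument runs most naturally inside $T_\ell$, and the reduction is legitimate precisely because the $T_\ell$-neighbors of $u$ are among its $G$-neighbors. Beyond this, both statements follow immediately from the leaf/star dichotomy, so I do not anticipate a substantial obstacle.
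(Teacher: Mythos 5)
Your proof is correct and follows essentially the same route as the paper: part i) reduces to the observation that a leaf whose unique tree-neighbor is also a leaf would force $T_\ell$ (and hence $G$) to be $P_2$, and part ii) derives that $T_\ell$ would be a star, contradicting the diameter hypothesis. Your write-up merely makes explicit the bookkeeping between $G$-adjacency and $T_\ell$-adjacency that the paper leaves implicit.
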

\begin{proof}
Suppose there exist an integer $\ell$, $1\le \ell \le k$, and a vertex $u$ which has no neighbor in $I(T_\ell)$.\newline
i) If $u\notin I(T_\ell)$, then $G=T_\ell=P_2$ which contradicts the hypothesis that $G$ has order at least $3$.
\newline
ii) Since property i) holds, we suppose that $u\in I(T_\ell)$. Remark that since $G$ has diameter at least $3$, a spanning tree of $G$ has also diameter at least $3$.
Moreover, if $u$ is only adjacent to leaf vertices then it implies that $T_\ell$ is a star which contradicts the fact that $T_\ell$ has diameter at least $3$.
\end{proof}
Let $V_1$ and $V_2$ be two subsets of vertices of a graph $G$. By $B(V_1,V_2)$ we denote the bipartite graph with vertex set $V_1\cup V_2$ and edge set $\{uv\in E(G) |\ u\in V_1,\ v\in V_2\}$.
In the two following subsections we give alternative characterizations of $(0,\ell)$-disjoint spanning trees and $\ell$-rooted connected dominating sets. 
These characterizations are expressed in terms of partition in sets of vertices fulfilling some properties.

\subsection{$(0,\ell)$-disjoint spanning trees}
In this subsection, we introduce a definition which is inspired by the definition of CIST-partition introduced by Araki \cite{AR2014}.
\begin{de}
An \emph{$\ell$-CIST-partition} of a graph $G$ into $k$ sets is a partition of $V(G)$ into $k$ sets of vertices $V_1,\ldots,V_k$ such that:
\begin{enumerate}
\item[i)] $G[V_i]$ is connected, for each integer $i$, $1\le i\le k$;
\item[ii)] $B(V_i,V_j)$ contains no isolated vertex, for every two integers $i$, $j$, $1\le i<j\le k$;
 \item[iii)] $\sum_{1\le i<j\le k} c_{i,j}\le \ell$, where $c_{i,j}$ is the number of connected component which are trees in $B(V_i,V_j)$, $1\le i<j\le k$. 
\end{enumerate}
\end{de}
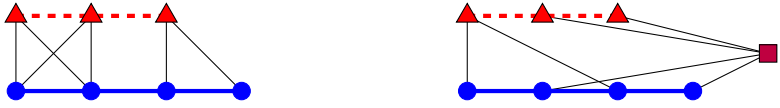
\begin{figure}[t]
\begin{center}
\begin{tikzpicture}
\draw (0,0) -- (0,1);
\draw (0,0) -- (1,1);
\draw (1,0) -- (0,1);
\draw (1,0) -- (1,1);
\draw (2,0) -- (2,1);
\draw (3,0) -- (2,1);
\draw[ultra thick,color=blue] (0,0) -- (3,0);
\draw[ultra thick, style=dashed,color=red] (0,1) -- (2,1);
\node at (0,0) [circle,draw=blue,fill=blue,scale=0.7]{};
\node at (0,1) [regular polygon, regular polygon sides=3,draw=black,fill=red,scale=0.5]{};
\node at (1,0) [circle,draw=blue,fill=blue,scale=0.7]{};
\node at (1,1) [regular polygon, regular polygon sides=3,draw=black,fill=red,scale=0.5]{};
\node at (2,0) [circle,draw=blue,fill=blue,scale=0.7]{};
\node at (2,1) [regular polygon, regular polygon sides=3,draw=black,fill=red,scale=0.5]{};
\node at (3,0) [circle,draw=blue,fill=blue,scale=0.7]{};

\draw (0+6,0) -- (0+6,1);
\draw (2+6,0) -- (0+6,1);
\draw (1+6,0) -- (4+6,0.5);
\draw (3+6,0) -- (4+6,0.5);
\draw (1+6,1) -- (4+6,0.5);
\draw (2+6,1) -- (4+6,0.5);
\draw[ultra thick,color=blue] (0+6,0) -- (3+6,0);
\draw[ultra thick, style=dashed,color=red] (0+6,1) -- (2+6,1);
\node at (0+6,0) [circle,draw=blue,fill=blue,scale=0.7]{};
\node at (0+6,1) [regular polygon, regular polygon sides=3,draw=black,fill=red,scale=0.5]{};
\node at (1+6,0) [circle,draw=blue,fill=blue,scale=0.7]{};
\node at (1+6,1) [regular polygon, regular polygon sides=3,draw=black,fill=red,scale=0.5]{};
\node at (2+6,0) [circle,draw=blue,fill=blue,scale=0.7]{};
\node at (2+6,1) [regular polygon, regular polygon sides=3,draw=black,fill=red,scale=0.5]{};
\node at (3+6,0) [circle,draw=blue,fill=blue,scale=0.7]{};
\node at (4+6,0.5) [regular polygon, regular polygon sides=4,draw=black,fill=purple,scale=0.7]{};
\end{tikzpicture}
\end{center}
\caption{An $1$-CIST partition (on the left) and an $1$-rooted partition (on the right) of two graphs (circles: $V_1$; triangles: $V_2$; square: $A$).}
\label{despart}
\end{figure}

Figure \ref{despart} illustrates an $1$-CIST partition on a specific graph.
In a similar way than Araki \cite{AR2014}, we prove that the notions of $\ell$-CIST partition and $(0,\ell)$-disjoint spanning trees are equivalent.
\begin{theo}\label{theopart1}
Let $G$ be a graph. There exist $k$ $(0,\ell)$-disjoint spanning trees $T_1, \ldots, T_k$ in $G$ if and only if $G$ has an $\ell$-CIST-partition into $k$ sets.
\end{theo}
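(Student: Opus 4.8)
The plan is to prove both directions of the equivalence, translating between a family of $(0,\ell)$-disjoint spanning trees and an $\ell$-CIST-partition. The natural correspondence is $V_i = I(T_i)$, the inner vertices of the $i$-th tree. The $(0,\cdot)$ condition forces $I(T_1,\ldots,T_k)=\emptyset$, so the sets $I(T_1),\ldots,I(T_k)$ are pairwise disjoint, and since every vertex is an inner vertex of at most one tree (being a leaf in all others), these sets partition $V(G)$; this gives the partition in one direction. Conversely, from a partition I would build each tree $T_i$ by taking a spanning tree of the connected induced subgraph $G[V_i]$ (using condition i)) and then attaching every vertex outside $V_i$ as a leaf to some neighbor inside $V_i$, which is possible precisely because of Proposition \ref{prop1}.i) / the domination implicit in condition ii).

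For the forward direction (trees $\Rightarrow$ partition), after setting $V_i = I(T_i)$, I would verify the three partition conditions. Condition i) holds because $T_i$ restricted to its inner vertices is connected (the inner vertices of a tree induce a connected subtree), so $G[V_i]$ is connected. For condition ii), I would argue that $B(V_i,V_j)$ has no isolated vertex: a vertex $u\in V_i$ is a leaf in $T_j$, hence has a neighbor that is an inner vertex of $T_j$, i.e.\ a neighbor in $V_j$; symmetrically for $V_j$. The crux is condition iii): I need to relate the shared-edge count $|E(T_1,\ldots,T_k)|\le\ell$ to $\sum_{i<j} c_{i,j}$, the number of tree-components of the bipartite graphs $B(V_i,V_j)$. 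By Proposition \ref{megaprop}.i), each shared edge is a leaf edge in exactly two trees $T_i,T_j$ and joins $V_i$ to $V_j$; the key structural claim is that within each $B(V_i,V_j)$, a connected component forces a shared edge between $T_i$ and $T_j$ exactly when that component contains a cycle is impossible — rather, each tree-component of $B(V_i,V_j)$ contributes exactly one shared edge (an edge used by both trees to connect the respective leaves), while components containing a cycle let us route without sharing. So counting shared edges equals counting tree-components, giving $\sum c_{i,j}=|E(T_1,\ldots,T_k)|\le\ell$.

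For the reverse direction (partition $\Rightarrow$ trees), I would, for each pair $(i,j)$, examine the bipartite graph $B(V_i,V_j)$ and its connected components. The vertices in $V_i$ must be attached as leaves to $V_j$ in tree $T_j$ and symmetrically; the edges chosen for this attachment are leaf edges. Within a component that contains a cycle, I can choose the attaching edges for $T_i$ and for $T_j$ to be disjoint, avoiding a shared edge. Within a tree-component of $B(V_i,V_j)$, any choice of attaching edges is forced to reuse at least one edge in both trees, creating exactly one shared edge. Thus the total number of shared edges can be made to equal $\sum_{i<j}c_{i,j}\le\ell$, and combining the spanning subtrees on $G[V_i]$ with these leaf attachments yields $k$ spanning trees that are pairwise internally vertex-disjoint (so $|I(T_1,\ldots,T_k)|=0$) with at most $\ell$ shared edges.

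The main obstacle I anticipate is the precise bookkeeping in condition iii): establishing that a tree-component of $B(V_i,V_j)$ corresponds to \emph{exactly one} unavoidable shared edge between $T_i$ and $T_j$, while a component with a cycle can always be handled with zero shared edges. This requires a careful parity/counting argument on a bipartite tree — each side's vertices must be covered as leaves by edges of the component, and in a tree-component the number of edges is one less than the number of vertices, forcing an edge to serve double duty — whereas the presence of an extra edge (a cycle) provides exactly the slack needed to separate the two trees' leaf attachments. Getting both the lower bound (every tree-component forces a shared edge, for the $\Rightarrow$ direction) and the matching construction (a cycle-component needs no shared edge, for the $\Leftarrow$ direction) to align is where the real work lies; this mirrors Araki's argument for the $\ell=0$ (CIST) case, which I would follow and extend to track the shared-edge surplus.
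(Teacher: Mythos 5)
Your proposal is correct and follows essentially the same route as the paper: the correspondence $V_i=I(T_i)$, the edge-count argument that a tree-component of $B(V_i,V_j)$ with $|V(D_{i,j})|-1$ edges cannot supply the $|V(D_{i,j})|$ distinct leaf-attachment edges needed without sharing one, and the construction that spends exactly one shared edge per tree-component while using the extra cycle edge to avoid sharing in non-tree components. The only details to tidy are that vertices which are inner in no tree must be assigned to some $V_i$ arbitrarily (the sets $I(T_i)$ need not cover $V(G)$), and that for the forward direction you only need each tree-component to contain \emph{at least} one shared edge (giving $\sum c_{i,j}\le |E(T_1,\ldots,T_k)|\le\ell$), not the exact equality you assert mid-argument.
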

\begin{proof}
Suppose $G$ has an $\ell$-CIST-partition into $k$ sets $V_1$,\ldots,$V_k$. We are going to construct $(0,\ell)$-disjoint spanning trees $T_1, \ldots, T_k$. We begin by setting $I(T_i)= V_i$ for each integer $i$, $1\le i\le k$. For each integer $i$, $1\le i\le k$, we suppose that $E(T_i)$ is empty and we progressively add edges to $T_i$ in order to obtain spanning trees of $G$ at the end of the proof. Since $G[V_i]$ is connected for each integer $i$, $1\le i \le k$, it is possible to add edges to $T_i$ in order to have a spanning tree with inner vertices from $V_i$, for each integer $i$. 

Let $i$ and $j$ be two integers, $1\le i<j\le k$, and let $D_{i,j}$ be a connected component of $B(V_i, V_j)$. We add edges in order to build a spanning tree restricted to $V_i \cup V(D_{i,j})$ and another spanning tree restricted to $V_j \cup V(D_{i,j})$ by considering two cases. Let $u$ be a vertex of $D_{i,j}\cap V_i$. First, if $D_{i,j}$ is a tree, then we add an edge $e$ of $D_{i,j}$ incident with $u$ to both $T_i$ and $T_j$. Thus, the edge $e$ will be common to $T_i$ and $T_j$. Let $D^d_{i,j}(u)=\{ v\in V(D_{i,j}) |\ d_{D_{i,j}}(u,v)=d\}$. 
We add to $T_i$ the edges of the set $\{ vv'\in E(D_{i,j})|\ v\in D^d_{i,j}(u),\ v'\in D^{d+1}_{i,j}(u),\ d\text{ is even} \}$ and to $T_j$ the edges of the set  $\{ vv'\in E(D_{i,j})|\ v\in D^d_{i,j}(u),\ v'\in D^{d+1}_{i,j}(u),\ d\text{ is odd} \}$. 
Second, if $D_{i,j}$ is not a tree, then we suppose that $u$ is in a cycle of $D_{i,j}$. Let $e$ be an edge of this cycle incident with $u$ and let $T_{i,j}$ be a spanning tree of $D_{i,j}-e$. We define $B^d_{i,j}(u)$ as follows: $\{ v\in V(D_{i,j})  |\ d_{T_{i,j}}(u,v)=d\}$. 
We add to $T_i$ the edges of the set  $\{ vv'\in E(T_{i,j})|\ v\in B^d_{i,j}(u),\ v'\in B^{d+1}_{i,j}(u),\ d\text{ is even} \}$ and to $T_j$ the edges of the set  $\{ vv'\in E(T_{i,j})|\ v\in B^d_{i,j}(u),\ v'\in B^{d+1}_{i,j}(u),\ d\text{ is odd} \}\cup \{e\}$. We repeat this process for every connected component of $B(V_i, V_j)$ and every two integers $i$ and $j$, $1\le i< j\le k$. Since there is only one common edge between $T_i$ and $T_j$ for each connected component that is a tree and since $\sum_{1\le i<j\le k} c_{i,j}\le \ell$, the set $E(T_1,\ldots, T_k)$ contains at most $\ell$ edges. Therefore, we obtain, by Property ii), $k$ $(0,\ell)$-disjoint spanning trees.

Let us prove the converse of the previous implication. Suppose there exist $k$ $(0,\ell)$-disjoint spanning trees $T_1, \ldots, T_k$ in $G$. The set $I(T_i)$, $1\le i\le k$, induces a connected subgraph in $G$. We begin by setting $V_i=I(T_i)$, for each integer $i$, $1\le i\le k$. If some vertices are inner vertices in no trees, we can add them to any set among $V_1, \ldots, V_k$. Thus, Property i) follows. Let $i$ and $j$ be two integers, $1\le i<j \le k$. Suppose there exists one isolated vertex $u$ in $B(V_i,V_j)$. Without loss of generality, suppose $u\in V_i$. By Proposition \ref{prop1}.i), we obtain a contradiction since $u\notin I(T_j)$ and $u$ has no neighbor in $I(T_j)$. Thus, Property ii) follows.
Now suppose $\sum_{1\le i<j\le k} c_{i,j}>\ell$. Let $D_{i,j}$ be a connected component which is a tree in $B(V_i,V_j)$ for some integers $i$ and $j$ and suppose that $D_{i,j}$ contains no edge from $E(T_1,\ldots, T_k)$. Since $D_{i,j}$ has $|V(D_{i,j})|-1$ edges, it is impossible that every vertex of $V(D_{i,j})\cap V_i$ is adjacent to a vertex of $V(D_{i,j})\cap V_j$ in $T_j$ and that every vertex of $V(D_{i,j})\cap V_j$ is adjacent to a vertex of $V(D_{i,j})\cap V_i$ in $T_i$, since it would require $|V(D_{i,j})|$ edges. Thus, for every two integers $i$ and $j$ and every connected component $D_{i,j}$ of $B(V_i, V_j)$, if $D_{i,j}$ is a tree then $V(D_{i,j})\cap E(T_1,\ldots, T_k)\neq \emptyset$ and we obtain a contradiction since $\sum_{1\le i<j\le k} c_{i,j}>\ell$ implies $|E(T_1,\ldots, T_k)|>\ell$. Consequently, Property iii) follows.
\end{proof}
\subsection{$(\ell,*)$-disjoint spanning trees}

For a graph $G$ and a subset of vertices $A\subseteq V(G)$, let $N(A)=\{u\in V(G)\setminus A|\ uv\in E(G), \ v\in A\}$.
In a similar way than Zelinka \cite{ZE1986}, we prove that the notion of $\ell$-rooted connected dominating sets is equivalent to a notion of partition.
\begin{de}
An \emph{$\ell$-rooted partition} of $G$ into $k+1$ sets is a partition of $V(G)$ into $k+1$ sets of vertices $V_1,\ldots,V_k,A$ such that:
\begin{enumerate}
\item[i)] $|A|\le \ell$;
\item[ii)] $G[V_i\cup A]$ is connected, for each integer $i$, $1\le i\le k$;
\item[iii)] $B(V_i,V_j)- N(A)$ contains no isolated vertex, for every two integers $i$ and $j$, $1\le i<j\le k$.
\end{enumerate}
\end{de}
Figure \ref{despart} illustrates an $1$-rooted partition on a specific graph.
\begin{theo}
Let $G$ be a graph. There exist $k$ $\ell$-rooted connected dominating sets $D_1, \ldots, D_k$ in $G$ if and only if $G$ has an $\ell$-rooted partition into $k+1$ sets.
\end{theo}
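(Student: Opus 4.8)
The plan is to prove both directions of the equivalence by mirroring the structure of Theorem \ref{theopart1}, with the set $A$ playing the role of the shared ``root'' vertices. The key conceptual point is that in the definition of $\ell$-rooted connected dominating sets, the sets $D_1,\ldots,D_k$ overlap only inside a small set $A$ of size at most $\ell$, and outside $A$ they behave like the disjoint sets $V_1,\ldots,V_k$ of a partition; so I would set $D_i = V_i \cup A$, where $V_1,\ldots,V_k,A$ is the candidate partition.

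For the forward direction, I would assume $G$ has an $\ell$-rooted partition $V_1,\ldots,V_k,A$ and define $D_i = V_i \cup A$ for each $i$. First I would check that each $D_i$ is a connected dominating set: connectivity of $G[D_i] = G[V_i\cup A]$ is exactly Property ii) of the partition, and domination follows since every vertex outside $D_i$ lies either in some $V_j$ or is handled by the neighbor condition. The domination claim is where Property iii) does the work: a vertex $u \in V_j$ (for $j \neq i$) that is not dominated by $D_i$ would be isolated in $B(V_i,V_j) - N(A)$, contradicting iii) (after noting that if $u$ had a neighbor in $A \subseteq D_i$ it would already be dominated, so we may restrict to $B(V_i,V_j)-N(A)$). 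Then $\cup_{i<j} D_i \cap D_j = A$ because the $V_i$ are pairwise disjoint, so $|A| \le \ell$ gives exactly the $\ell$-rooted condition.

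For the converse, I would start from $k$ $\ell$-rooted connected dominating sets $D_1,\ldots,D_k$ with $A = \cup_{i<j} D_i \cap D_j$ satisfying $|A|\le \ell$, and set $V_i = D_i \setminus A$; any vertex in no $D_i$ can be assigned to any one $V_i$ (or to $A$). This makes $V_1,\ldots,V_k,A$ a partition, and Property i) is immediate. For Property ii), note $V_i \cup A \supseteq D_i$ and $G[D_i]$ is connected by hypothesis, and the extra vertices of $A$ attach to $D_i$ through domination, so $G[V_i\cup A]$ is connected. For Property iii), I would argue that an isolated vertex $u$ of $B(V_i,V_j)-N(A)$ would be a vertex with no neighbor in $V_j$ and no neighbor in $A$ (hence none in $D_j = V_j \cup (D_j \cap A)$), contradicting that $D_j$ is a dominating set, provided $u \notin D_j$; the placement of unassigned vertices must be chosen so this case is covered.

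The main obstacle I anticipate is the careful bookkeeping around the set $A$ in the domination/connectivity arguments — specifically, ensuring that ``$N(A)$'' is subtracted correctly so that Property iii) captures exactly the vertices whose only possible dominator in $D_j$ lies inside $A$, and dually that adding $A$ back into each $D_i$ does not break disjointness of the overlaps beyond $A$. One must also treat with care the vertices belonging to no $D_i$ (forward direction needs domination of such vertices; converse needs a consistent assignment). These are the same subtleties as in the proof of Theorem \ref{theopart1}, but the presence of the shared core $A$, which simultaneously belongs to every $D_i$ and must be excluded from the bipartite isolation test via $N(A)$, is the delicate part and should be verified explicitly rather than by direct analogy.
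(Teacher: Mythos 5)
Your proposal follows essentially the same route as the paper: both directions use the correspondence $D_i = V_i \cup A$ (forward) and $A = \bigcup_{i<j} D_i \cap D_j$, $V_i = D_i \setminus A$ (converse), with connectivity coming from Property ii) and the domination/isolation equivalence coming from Property iii) after discarding the vertices of $N(A)$ already dominated by $A$. The extra bookkeeping you flag (vertices in no $D_i$, and the exact reading of isolation in $B(V_i,V_j)-N(A)$) is handled no more explicitly in the paper's own proof, so your version is if anything slightly more careful.
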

\begin{proof}
Suppose $G$ has an $\ell$-rooted partition into $k+1$ sets $V_1, \ldots,V_k, A$. We begin by setting $D_i= V_i\cup A$ for each integer $i$, $1\le i\le k$. Since $V_1, \ldots ,V_k,A$ is a partition, we have $|\cup_{1\le i< j\le k} D_i \cap D_j|\le \ell$. Moreover, by Property ii), the subgraphs induced by the sets $D_1$,\ldots, $D_k$ are all connected. It remains to prove that $D_i$ is a dominating set, for each integer $i$, $1\le i \le k$. Since the vertices of $N(A)$ are already dominated by a vertex of $A\subseteq D_i$, for each integer $i$, Property iii) implies that every vertex of $V(G)\setminus (V_i\cup N(A))$ has a neighbor in $V_i$, for each integer $i$.

Suppose there exist $k$ $\ell$-rooted connected dominating sets $D_1,\ldots, D_k$ in $G$. We begin by setting $A=\cup_{1\le i< j\le k} D_i \cap D_j$. Afterward, we set $V_i=D_i \setminus A$, for each integer $i$, $1\le i\le k$. By definition, Property i) and Property ii) are satisfied by $V_1, \ldots ,V_k,A$. It remains to prove Property iii). By contradiction, suppose that a vertex $u\in V_i$ has no neighbor in $V_j\cup A$, for some integers $i$ and $j$. This fact implies that $D_j$ is not a dominating set and Property iii) follows.
\end{proof}
In the following definition we introduce the construction of a graph denoted by $G(k,A)$.
\begin{de}
Let $G$ be a graph, $k$ be an integer and $A=\{u_1,\ldots,u_\ell\}\subseteq V(G)$ be a subset of vertices. We denote by $G(k,A)$ the  graph obtained by replacing one by one each vertex $u_i$, for $1\le i\le \ell$, by a complete graph of order $k$, and by adding edges between each vertex of this clique and every vertex of $N(u_i)$.
\end{de}
We finish by proving that determining if a graph $G$ contains $k$ $\ell$-rooted connected dominating sets is equivalent to determine if the graph $G(k,A)$ has $\ell$ disjoint connected dominating sets, for some subset of vertices $A\subseteq V(G)$. In contrast with the two previous propositions, this alternative characterization is expressed in terms of disjoint dominating sets.
\begin{prop}
There exist k $\ell$-rooted connected dominating sets $D_1,\ldots, D_k$ in a graph $G$ if and only if there exist a subset of vertices $A\subseteq V(G)$ such that $|A|\le \ell$ and $d_c(G(k,A))\ge k$.
\end{prop}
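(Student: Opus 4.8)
The plan is to prove both implications by transferring information between the shared vertices of the $\ell$-rooted dominating sets and the $k$-vertex cliques that replace those vertices in $G(k,A)$. Throughout, for $u_j\in A$ I write $u_j^1,\ldots,u_j^k$ for the clique replacing $u_j$; by the definition of $G(k,A)$ each copy $u_j^m$ is adjacent to every other copy of $u_j$, to every copy of each $u_{j'}\in A\cap N_G(u_j)$, and to every vertex of $N_G(u_j)\setminus A$. It is convenient to introduce the quotient map $\pi\colon V(G(k,A))\to V(G)$ sending each copy $u_j^m$ to $u_j$ and fixing every vertex of $V(G)\setminus A$. From the adjacencies just listed one checks that $\pi$ is a graph homomorphism (adjacent vertices are sent to adjacent or equal vertices), so the image under $\pi$ of any connected subgraph of $G(k,A)$ induces a connected subgraph of $G$; this single observation handles all of the connectivity arguments below.

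For the forward implication, suppose $D_1,\ldots,D_k$ are $\ell$-rooted connected dominating sets and set $A=\cup_{i<j}D_i\cap D_j$, so $|A|\le\ell$. I would assign the $i$-th copy of each shared vertex to the $i$-th set, defining
\[
D_i' = (D_i\setminus A)\cup\{\,u_j^i : u_j\in D_i\cap A\,\}.
\]
Since every vertex outside $A$ lies in at most one $D_i$, and distinct indices use distinct copies inside each clique, the sets $D_1',\ldots,D_k'$ are pairwise disjoint. The correspondence $v\mapsto v$, $u_j\mapsto u_j^i$ is an isomorphism from $G[D_i]$ onto $G(k,A)[D_i']$, so each $D_i'$ is connected. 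For domination I would examine three kinds of vertices of $G(k,A)$: a vertex $v\notin A$ is dominated because its dominator in $D_i$ is either retained in $D_i'$ or is a shared vertex, all of whose copies (in particular $u_j^i$) neighbour $v$; a copy $u_j^m$ with $u_j\in D_i$ is dominated by $u_j^i$ within the clique; and a copy $u_j^m$ with $u_j\notin D_i$ is dominated by lifting, through $\pi$, the $G$-dominator of $u_j$. Hence $D_1',\ldots,D_k'$ witness $d_c(G(k,A))\ge k$.

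For the converse, suppose $A\subseteq V(G)$ satisfies $|A|\le\ell$ and $D_1',\ldots,D_k'$ are $k$ disjoint connected dominating sets of $G(k,A)$. I would push them down by setting $D_i=\pi(D_i')$, that is, $D_i$ collects the non-$A$ vertices of $D_i'$ together with every $u_j\in A$ having at least one copy in $D_i'$. Connectivity of each $G[D_i]$ is immediate from the homomorphism property of $\pi$. The rooted condition follows because a vertex of $D_i\cap D_j$ with $i\ne j$ lying outside $A$ would force a vertex of $D_i'\cap D_j'$, contradicting disjointness; thus $\cup_{i<j}D_i\cap D_j\subseteq A$ has size at most $\ell$. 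Domination of an arbitrary $v\in V(G)$ is obtained by treating one copy $v^1$ (or $v$ itself when $v\notin A$) as a vertex dominated by $D_i'$ in $G(k,A)$ and reading off, via $\pi$, a dominator of $v$ in $D_i$.

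The routine parts are the edge bookkeeping in the two case analyses. The step needing the most care, and the main obstacle, is the domination check for the blown-up vertices: one must confirm that a neighbour of a clique vertex lying in $D_i'$ always projects under $\pi$ to a genuine $G$-neighbour belonging to $D_i$, which relies crucially on the facts that all copies of two adjacent vertices of $A$ are joined in $G(k,A)$ and that each replacement clique is complete. Once the adjacency rules of $G(k,A)$ are pinned down, both directions reduce to tracking this correspondence between copies and shared vertices.
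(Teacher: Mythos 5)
Your proof is correct and follows essentially the same lift/project correspondence as the paper's, differing only in cosmetic choices (you assign a clique copy of $u_j$ only to those $D_i'$ with $u_j\in D_i$ and take the bare projection $\pi(D_i')$ on the way back, whereas the paper places one copy of every vertex of $A$ in every $D_j'$ and adds all of $A$ to every $D_j$); both variants work. Your explicit use of the quotient homomorphism $\pi$ and the verification of domination for the blown-up vertices actually supply details the paper leaves implicit.
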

\begin{proof}
Let $G$ be a graph.
Suppose there exist $k$ $\ell$-rooted connected dominating sets $D_1,\ldots, D_k$ in $G$. We begin by setting $A=\cup_{1\le i< j\le k} D_i \cap D_j$. 
Let $K^i_k$ denote the clique from $G(k,A)$ which replaces the vertex $u_i$ in $G(k,A)$, for $1\le i\le \ell$.
We can construct $k$ disjoint connected dominating sets $D'_1, \ldots, D'_k$ in $G(k,A)$ as follows: for each integer $j$, $1\le j\le k$, $D'_j$ contains the vertices from $D_j\setminus A$ and one different vertex by clique $K^i_k$, for each integer $i$, $1\le i\le \ell$.

Suppose there exist a subset of vertices $A\subseteq V(G)$ such that $|A|\le \ell$ and $d_c(G(k,A))\ge k$. Let $D'_1, \ldots, D'_k$ be disjoint connected dominating sets in $G(k,A)$. We can construct $k$  $\ell$-rooted connected dominating sets $D_1, \ldots, D_k$ in $G$ as follows: for each integer $j$, $1\le j\le k$, $D_j$ contains the vertices from $D'_j \setminus (K^1_k\cup\ldots \cup K_k^\ell)\cup \{u_1, \ldots, u_\ell\}$.

\end{proof}
\section{An NP-complete problem for every integers $i$ and $j$}

We define the following decision problem:
\begin{center}
\parbox{10cm}{
\setlength{\parskip}{.05cm}
\textbf{$k$-$(i,j)$-DSP}

\textbf{Instance} : A graph $G$.

\textbf{Question}: Does there exist $k$ $(i,j)$-disjoint spanning trees in $G$ ?}

\end{center}
\begin{theo}\label{comp1}
Let $i$ and $j$ be non negative integers.
The problem $2$-$(i,j)$-DSP is an NP-complete problem for every pair of integer $(i,j)$. 
\end{theo}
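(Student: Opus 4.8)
The plan is to prove the two halves of NP-completeness separately. Membership in NP is immediate: given two candidate spanning trees $T_1,T_2$ of $G$, one checks in polynomial time that each is a spanning tree, counts the vertices of $I(T_1)\cap I(T_2)$ and the edges of $E(T_1)\cap E(T_2)$, and verifies that these counts are at most $i$ and $j$ respectively. For NP-hardness I would reduce from $2$-$(0,0)$-DSP, i.e.\ from the existence of two completely independent spanning trees, which is NP-hard by \cite{HA2002}. The idea is to transform an instance $G$ into a graph $G'$ by attaching small gadgets that are forced to ``consume'' the whole tolerance $(i,j)$ \emph{outside} of $G$, so that $G'$ admits two $(i,j)$-disjoint spanning trees if and only if $G$ admits two completely independent spanning trees.

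For the core case $i\ge 1$ I would use two kinds of gadgets, each glued to $G$ through a fresh vertex joined to $G$ by \emph{two} edges (never by a single edge, so that no vertex of $G$ is forced to become inner in both trees: this two-edge attachment is what prevents the forced overlap from leaking into $G$). A \emph{broom}: a fresh hub $h$ carrying $j$ pendant leaves; since each pendant edge is a bridge it lies in every spanning tree, so the $j$ pendant edges are forced into $E(T_1,T_2)$ and $h$ is forced into $I(T_1,T_2)$. This single gadget produces exactly $j$ shared edges and one shared inner vertex. To force the remaining $i-1$ shared inner vertices without creating any further shared edge I would use $i-1$ copies of a \emph{$K_4$-cut gadget}: a fresh cut vertex $c$ separating a copy of $K_4$ from $G$, so that $c$ has degree at least two in every spanning tree and is forced into $I(T_1,T_2)$, while the fact that $K_4$ itself has two completely independent spanning trees lets the rest of the gadget be realised with no shared edge and no other shared inner vertex. (When $j=0$ one drops the broom and uses $i$ such cut gadgets.)

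The equivalence would then be argued as follows. If $G$ has two completely independent spanning trees, I extend them across the gadgets using the optimal gadget trees above; the total overlap is exactly $i$ shared inner vertices and $j$ shared edges, all inside the gadgets, so $G'$ has two $(i,j)$-disjoint spanning trees. Conversely, any two $(i,j)$-disjoint spanning trees of $G'$ already spend at least $j$ shared edges and at least $i$ shared inner vertices inside the gadgets, and these overlaps are forced; hence by the budget they spend exactly that, leaving $0$ shared edges and $0$ shared inner vertices on $V(G)$. Since each gadget hangs off $G$ at a single attachment in each tree, the restriction to $G$ is a pair of spanning trees of $G$ that is edge-disjoint and internally vertex-disjoint, i.e.\ two completely independent spanning trees. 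As each gadget has constant size and there are $O(i+j)$ of them, $G'$ is built in polynomial time.

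The delicate point, and the step I expect to be the main obstacle, is the interaction between the two budgets together with the boundary case $i=0$. Structurally, an edge is forced into every spanning tree only if it is a bridge, and a bridge always makes one of its endpoints inner in both trees; more generally, forcing a shared edge seems to require a cut structure that simultaneously forces a shared inner vertex. Consequently the budget-consuming gadget above cannot eat the edge tolerance while keeping $i=0$, so the case $i=0,\ j\ge 1$ needs a separate reduction (for instance reducing from the existence of two disjoint connected dominating sets, the $(0,*)$ case, and forcing the required edge overlaps only through the tree-components of the associated bipartite graphs, in the spirit of Theorem \ref{theopart1}); the case $(0,0)$ is precisely the known hardness of completely independent spanning trees. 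Checking that the forced overlaps are \emph{tight}, i.e.\ that the gadgets force no more than $i$ shared inner vertices and $j$ shared edges so that a genuine CIST of $G$ can still be completed, is where most of the bookkeeping lies.
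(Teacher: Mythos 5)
Your NP-membership argument and your overall strategy (reduce from completely independent spanning trees and burn the whole $(i,j)$ budget inside forced gadgets hanging off $G$) match the spirit of the paper's proof, but two steps fail and one case of the theorem is left open. First, the forward direction of your reduction is not sound as stated: each gadget hub is joined to $G$ by two edges $hu_1,hu_2$ with $u_1,u_2$ fixed at construction time, and if in every pair of completely independent spanning trees of $G$ the vertices $u_1$ and $u_2$ are inner in the \emph{same} tree, then whichever attachment edge you give to the other tree turns its endpoint into an inner vertex of both trees, so the overlap leaks into $G$ and exceeds the budget you claimed was already exhausted. You cannot exclude this for an arbitrary instance $G$. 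The paper avoids exactly this by reducing from the \emph{rooted} problem $2$-$(u,v)$-CIST (whose NP-completeness it extracts from Hasunuma's reduction) and attaching its gadget chain at $u$ and $v$, which are guaranteed to be inner in different trees, so each tree absorbs the gadget through a vertex that is already inner for it.

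Second --- and this is the gap you yourself flag --- your only edge-forcing mechanism is a bridge, and a bridge in a graph on at least $3$ vertices always forces a shared inner vertex as well, so your gadgets cannot realize the case $i=0$, $j\ge 1$, which the theorem nevertheless asserts; your fallback (reducing from $d_c(G)\ge 2$) relies on a hardness result that is neither proved nor cited and on an unspecified construction. The paper's $H'$ block is precisely the missing idea: the terminal vertex $x$ of the $H^{+}$ block forces, via Proposition \ref{prop1}, a path from $x$ to $u$ in $T_1$ and from $x$ to $v$ in $T_2$ through the whole chain, and this global routing constraint makes the edge $yy'$ of each $H'$ block shared while $y$ and $y'$ end up inner in \emph{different} trees. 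That decouples the edge budget from the vertex budget and handles all pairs $(i,j)$ uniformly. A further point any multi-gadget variant must address: if one tree uses both attachment edges of a gadget, its restriction to $V(G)$ is a two-component forest rather than a spanning tree, so the converse direction also needs the routing argument, not just the overlap count.
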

\begin{proof}
Hasunuma \cite{HA2002} has proved that the following problem is NP-complete:
\begin{center}
\parbox{10cm}{
\setlength{\parskip}{.05cm}
\textbf{$2$-$(u,v)$-CIST}

\textbf{Instance} : A graph $G$ and two vertices $u$ and $v$ of $V(G)$.

\textbf{Question}: Does there exist two completely independent spanning trees $T_1$ and $T_2$ in $G$ with $u\in I(T_1)$ and $v\in I(T_2)$ ?}

\end{center}
Initially, the NP-complete problem considered by Hasunuma \cite{HA2002} consists in determining if there exist two completely independent spanning trees in a graph $G$.
However, by analyzing Hasunuma's reduction we can also obtain that the problem $2$-$(u,v)$-CIST is NP-complete by using the same reduction (it suffices to consider that $u$ is $v_B$ and that $v$ is $v_R$ in Hasunuma's reduction). Also, it is trivial to prove that the problem $2$-$(i,j)$-DST is in NP since the description of two spanning trees in a graph $G$ (when $dst_{i,j}(G)\ge2$) ensures the existence of these two $(i,j)$-disjoint spanning trees.
We use a reduction from $2$-$(u,v)$-CIST.

We introduce the three following operations that will be useful to describe our reduction:
\begin{enumerate}
\item[i)] $H$-add is an operation on a graph with two prescribed vertices $w_1$ and $w_2$ that consists in adding the graph $H$ from Figure \ref{reduc} and identifying $w_1$ with $p_1$ and $w_2$ with $p_2$;
\item[ii)] $H'$-add is an operation on a graph with two prescribed vertices $w_1$ and $w_2$ that consists in adding the graph $H'$ from Figure \ref{reduc} and identifying $w_1$ with $p_1$ and $w_2$ with $p_2$;
\item[ii)] $H^{+}$-add is an operation on a graph with two prescribed vertices $w_1$ and $w_2$ that consists in adding the graph $H^{+}$ from Figure \ref{reduc} and identifying $w_1$ with $p_1$ and $w_2$ with $p_2$;
\end{enumerate}
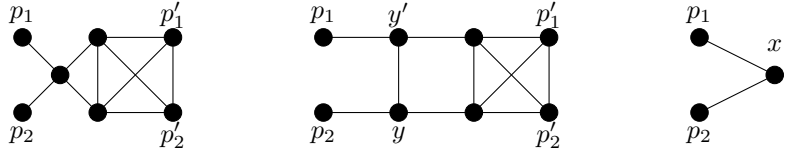
\begin{figure}[t]
\begin{center}
\begin{tikzpicture}
\draw (0,0) -- (0.5,0.5);
\draw (0,1) -- (0.5,0.5);
\draw (1,0) -- (0.5,0.5);
\draw (1,1) -- (0.5,0.5);
\draw (1,0) -- (2,0);
\draw (1,0) -- (1,1);
\draw (2,0) -- (2,1);
\draw (1,1) -- (2,1);
\draw (1,0) -- (2,1);
\draw (1,1) -- (2,0);

\draw (4,0) -- (7,0);
\draw (4,1) -- (7,1);
\draw (5,0) -- (5,1);
\draw (6,0) -- (7,1);
\draw (6,1) -- (7,0);
\draw (6,0) -- (6,1);
\draw (7,0) -- (7,1);

\draw (9,0) -- (10,0.5);
\draw (9,1) -- (10,0.5);

\node at (0,0) [circle,draw=black,fill=black,scale=0.7]{};
\node at (0,1) [circle,draw=black,fill=black,scale=0.7]{};
\node at (0.5,0.5) [circle,draw=black,fill=black,scale=0.7]{};
\node at (1,0) [circle,draw=black,fill=black,scale=0.7]{};
\node at (1,1) [circle,draw=black,fill=black,scale=0.7]{};
\node at (2,0) [circle,draw=black,fill=black,scale=0.7]{};
\node at (2,1) [circle,draw=black,fill=black,scale=0.7]{};
\node at (0,-0.3) {$p_2$};
\node at (0,1.3) {$p_1$};
\node at (2,-0.3) {$p'_2$};
\node at (2,1.3) {$p'_1$};

\node at (4,0) [circle,draw=black,fill=black,scale=0.7]{};
\node at (4,1) [circle,draw=black,fill=black,scale=0.7]{};
\node at (5,0) [circle,draw=black,fill=black,scale=0.7]{};
\node at (5,1) [circle,draw=black,fill=black,scale=0.7]{};
\node at (6,0) [circle,draw=black,fill=black,scale=0.7]{};
\node at (6,1) [circle,draw=black,fill=black,scale=0.7]{};
\node at (7,0) [circle,draw=black,fill=black,scale=0.7]{};
\node at (7,1) [circle,draw=black,fill=black,scale=0.7]{};
\node at (4,-0.3) {$p_2$};
\node at (4,1.3) {$p_1$};
\node at (5,-0.3) {$y$};
\node at (5,1.3) {$y'$};
\node at (7,-0.3) {$p'_2$};
\node at (7,1.3) {$p'_1$};

\node at (9,0) [circle,draw=black,fill=black,scale=0.7]{};
\node at (9,1) [circle,draw=black,fill=black,scale=0.7]{};
\node at (10,0.5) [circle,draw=black,fill=black,scale=0.7]{};
\node at (9,-0.3) {$p_2$};
\node at (9,1.3) {$p_1$};
\node at (10,0.9) {$x$};

\end{tikzpicture}
\end{center}
\caption{The graph $H$ (on the left), the graph $H'$ (on the middle) and the graph $H^{+}$ (on the right).}
\label{reduc}
\end{figure}

\begin{figure}[t]
\begin{center}
\begin{tikzpicture}
\draw[ultra thick, style=dashed,color=red]  (0,0) -- (0.5,0.5);
\draw[ultra thick,color=blue] (0,1) -- (0.5,0.5);
\draw[ultra thick, style=dashed,color=red]  (1,0) -- (0.5,0.5);
\draw[ultra thick,color=blue] (1,1) -- (0.5,0.5);
\draw[ultra thick, style=dashed,color=red] (1,0) -- (2,0);
\draw[ultra thick, style=dashed,color=red]  (1,0) -- (1,1);
\draw[ultra thick, style=dashed,color=red]  (2,0) -- (2,1);
\draw[ultra thick,color=blue] (1,1) -- (2,1);
\draw[ultra thick,color=blue] (1,0) -- (2,1);
\draw[ultra thick,color=blue] (1,1) -- (2,0);

\draw[ultra thick, style=dashed,color=red] (4,0) -- (7,0);
\draw[ultra thick,color=blue] (4,1) -- (7,1);
\draw[ultra thick,color=blue] (5,0) .. controls (4.9,0.5) .. (5,1);
\draw[ultra thick, style=dashed,color=red] (5,0) .. controls (5.1,0.5) .. (5,1);
\draw[ultra thick,color=blue] (6,0) -- (7,1);
\draw[ultra thick,color=blue] (6,1) -- (7,0);
\draw[ultra thick, style=dashed,color=red](6,0) -- (6,1);
\draw[ultra thick, style=dashed,color=red] (7,0) -- (7,1);

\draw[ultra thick, style=dashed,color=red]  (9,0) -- (10,0.5);
\draw[ultra thick,color=blue] (9,1) -- (10,0.5);

\node at (0,0) [regular polygon, regular polygon sides=3,draw=black,fill=red,scale=0.5]{};
\node at (0,1) [circle,draw=black,fill=blue,scale=0.7]{};
\node at (0.5,0.5) [regular polygon, regular polygon sides=4,draw=black,fill=purple,scale=0.7]{};
\node at (1,0) [regular polygon, regular polygon sides=3,draw=black,fill=red,scale=0.5]{};
\node at (1,1) [circle,draw=black,fill=blue,scale=0.7]{};
\node at (2,0) [regular polygon, regular polygon sides=3,draw=black,fill=red,scale=0.5]{};
\node at (2,1) [circle,draw=black,fill=blue,scale=0.7]{};

\node at (4,0) [regular polygon, regular polygon sides=3,draw=black,fill=red,scale=0.5]{};
\node at (4,1) [circle,draw=black,fill=blue,scale=0.7]{};
\node at (5,0) [regular polygon, regular polygon sides=3,draw=black,fill=red,scale=0.5]{};
\node at (5,1) [circle,draw=black,fill=blue,scale=0.7]{};
\node at (6,0) [regular polygon, regular polygon sides=3,draw=black,fill=red,scale=0.5]{};;
\node at (6,1) [circle,draw=black,fill=blue,scale=0.7]{};
\node at (7,0) [regular polygon, regular polygon sides=3,draw=black,fill=red,scale=0.5]{};;
\node at (7,1) [circle,draw=black,fill=blue,scale=0.7]{};

\node at (9,0) [regular polygon, regular polygon sides=3,draw=black,fill=red,scale=0.5]{};
\node at (9,1) [circle,draw=black,fill=blue,scale=0.7]{};
\node at (10,0.5) [circle,draw=black,fill=black,scale=0.7]{};

\end{tikzpicture}
\end{center}
\caption{Pattern to construct the trees in $H$ (on the left), the graph $H'$ (on the middle) and the graph $H^{+}$ (on the right) (simple line: edge of $T_1$; dashed line: edge of $T_2$; boxed vertices: inner vertices of both $T_1$ and $T_2$).}
\label{reduc2}
\end{figure}
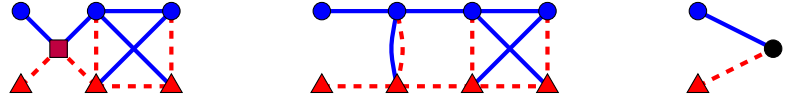
Let $G$ be a graph and let $u$ and $v$ be two vertices of $V(G)$.
We construct a graph $G'$ from $G$ as follows.
Let $\ell\ge 1$ be a positive integer.
We begin by constructing two graphs $H_{\ell}$ and $H'_{\ell}$ by induction. The graph $H_1$ is the graph $H$ and the graph $H'_1$ is the graph $H'$. The graph $H_{\ell+1}$ is obtained from $H_{\ell}$ by doing an $H$-add on the two vertices of degree $3$ in $H_\ell$ (denoted by $p'_1$ and $p'_2$ in the left part of Figure \ref{reduc}, for $\ell=1$). The graph $H'_{\ell+1}$ is obtained from $H'_\ell$ by doing an $H'$-add on the two vertices of degree $3$ (also denoted by $p'_1$ and $p'_2$ in the middle part of Figure \ref{reduc}, for $\ell=1$). In $H_\ell$ and $H'_\ell$, we denote by $p'_1$ and $p'_2$ the two remaining vertices of degree $3$.

Finally, the graph $H_{i,j}$ is obtained by taking $H_i$ and $H'_j$, identifying $p'_1$ in $H_i$ with a vertex of degree one in $H'_j$ and $p'_2$ in $H_i$ with the other vertex of degree one in $H'_j$ and doing a $H^{+}$-add on the two vertices of degree $3$ which are labeled by $p'_1$ and $p'_2$ in $H'_j$.
The graph $G'$ is obtained by taking a copy of $G$, adding $H_{i,j}$, identifying the vertex $u$ with a vertex of degree one in $H_{i,j}$ and identifying the vertex $v$ with the other vertex of degree one in $H_{i,j}$.

Suppose there exist two completely independent spanning trees $T_1$ and $T_2$ in $G$ with $u\in I(T_1)$ and $v\in I(T_2)$.
We can construct two $(i,j)$-disjoint spanning trees in $G'$ by reproducing the trees $T_1$ and $T_2$ in the graph $G'$ restricted to $G$ and by using the patterns described in Figure \ref{reduc2} in order to extend the spanning trees to $H_{i,j}$.

Suppose there exist two $(i,j)$-disjoint spanning trees  $T_1$ and $T_2$ in $G'$. 
Note that there are $i$ articulation vertices in the graph $G'$ restricted to $H_{i,j}$. These $i$ articulation vertices should be inner vertices in both $T_1$ and $T_2$. Thus, the trees $T_1$ and $T_2$ restricted to $G$ are internally vertex-disjoint.
By Proposition \ref{prop1}, the vertex $x$ (obtained by $H^{+}$-add in $H_{i,j}$ and illustrated in Figure \ref{reduc}) should be adjacent to an inner vertex of $T_1$ and to an inner vertex of $T_2$. Thus in order that $T_1$ and $T_2$ be connected, there must be a path from $x$ to $u$ in $T_1$ and a path from $x$ to $v$ in $T_2$ (we can exchange $T_1$ and $T_2$ if necessary). Note that the vertices $y$ and $y'$ from a copy of $H'$ (illustrated in Figure \ref{reduc}) cannot be both inner vertices of the same tree since it would be impossible to have a path from $x$ to $u$ in $T_1$ and another path from $x$ to $v$ in $T_2$. Thus, in order that $y$ and $y'$ belong to both $T_1$ and $T_2$ , the edge $yy'$ should belong to both $T_1$ and $T_2$ (since we already have $i$ articulation vertices). Moreover, since there are $j$ copies of $H'$ in the graph $H_{i,j}$, the trees $T_1$ and $T_2$ restricted to $G$ are both internally vertex-disjoint and edge-disjoint.

\end{proof}
\section{Sufficient conditions to have $(i,j)$-disjoint spanning trees}
\subsection{$k$-connectivity}
We begin this section by proving classical properties about cut sets.
\begin{prop}
Let $G$ be a graph and let $T_1, \ldots, T_k$ be $(i,j)$-disjoint spanning trees in $G$.
For every subset of vertices $A\subseteq V(G)$ such that $|A|< k$ and $G-A$ is not connected, (at least) one vertex of $A$ is in $I(T_1,\ldots T_k)$.
For every subset of edges $B\subseteq E(G)$ such that such that $|B|< k$ and $G-B$ is not connected, (at least) one edge of $B$ is in $E(T_1,\ldots T_k)$.
\end{prop}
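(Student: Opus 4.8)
The plan is to prove both statements by the same two-move argument, in the spirit of classical connectivity proofs: a small cut forces \emph{each} of the $k$ trees to ``use'' an element of the cut, and then a pigeonhole argument on $k$ trees against fewer than $k$ cut elements produces a shared element, which is precisely an element of $I(T_1,\ldots,T_k)$ (vertex case) or of $E(T_1,\ldots,T_k)$ (edge case).

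For the vertex statement, I would fix a vertex set $A$ with $|A|<k$ such that $G-A$ is disconnected, choose a connected component $C$ of $G-A$, and set $C'=V(G)\setminus(A\cup C)$; both $C$ and $C'$ are nonempty since $G-A$ is disconnected. Because every path in $G$ from a vertex of $C$ to a vertex of $C'$ meets $A$, and each $T_\ell$ is a subgraph of $G$, the unique path $P_{T_\ell}(c,c')$ between any $c\in C$ and $c'\in C'$ must pass through some $a_\ell\in A$. The key point to verify is that $a_\ell$ is genuinely an \emph{inner} vertex of $T_\ell$: since $A$, $C$, $C'$ are pairwise disjoint, $a_\ell$ is neither endpoint of that path, hence has degree at least $2$ in $T_\ell$, so $a_\ell\in I(T_\ell)$. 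This gives a map $\ell\mapsto a_\ell$ from the $k$ trees into $A$; as $|A|<k$, two trees $T_\ell,T_{\ell'}$ are sent to the same $a$, whence $a\in I(T_\ell)\cap I(T_{\ell'})$, i.e.\ $a\in A\cap I(T_1,\ldots,T_k)$.

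For the edge statement, I would fix an edge set $B$ with $|B|<k$ such that $G-B$ is disconnected, take a connected component $C$ of $G-B$, and let $B_C$ be the set of edges of $G$ having exactly one endpoint in $C$. Since $C$ is a component of $G-B$, all such edges lie in $B$, so $B_C\subseteq B$ and $|B_C|\le|B|<k$. Each spanning tree $T_\ell$ is connected and spans both $C$ and its nonempty complement, so $T_\ell$ must contain at least one edge of the cut, i.e.\ at least one edge of $B_C$. If each edge of $B_C$ belonged to at most one tree, the $k$ trees would require at least $k$ distinct cut edges, contradicting $|B_C|<k$; hence some $e\in B_C\subseteq B$ lies in at least two trees, i.e.\ $e\in E(T_1,\ldots,T_k)$.

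I do not expect a serious obstacle, since the argument is essentially routine. The only points needing care are (a) checking that the separating vertex $a_\ell$ is internal to the tree-path rather than an endpoint, which is exactly where the disjointness of $A$, $C$, and $C'$ is used to guarantee $a_\ell\in I(T_\ell)$; and (b) phrasing the pigeonhole correctly, namely that charging each of the $k$ trees to a single cut element cannot be injective into a set of fewer than $k$ elements, so some element is charged twice and therefore shared.
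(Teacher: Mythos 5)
Your proof is correct and follows essentially the same route as the paper's: establish that each tree $T_\ell$ must have an inner vertex in $A$ (respectively, an edge in the cut induced by $B$), then apply pigeonhole over the $k$ trees against fewer than $k$ cut elements. The paper reaches the claim $I(T_\ell)\cap A\neq\emptyset$ by noting that otherwise removing $A$ would only delete leaves of $T_\ell$, whereas you reach it via the tree-path between the two sides of the cut, but this is only a cosmetic difference and your write-up is, if anything, more explicit about the pigeonhole step the paper leaves implicit.
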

\begin{proof}
Let $A\subseteq V(G)$ be a subset of vertices such that $|A|< k$ and $G-A$ is not connected.
Remark that $I(T_{\ell})\cap A$ should not be empty, for every integer $\ell$, $1\le \ell \le k$, since it would imply that $T_\ell$ is not connected. Since $|A|< k$, a vertex of $A$ should be in $I(T_1,\ldots T_k)$. The same property holds for $B$.
\end{proof}
\begin{prop}
Let $G$ be a graph and let $T_1, \ldots, T_k$ be $(i,j)$-disjoint spanning trees in $G$. 
Let $a$ be the number of articulation vertices which do not belong to bridges in $G$ and let $b$ be the number of bridges in $G$. 
We have $i\ge a+2b$ and $j\ge b$.
\end{prop}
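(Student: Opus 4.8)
The plan is to treat the two inequalities $j \ge b$ and $i \ge a+2b$ separately, in both cases exploiting the single structural fact that a bridge lies across a cut whose only crossing edge is the bridge itself, so that \emph{every} spanning tree of $G$ must contain it. For the vertex side I would lean on the preceding proposition (a cut set of size $<k$ forces an inner vertex into $I(T_1,\ldots,T_k)$), applied one articulation vertex at a time.

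First I would prove $j \ge b$. Let $e = uv$ be a bridge and let $(X, V(G)\setminus X)$ be the partition it induces, with $u \in X$ and $v \notin X$. Since $e$ is the unique edge joining the two sides, each $T_\ell$, being a spanning and hence connected subgraph, must use $e$; as $k \ge 2$ this puts $e \in E(T_1,\ldots,T_k)$. Distinct bridges yield distinct edges, so $|E(T_1,\ldots,T_k)| \ge b$, giving $j \ge b$.

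For $i \ge a + 2b$ I would exhibit $a + 2b$ vertices forced into $I(T_1,\ldots,T_k)$. Each of the $a$ articulation vertices $w$ not on a bridge lies in $I(T_1,\ldots,T_k)$: applying the preceding proposition with $A=\{w\}$ (valid since $|A|=1<k$) forces one vertex of $A$, hence $w$ itself, into $I(T_1,\ldots,T_k)$. For a bridge $e=uv$ with cut $(X,V(G)\setminus X)$ I would argue that both endpoints are inner in \emph{every} tree: deleting $e$ from $T_\ell$ splits it into a subtree spanning $X$ and a subtree spanning $V(G)\setminus X$ (no other edge crosses the cut); when each side has at least two vertices these are connected graphs on $\ge 2$ vertices, so $u$ (resp.\ $v$) is non-isolated inside its side and thus has a neighbour there in $T_\ell$ in addition to the edge $e$, giving $\deg_{T_\ell}(u) \ge 2$. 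Hence $u,v \in I(T_\ell)$ for all $\ell$, so $u,v \in I(T_1,\ldots,T_k)$, contributing two vertices per bridge, and summing gives $|I(T_1,\ldots,T_k)| \ge a + 2b$.

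The hard part, and the step I would scrutinise most, is the implicit claim that these $a+2b$ vertices are \emph{distinct}, since that is exactly what converts the list of contributions into the stated sum. The $a$ articulation vertices are separated from every bridge endpoint by the phrase ``not on a bridge''; but the $2b$ bridge endpoints need the bridges to be pairwise vertex-disjoint, and the ``two per bridge'' step silently assumes both sides of each bridge are non-trivial. A pendant bridge (a leaf endpoint) or two bridges sharing an endpoint breaks the count, so I would either impose $\delta(G)\ge 2$ together with a bridge-disjointness hypothesis, or, more robustly, recast the bound through the block--cut-vertex tree so that each bridge genuinely contributes its two endpoints as distinct cut vertices. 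I expect reconciling the literal ``$2b$'' with these degenerate configurations to be where the real work, or a needed extra hypothesis, actually lies.
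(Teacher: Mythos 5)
Your argument follows the same route as the paper's own proof, which disposes of the whole proposition in two sentences: an articulation vertex must have degree at least two in every spanning tree, hence lies in $I(T_1,\ldots,T_k)$ once $k\ge 2$, and ``the same goes for the bridges and their extremities.'' Your proofs of $j\ge b$ and of the two separate vertex facts (articulation vertices off bridges are forced into $I(T_1,\ldots,T_k)$ via the cut-set proposition; bridge endpoints are inner in every tree when both sides of the bridge-cut contain at least two vertices) are correct. The reservation in your last paragraph is not excessive caution: the literal inequality $i\ge a+2b$ fails in exactly the degenerate configurations you identify. For a pendant bridge $uv$ with $\deg_G(v)=1$, the vertex $v$ is a leaf of every spanning tree, so it never enters $I(T_1,\ldots,T_k)$; concretely, a triangle with one pendant vertex has $a=0$, $b=1$, yet admits two spanning trees with $|I(T_1,T_2)|=1<2$. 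Likewise $G=P_3$ has $a=0$, $b=2$, while $|I(T_1,T_2)|=1<4$, so bridges sharing an endpoint also break the count. The paper's proof simply asserts the bridge-extremity contribution without addressing either case, so the extra hypotheses you propose (both sides of every bridge nontrivial and the bridges pairwise vertex-disjoint, or a restatement through the block--cut-vertex tree) are genuinely needed; there is no hidden argument in the paper that rescues the statement as written.
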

\begin{proof}
Since an articulation vertex belongs to every spanning tree of $G$, we have $i\ge a$.
The same goes for the bridges and their extremities.
\end{proof}
Since the presence of a $k$-cut in a graph $G$ implies that there do not exist $k+1$ disjoint connected dominating set, it is natural to ask whether a $k$-connected graph, for $k$ sufficiently large, contains at least two disjoint connected dominating sets \cite{HE1984}.
In the paper in which completely independent spanning trees have been introduced \cite{HA2001}, the same question has been asked for two completely independent spanning trees.

Using the construction from Kriesell \cite{KR1999} or Péterfalvi \cite{FE2011}, we can obtain a family of $k$-connected graphs that do not contain two completely independent spanning trees. We recall the construction of the family of graphs considered by Kriesell \cite{KR1999}.

\begin{de}[\cite{KR1999}]
Let $k$ and $\ell$ be two integers such that $\ell \ge k$.
Let $G_{k,\ell}$ be the bipartite graph with vertex set $\{1,\ldots, \ell\}\cup \{u_{A}|\ A\subseteq \{1,\ldots \ell\},\ |A|=k\}$ and edge set $\{ i u_A |\ i\in A,\ u_A\in V(G_{k,\ell}) ,\ 1\le i \le \ell \}$.
The graph $G_{k,\ell}$ corresponds to the incidence graph of the complete $k$-uniform hypergraph with $\ell$ vertices.
\end{de}
Note that the graph $G_{k,\ell}$ is $k$-connected and bipartite.
Using a similar proof than that of Kriesell, we obtain the following theorem which shows that there exist $k$-connected graphs which do not contain two $(i,j)$-disjoint spanning trees, for every three positive integers $k\ge 2$, $i$ and $j$.
\begin{theo}
Let $i$, $j$ and $k\ge 2$ be integers.
For any $\ell \ge 2k+i-1$, the graph $G_{k,\ell}$ does not contain two $(i,j)$-disjoint spanning trees.
\end{theo}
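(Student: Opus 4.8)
The plan is to adapt Kriesell's double-counting argument by reasoning about the two sets of inner vertices and exploiting the extreme degree imbalance between the two sides of the bipartition of $G_{k,\ell}$. Write $[\ell]=\{1,\dots,\ell\}$ for the \emph{number vertices} and recall that each \emph{set vertex} $u_A$ has exactly the $k$ elements of $A$ as neighbours, all of them number vertices. I would suppose for contradiction that $G_{k,\ell}$ admits two $(i,j)$-disjoint spanning trees $T_1,T_2$. The starting observation is that, for a graph of order at least $3$, the inner vertices of a spanning tree form a connected dominating set: $I(T_t)$ induces a connected subtree of $T_t$ and every leaf is adjacent to an inner vertex. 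Moreover, since $T_1,T_2$ are $(i,j)$-disjoint, $|I(T_1)\cap I(T_2)|=|I(T_1,T_2)|\le i$, so the two inner sets are \emph{almost disjoint} connected dominating sets; this is the only place where the disjunction hypothesis is used.

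The heart of the argument is a transversal condition forced jointly by domination and connectivity. For $t\in\{1,2\}$ I set $N_t=I(T_t)\cap[\ell]$, the number vertices that are inner in $T_t$, and I claim that every $k$-subset $A$ meets $N_t$. If $u_A\in I(T_t)$, then because $G[I(T_t)]$ is connected and has at least two vertices (a spanning star centred at a set vertex is impossible, as $u_A$ has degree only $k<|V(G_{k,\ell})|-1$), the vertex $u_A$ is non-isolated in $G[I(T_t)]$; its neighbour there is a number vertex, hence lies in $N_t$, giving $A\cap N_t\neq\emptyset$. If instead $u_A\notin I(T_t)$, then by Proposition \ref{prop1}.i) the leaf $u_A$ has a neighbour in $I(T_t)$, again necessarily a number vertex of $N_t$, so $A\cap N_t\neq\emptyset$. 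Consequently no $k$-subset of $[\ell]$ avoids $N_t$, which forces $|[\ell]\setminus N_t|\le k-1$, i.e. $|N_t|\ge \ell-k+1$.

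A short counting step then closes the proof. Since $N_1\cap N_2\subseteq I(T_1)\cap I(T_2)$ we have $|N_1\cap N_2|\le i$, and as $N_1,N_2\subseteq[\ell]$,
\[
\ell \ \ge\ |N_1\cup N_2| \ =\ |N_1|+|N_2|-|N_1\cap N_2| \ \ge\ 2(\ell-k+1)-i,
\]
which rearranges to $\ell\le 2k+i-2$, contradicting $\ell\ge 2k+i-1$. Note that the parameter $j$ never enters the argument: the sharing of edges is irrelevant, and only the at most $i$ shared inner vertices matter. The case $i=0$ recovers exactly Kriesell's statement that $G_{k,\ell}$ has no two disjoint connected dominating sets (and hence no two completely independent spanning trees) for $\ell\ge 2k-1$.

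The step I expect to be the main obstacle is the careful justification of the transversal claim when $u_A\in I(T_t)$: I must guarantee that $u_A$ has a neighbour \emph{among the inner vertices} rather than merely among its tree-neighbours, some of which could be leaves. This is precisely where I rely on the fact that the inner vertices of a tree induce a connected subtree, so that $u_A$ is non-isolated in $G[I(T_t)]$, together with the elementary observation that $G_{k,\ell}$ admits no spanning star centred at a low-degree set vertex. Everything after that is the same transversal-plus-counting computation, now carried out with the overlap budget $i$ in place of exact disjointness.
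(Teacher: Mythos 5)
Your proof is correct, and it reaches the conclusion by a genuinely different (and arguably cleaner) route than the paper. Both arguments hinge on the same key fact, namely that by Proposition \ref{prop1} every set-vertex $u_A$ must have a neighbour in $I(T_t)$ for each $t$, and since $N(u_A)=A\subseteq\{1,\dots,\ell\}$ this forces $I(T_t)\cap\{1,\dots,\ell\}$ to meet every $k$-subset; your case analysis for $u_A\in I(T_t)$ versus $u_A\notin I(T_t)$ is a sound substitute for the paper's direct appeal to the diameter-at-least-$3$ case of Proposition \ref{prop1}. Where you diverge is in how this fact is exploited: the paper runs a greedy induction, repeatedly choosing $k$-sets $A_q$ disjoint from $B=I(T_1,T_2)\cap\{1,\dots,\ell\}$ and from the previously found inner vertices so as to assemble an explicit $k$-set $D\subseteq I(T_1)\setminus I(T_2)$ whose set-vertex $u_D$ then violates the domination condition for $T_2$; the hypothesis $\ell\ge 2k+i-1$ is consumed step by step to keep enough room for each $A_q$. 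You instead turn the transversal property into the quantitative bound $|N_t|\ge\ell-k+1$ for both trees simultaneously and close with inclusion--exclusion against the overlap budget $|N_1\cap N_2|\le i$, which makes the role of $i$ transparent, makes it explicit that $j$ is irrelevant, and yields the threshold $2k+i-1$ in one line rather than as the tail of an induction. The two proofs are logically equivalent in strength, but yours isolates the combinatorial content (two almost-disjoint transversals of the $k$-subsets cannot fit in $\ell\le 2k+i-2$ elements) more cleanly than the paper's constructive witness.
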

\begin{proof}
Suppose there exist two $(i,j)$-disjoint spanning trees $T_1$ and $T_2$ in $G_{k,\ell}$.
Let $H_1$ and $H_2$ be the two subsets of vertices forming a bipartition of $G_{k,\ell}$, with $H_1=\{1,\ldots, \ell\}$ and $ H_2=\{u_{A}|\ A\subseteq \{1,\ldots \ell\},\ |A|=k\}$.
Let $B=I(T_1,T_2)\cap H_1$.
Note that, by definition of $(i,j)$-disjoint spanning trees, $|B|\le i$. We consider a set $A_0\subset H_1\setminus B$, $|A_{0}|=k$.
By Proposition \ref{prop1}, at least one inner vertex of $T_1$ is adjacent to $u_{A_0}$. This inner vertex of $T_1$ is denoted by $v_{0}$. Inductively, since $|H_1|=\ell\ge 2k+i-1$, for $1\le q\le k-1$ , we can create a set $A_q\subseteq H_1 \setminus (B\cup \{u_0,\ldots,u_{q-1}\})$ with $|A_{q}|=k$ and obtain that there exists a vertex $v_q\in A_q\cap I(T_1)$ adjacent to $u_{A_q}$. The set $D=\{v_0,\ldots, v_{k-1} \}\subseteq H_1$ is such that $|D|=k$ and $D\subseteq I(T_1)$. Hence, we have a contradiction with Proposition \ref{prop1}.ii), since $u_{D}$ has no neighbor which is a inner vertex of $T_2$ and $G_{k,\ell}$ has diameter greater than $2$ when $\ell>k$.
\end{proof}
\subsection{Dirac's and Ore's conditions}
We begin this subsection by proving that there are at least two disjoint dominating sets in some particular graphs.
\begin{prop}\label{yololo}
There exist two disjoint connected dominating sets in $C_4$ and three disjoint connected dominating sets in $K_{3,3}$
\end{prop}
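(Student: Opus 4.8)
The plan is to exhibit explicit connected dominating sets in each of the two small graphs and verify directly that they are disjoint, connected, and dominating. Since $C_4$ and $K_{3,3}$ are tiny, the work is purely a matter of naming the right vertex subsets and checking the three defining conditions by inspection rather than through any general argument.

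First I would handle $C_4$. Label its vertices $1,2,3,4$ in cyclic order, so the edges are $\{12,23,34,41\}$. The natural candidates are the two subsets $D_1=\{1,2\}$ and $D_2=\{3,4\}$. These are plainly disjoint and partition $V(C_4)$. Each induces a single edge, hence a connected subgraph. For domination, every vertex not in $D_1$ must have a neighbor in $D_1$: vertex $3$ is adjacent to $2\in D_1$ and vertex $4$ is adjacent to $1\in D_1$, so $D_1$ dominates; symmetrically $D_2$ dominates since $1$ is adjacent to $4$ and $2$ is adjacent to $3$. Thus $D_1,D_2$ are two disjoint connected dominating sets, giving $d_c(C_4)\ge 2$.

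Next I would treat $K_{3,3}$ with bipartition classes $\{a_1,a_2,a_3\}$ and $\{b_1,b_2,b_3\}$. The idea is to pair one vertex from each side: set $D_\ell=\{a_\ell,b_\ell\}$ for $\ell\in\{1,2,3\}$. These three sets are pairwise disjoint and together exhaust $V(K_{3,3})$. Each $D_\ell$ induces the edge $a_\ell b_\ell$ (which exists since the graph is complete bipartite), so each induced subgraph is connected. For domination, note that in $K_{3,3}$ every $a$-vertex is adjacent to every $b$-vertex; hence $b_\ell\in D_\ell$ dominates all $a$-vertices outside $D_\ell$ and $a_\ell\in D_\ell$ dominates all $b$-vertices outside $D_\ell$. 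Therefore each $D_\ell$ is a connected dominating set and the three are disjoint, yielding $d_c(K_{3,3})\ge 3$.

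I do not anticipate a genuine obstacle here: the only thing to watch is that each purported dominating set really is \emph{connected}, which is why choosing sets that induce a single edge (rather than, say, an independent pair) is essential. In both graphs the chosen sets have this property by construction, so the verification of conditions i)--iii) of a connected dominating set is immediate. The mild subtlety worth flagging is that the bound $d_c(G)\le\lfloor |V(G)|/\gamma_c(G)\rfloor$ quoted earlier shows these partitions are in fact optimal ($\gamma_c(C_4)=2$ and $\gamma_c(K_{3,3})=2$), so no larger family exists; but the proposition only asks for existence, and the explicit constructions above suffice.
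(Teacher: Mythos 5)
Your construction is exactly the one the paper gives: the paper's proof simply points to Figure \ref{C4K33}, which shows the same partition of $C_4$ into two adjacent pairs and of $K_{3,3}$ into three pairs each consisting of one vertex from each side of the bipartition. Your written verification of disjointness, connectivity and domination is correct and just makes explicit what the paper leaves to inspection of the figure.
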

\begin{proof}
These disjoint connected dominating sets are illustrated in Figure \ref{C4K33}.
\end{proof}
\begin{figure}[t]
\begin{center}
\begin{tikzpicture}
\draw (0,0) -- (1,0);
\draw (0,0) -- (0,1);
\draw (1,0) -- (1,1);
\draw (0,1) -- (1,1);
\node at (0,0) [circle,draw=blue,fill=blue,scale=0.7]{};
\node at (0,1) [circle,draw=blue,fill=blue,scale=0.7]{};
\node at (1,0) [regular polygon, regular polygon sides=3,draw=black,fill=red,scale=0.5]{};
\node at (1,1) [regular polygon, regular polygon sides=3,draw=black,fill=red,scale=0.5]{};

\draw (4,0) -- (4,1);
\draw (4,0) -- (5,1);
\draw (4,0) -- (6,1);
\draw (5,0) -- (4,1);
\draw (5,0) -- (5,1);
\draw (5,0) -- (6,1);
\draw (6,0) -- (4,1);
\draw (6,0) -- (5,1);
\draw (6,0) -- (6,1);
\node at (4,0) [circle,draw=blue,fill=blue,scale=0.7]{};
\node at (4,1) [circle,draw=blue,fill=blue,scale=0.7]{};
\node at (5,0) [regular polygon, regular polygon sides=3,draw=black,fill=red,scale=0.5]{};
\node at (5,1) [regular polygon, regular polygon sides=3,draw=black,fill=red,scale=0.5]{};
\node at (6,0) [regular polygon, regular polygon sides=4,draw=black,fill=green,scale=0.7]{};
\node at (6,1)  [regular polygon, regular polygon sides=4,draw=black,fill=green,scale=0.7]{};
\end{tikzpicture}
\end{center}
\caption{Disjoint connected dominating sets in $C_4$ (on the left) and in $K_{3,3}$ (on the right) (circles: $D_1$; triangles: $D_2$; squares: $D_3$).}
\label{C4K33}
\end{figure}
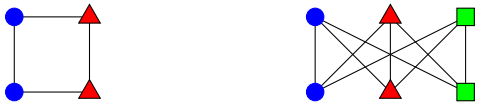
A graph $G$ satisfies the condition of Dirac if $\delta(G)\ge |V(G)|/2$ and satisfies the condition of Ore if $\min \{ d(u)+d(v) |\ uv\notin E(G)\} \ge |V(G)|$. Araki \cite{AR2014} proved that every graph $G$ with $|V(G)|\ge 7$ satisfying Dirac's condition contains two completely independent spanning trees. Moreover, Fan, Hong and Liu \cite{FA2014} proved that every graph $G$ with $|V(G)|\ge 7$ satisfying Ore's condition contains two completely independent spanning trees. The only graphs with $|V(G)|< 7$ satisfying the Dirac condition or the Ore condition which do not contain two completely independent spanning trees are $P_2$, $C_4$ and $K_{3,3}$.
Thus, by Proposition \ref{yololo}, we obtain the two following theorems:
\begin{theo}[\cite{AR2014}]\label{cond1}
Let $G$ be a graph. If $\delta(G)\ge |V(G)|/2$, then there exist two disjoint connected dominating sets.
\end{theo}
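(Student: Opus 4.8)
The plan is to derive this statement as a corollary of Araki's theorem on completely independent spanning trees, combined with the elementary observation that completely independent spanning trees always yield disjoint connected dominating sets. The key link is the following: if $T_1$ and $T_2$ are two completely independent spanning trees of a graph $G$ of order at least $3$, then $I(T_1)$ and $I(T_2)$ are two disjoint connected dominating sets. Indeed, being internally vertex-disjoint, $I(T_1)\cap I(T_2)=\emptyset$; each $I(T_\ell)$ induces a connected subgraph because deleting the leaves of a tree leaves a subtree; and each $I(T_\ell)$ is dominating because in a tree of order at least $3$ the unique neighbor of any leaf is necessarily an inner vertex (otherwise that leaf together with its neighbor would form a component of size $2$), so every leaf of $T_\ell$ is dominated by $I(T_\ell)$.

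First I would dispose of the case $|V(G)|\ge 7$. Here Dirac's condition $\delta(G)\ge|V(G)|/2$ together with Araki's theorem \cite{AR2014} guarantees two completely independent spanning trees in $G$, and the observation above immediately produces the two required disjoint connected dominating sets $I(T_1)$ and $I(T_2)$.

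It then remains to handle the finitely many graphs of order less than $7$ satisfying $\delta(G)\ge|V(G)|/2$. As recalled in the paragraph preceding the statement, among such graphs the only ones \emph{without} two completely independent spanning trees are $P_2$, $C_4$ and $K_{3,3}$. For every other such graph the observation again applies, so I only need to treat these three exceptions directly. The graph $P_2$ is immediate, since its two vertices form two disjoint (trivially connected) dominating singletons; and $C_4$ and $K_{3,3}$ are exactly the graphs exhibited in Proposition \ref{yololo}, which provides two (respectively three) disjoint connected dominating sets.

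I do not expect a genuine obstacle, as the result is a corollary of the two cited theorems. The only point requiring care is verifying that the inner-vertex sets of completely independent spanning trees really are connected dominating sets; the domination step in that verification uses order at least $3$, which is precisely why $P_2$ must be checked by hand rather than through a pair of spanning trees. The remaining subtlety is simply confirming that the three exceptional small graphs are all covered, which is exactly the content of Proposition \ref{yololo} together with the trivial case of $P_2$.
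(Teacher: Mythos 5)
Your proposal is correct and follows essentially the same route as the paper: invoke Araki's Dirac-condition theorem for $|V(G)|\ge 7$, pass from completely independent spanning trees to their inner-vertex sets, and settle the small exceptional graphs $C_4$ and $K_{3,3}$ via Proposition \ref{yololo} (the paper leaves $P_2$ and the CIST-to-dominating-set link implicit, so your write-up is if anything slightly more complete).
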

\begin{theo}[\cite{FA2014}]\label{cond2}
Let $G$ be a graph. If $\min \{ d(u)+d(v)|\ uv\notin E(G)\} \ge |V(G)|$, then there exist two disjoint connected dominating sets.
\end{theo}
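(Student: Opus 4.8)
The plan is to deduce this statement from the corresponding result on completely independent spanning trees, exactly along the lines the authors have set up in the surrounding discussion. The crucial bridge is the observation that internally vertex-disjoint spanning trees induce disjoint connected dominating sets: if $T_1$ and $T_2$ are spanning trees of a graph $G$ of order at least $3$ whose inner-vertex sets $I(T_1)$ and $I(T_2)$ are disjoint, then both $I(T_1)$ and $I(T_2)$ are connected dominating sets. Indeed, deleting all leaves of a tree leaves a subtree, so each $I(T_\ell)$ induces a connected subgraph of $T_\ell$ and hence of $G$; and every vertex outside $I(T_\ell)$ is a leaf of $T_\ell$, so it is adjacent in $T_\ell$, thus in $G$, to an inner vertex. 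Since two completely independent spanning trees are in particular internally vertex-disjoint, any graph admitting two such trees admits two disjoint connected dominating sets.

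With this reduction in hand, I would split on the order of $G$. First, for $|V(G)|\ge 7$, Ore's condition $\min\{d(u)+d(v)\mid uv\notin E(G)\}\ge |V(G)|$ gives, by the theorem of Fan, Hong and Liu quoted above, two completely independent spanning trees in $G$; feeding these into the bridge observation yields the two desired disjoint connected dominating sets.

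It then remains to treat the finitely many graphs with $|V(G)|<7$ that satisfy Ore's condition. Here I would invoke the classification already recalled in the excerpt: every such graph either contains two completely independent spanning trees, in which case the bridge observation again finishes the argument, or is one of $P_2$, $C_4$, $K_{3,3}$. For $C_4$ and $K_{3,3}$ the required disjoint connected dominating sets are furnished directly by Proposition \ref{yololo}. The graph $P_2=K_2$ must be handled separately, since it has order less than $3$ and the bridge observation does not apply: its two singletons $\{u\}$ and $\{v\}$ are each a connected dominating set, and they are disjoint.

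The argument is essentially a bookkeeping exercise once the bridge observation is in place, so there is no serious obstacle. The only points demanding care are verifying that the inner vertices of a spanning tree always form a connected dominating set (the short topological fact about deleting leaves) and checking that the exceptional list $\{P_2,C_4,K_{3,3}\}$ is matched exactly by Proposition \ref{yololo} together with the trivial treatment of $P_2$.
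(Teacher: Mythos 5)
Your proposal is correct and follows essentially the same route as the paper: invoke the Fan--Hong--Liu theorem for $|V(G)|\ge 7$, pass from completely independent spanning trees to disjoint connected dominating sets via the inner-vertex sets, and dispose of the small exceptional graphs using Proposition \ref{yololo}. You are in fact slightly more careful than the paper, since you make the bridge observation explicit and treat $P_2$ (which Proposition \ref{yololo} does not cover) separately.
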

Moreover, there exists a graph of order $n$ satisfying $\delta(G)\ge \lceil n/2 \rceil -1$ and $\min \{ d(u)+d(v)|\ uv\notin E(G)\} \ge n-1$, that does not contain two disjoint connected dominating sets. Such graph can be constructed by taking two complete graphs $K_{\lfloor (n+1)/2 \rfloor}$ and $K_{\lceil (n+1)/2 \rceil}$, for $n$ a positive integer, and by identifying a vertex of the first clique with a vertex of the second clique. This fact implies that the bounds in the previous theorems are tight. It could be possible to improve the recent results about Dirac's condition \cite{CH2015,HA2016,HO2016} by only considering disjoint connected dominating sets.

\section{Number of inner vertices and edges in $(i,j)$-disjoint spanning trees}
\subsection{Required number of edges}
We begin this section by giving necessary conditions on the number of edges of a graph $G$ in order to have $k$ $(i,j)$-disjoint spanning trees.
\begin{prop}\label{prop44}
Let $G$ be a graph of order $n$ and let $T_1, \ldots, T_k$ be $(i,j)$-disjoint spanning trees in $G$. We have $|E(G)|\ge k(n-1)-j(k-1)$.
\end{prop}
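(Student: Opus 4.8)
The plan is to prove this by a double-counting argument on the edge multiplicities of the trees. The starting observation is that each spanning tree $T_\ell$ of the order-$n$ graph $G$ has exactly $n-1$ edges, so if we count edges across all $k$ trees \emph{with multiplicity}, the total is exactly $k(n-1)$. I would then compare this total with the number of \emph{distinct} edges of $G$ that the trees actually use, which is of course at most $|E(G)|$. The gap between $k(n-1)$ and this number of distinct edges is entirely accounted for by edges that are shared by several trees, and the hypothesis controls precisely those edges.

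First I would classify each edge $e$ appearing in $\bigcup_{\ell=1}^k E(T_\ell)$ by its \emph{multiplicity} $m(e)$, the number of trees among $T_1,\ldots,T_k$ containing $e$. Let $a$ denote the number of edges with $m(e)=1$ and let $b$ denote the number of edges with $m(e)\ge 2$; by definition the latter edges are exactly the members of $E(T_1,\ldots,T_k)$, so $b=|E(T_1,\ldots,T_k)|\le j$. Splitting the total count by multiplicity gives
\begin{equation}
k(n-1)=\sum_{e\,:\,m(e)=1} 1 + \sum_{e\,:\,m(e)\ge 2} m(e) = a + \sum_{e\,:\,m(e)\ge 2} m(e).
\end{equation}
Since there are only $k$ trees in total, every edge satisfies $m(e)\le k$, so each of the $b$ shared edges contributes at most $k$ to the second sum, yielding $k(n-1)\le a + bk$, hence $a\ge k(n-1)-bk$.

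The final step is to relate $a$ and $b$ to $|E(G)|$. The number of distinct edges used by the trees is $a+b$, and these are all edges of $G$, so $|E(G)|\ge a+b$. Combining with the previous bound,
\begin{equation}
|E(G)|\ge a+b \ge \bigl(k(n-1)-bk\bigr)+b = k(n-1)-b(k-1).
\end{equation}
Because $k\ge 2$ we have $k-1\ge 0$, so the right-hand side is decreasing in $b$, and using $b\le j$ gives $|E(G)|\ge k(n-1)-j(k-1)$, as required. I do not expect a genuine obstacle here: the argument is a clean counting estimate, and the only point demanding a little care is the monotonicity in $b$, which is exactly where the inequality $b\le j$ together with $k\ge 2$ is consumed; no appeal to the structural Propositions (such as the ``at most two trees'' bound of Proposition~\ref{megaprop}, which holds only in the $(0,*)$ case) is needed, since the trivial bound $m(e)\le k$ suffices.
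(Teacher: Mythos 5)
Your proof is correct and follows essentially the same route as the paper's: count the $k(n-1)$ edge-slots of the trees with multiplicity, observe that each of the at most $j$ shared edges is counted at most $k$ times (losing at most $k-1$ per shared edge relative to distinct edges), and conclude $|E(G)|\ge k(n-1)-j(k-1)$. The paper states this in two lines; your version merely makes the bookkeeping with $a$ and $b$ explicit.
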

\begin{proof}
Suppose $G$ contains at least $k$ $(i,j)$-disjoint spanning trees. Since every spanning tree contains $n-1$ edges and 
since an edge in $E(T_1, \ldots T_k)$ can be in at most $k$ trees, we obtain that $G$ contains at least $k(n-1)-j(k-1)$ edges.
\end{proof}
Note that the grid $G(2,n)$ satisfies the equality for $k=2$ and $j=n$.
This last proposition can be improved for $i=0$ \cite{HAR2001} since, by Proposition \ref{megaprop}.i), an edge in $E(T_1, \ldots T_k)$ can be in at most two trees.

\begin{cor}\label{prop444}
Let $G$ be a graph of order $n$ and let $T_1, \ldots, T_k$ be $(0,j)$-disjoint spanning trees in $G$. We have $|E(G)|\ge k(n-1)-j$.
\end{cor}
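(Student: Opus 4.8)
The plan is to mimic the counting argument of Proposition \ref{prop44}, replacing the crude bound ``a shared edge lies in at most $k$ trees'' by the sharper bound ``at most two trees'' that becomes available once $i=0$.

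First I would observe that $(0,j)$-disjoint spanning trees are exactly internally vertex-disjoint spanning trees: the condition $|I(T_1,\ldots,T_k)|\le 0$ forces $I(T_1,\ldots,T_k)=\emptyset$, so the sets $I(T_1),\ldots,I(T_k)$ are pairwise disjoint, and by Proposition \ref{prop1}.i) each $I(T_\ell)$ dominates $G$ while being connected (the inner vertices of a tree induce a subtree). Hence $T_1,\ldots,T_k$ are trees induced by $k$ disjoint connected dominating sets, so Proposition \ref{megaprop} applies to them. Checking this hypothesis is the only step that requires care, and it is what makes the $i=0$ improvement possible; everything afterwards is a direct edge count.

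Next I would count edges with multiplicity. Each of the $k$ spanning trees contributes $n-1$ edges, for a total of $k(n-1)$ counted with multiplicity. Writing $m_e$ for the number of trees containing a given edge $e$, this total equals $\sum_e m_e$, the sum ranging over the distinct edges actually used; thus the number of distinct edges used equals $k(n-1)-\sum_e (m_e-1)$, where the overcount $\sum_e(m_e-1)$ is supported precisely on the edges of $E(T_1,\ldots,T_k)$ (edges lying in a single tree contribute $0$).

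Finally comes the key step. By Proposition \ref{megaprop}.i), every edge of $G$ lies in at most two of the trees, so each shared edge $e$ satisfies $m_e=2$ and contributes exactly $1$ to the overcount, rather than the $k-1$ allowed in Proposition \ref{prop44}. Since $|E(T_1,\ldots,T_k)|\le j$, the overcount is at most $j$, and therefore $|E(G)|$ is at least the number of distinct edges used, which is at least $k(n-1)-j$. I expect no genuine obstacle beyond verifying the applicability of Proposition \ref{megaprop} noted above, since the bound then follows from the same multiplicity accounting as in Proposition \ref{prop44}.
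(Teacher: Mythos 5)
Your proof is correct and follows essentially the same route as the paper, which derives the corollary from the edge count of Proposition \ref{prop44} combined with Proposition \ref{megaprop}.i) (each shared edge lies in at most two trees when $i=0$). Your explicit verification that $(0,j)$-disjoint spanning trees are trees induced by disjoint connected dominating sets, so that Proposition \ref{megaprop} applies, is a point the paper leaves implicit but is exactly the intended justification.
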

Moreover, for an arbitrary large $j$, the following bound is known.
\begin{prop}\cite{HAR2001}
A graph $G$ of order $n$ such that $d_c(G)\ge k$ has at least $n(k+1)/2-k$ edges. This bound is sharp since $d_c(K_{k,k})=k$.
\end{prop}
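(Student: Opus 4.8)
The plan is to reduce the statement to the edge count for internally vertex-disjoint spanning trees and then bound the number of shared edges. Since $d_c(G)\ge k$, fix $k$ disjoint connected dominating sets $D_1,\dots,D_k$ and let $T_1,\dots,T_k$ be the spanning trees induced by them (take a spanning tree of each connected $G[D_\ell]$ and hang every vertex outside $D_\ell$ as a leaf on a neighbour in $D_\ell$, which exists by domination). These trees are internally vertex-disjoint, hence $(0,j)$-disjoint with $j=|E(T_1,\dots,T_k)|$, the number of edges lying in at least two of them. By Corollary \ref{prop444}, $|E(G)|\ge k(n-1)-j$, so it suffices to prove $j\le n(k-1)/2$; substituting then gives $|E(G)|\ge k(n-1)-n(k-1)/2=n(k+1)/2-k$.

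The heart of the argument is the inequality $|E(T_1,\dots,T_k)|\le n(k-1)/2$, which I would obtain by a double count. First I would pin down the shape of a shared edge: by the proof of Proposition \ref{megaprop}.i) an inner edge belongs to a single tree, so any edge $uv$ lying in two trees $T_i,T_j$ is a leaf edge of both, and since the inner-vertex sets are pairwise disjoint one checks that $u\in I(T_i)$ and $v\in I(T_j)$, with $v$ a leaf of $T_i$ joined to $u$ and $u$ a leaf of $T_j$ joined to $v$. Now count incidences $(w,T_\ell)$ for which $w$ is a leaf of $T_\ell$ whose unique tree-edge lies in $E(T_1,\dots,T_k)$. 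Each shared edge $uv$ produces exactly the two incidences $(v,T_i)$ and $(u,T_j)$, so the total number of incidences equals $2\,|E(T_1,\dots,T_k)|$. Crucially, if $w$ occurs in an incidence $(w,T_\ell)$ then $w$, being the leaf endpoint of that shared edge in $T_\ell$, is the inner endpoint of the same edge in the other tree containing it; hence $w\in\bigcup_\ell I(T_\ell)$, so a vertex that is inner in no tree contributes $0$. Any remaining vertex is inner in exactly one tree (disjointness) and a leaf of the other $k-1$, contributing at most $k-1$. Summing over the $n$ vertices yields $2\,|E(T_1,\dots,T_k)|\le n(k-1)$, as required.

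Finally, for sharpness take $G=K_{k,k}$, so $n=2k$ and the claimed bound is $n(k+1)/2-k=k^2=|E(K_{k,k})|$; and $d_c(K_{k,k})=k$ is witnessed by the $k$ sets $\{a_\ell,b_\ell\}$ (one endpoint on each side), each connected and dominating, while $d_c(K_{k,k})\le\lfloor 2k/\gamma_c(K_{k,k})\rfloor=k$. I expect the main obstacle to be the structural description of a shared edge together with the bookkeeping that forces every vertex to contribute at most $k-1$ incidences, in particular ruling out any contribution from vertices lying in no dominating set; once that is settled, the double count and the appeal to Corollary \ref{prop444} are routine.
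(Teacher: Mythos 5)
Your argument is correct, but there is nothing in the paper to compare it against: this proposition is stated with a citation to Hartnell and Rall \cite{HAR2001} and the paper supplies no proof of its own. What you have produced is a valid, self-contained derivation of the cited bound from the paper's internal machinery, which is arguably a nicer route than an external citation. The reduction to Corollary \ref{prop444} is sound, and the key inequality $|E(T_1,\ldots,T_k)|\le n(k-1)/2$ is established cleanly: the structural description of a shared edge (exactly one endpoint inner in each of the two trees containing it, and those inner endpoints lie in distinct trees by disjointness of the $I(T_\ell)$) is exactly what Proposition \ref{megaprop}.i) gives, and the double count of leaf--tree incidences with the observation that only vertices of $\bigcup_\ell I(T_\ell)$ can contribute, each at most $k-1$ times, correctly yields $2|E(T_1,\ldots,T_k)|\le n(k-1)$. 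The arithmetic $k(n-1)-n(k-1)/2=n(k+1)/2-k$ and the sharpness computation for $K_{k,k}$ check out. Two cosmetic caveats you should make explicit if you write this up: the structural claim about shared edges uses $n\ge 3$ (for $n\le 2$ the bound is trivial), and the sharpness witness $d_c(K_{k,k})=k$ requires $k\ge 2$, since $d_c(K_{1,1})=2$; neither affects the substance.
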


\subsection{Distribution of the inner vertices}
The following observation illustrates the existence of an $(i,j)$-disjoint spanning tree with possibly less inner vertices than the others.
\begin{obs}
Let $G$ be a graph of order $n$ and let $T_1, \ldots, T_k$ be $(i,j)$-disjoint spanning trees in $G$. There exists a tree $T$ among $T_1, \ldots, T_k$ satisfying $|I(T)|\le \lfloor (n-i)/k\rfloor+i$.
\end{obs}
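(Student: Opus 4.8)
The plan is to exploit the fact that, by definition of $(i,j)$-disjoint spanning trees, the set $I=I(T_1,\ldots,T_k)$ of vertices that are inner in at least two of the trees satisfies $|I|\le i$, and then to apply a pigeonhole argument to the inner vertices lying \emph{outside} $I$. The crucial structural observation is that every vertex $v\notin I$ is an inner vertex of \emph{at most one} tree: if $v$ were inner in two trees it would by definition belong to $I$. Hence the sets $I(T_1)\setminus I,\ldots,I(T_k)\setminus I$ are pairwise disjoint subsets of $V(G)\setminus I$, so $\sum_{\ell=1}^k |I(T_\ell)\setminus I|\le |V(G)\setminus I|=n-|I|$.

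First I would invoke the pigeonhole principle on this sum: some tree $T$ among $T_1,\ldots,T_k$ must satisfy $|I(T)\setminus I|\le\lfloor (n-|I|)/k\rfloor$. Then, since $I(T)=(I(T)\setminus I)\cup(I(T)\cap I)$ and $|I(T)\cap I|\le|I|$, this tree obeys
\[
|I(T)|\le\left\lfloor\frac{n-|I|}{k}\right\rfloor+|I|.
\]
It remains to replace $|I|$ by the weaker bound $i$ and still obtain the claimed inequality $|I(T)|\le\lfloor (n-i)/k\rfloor+i$.

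The main obstacle, and the only genuinely delicate point, is precisely this last replacement: enlarging $|I|$ up to $i$ \emph{decreases} the floor term $\lfloor (n-\cdot)/k\rfloor$, so a naive monotonicity claim appears to point the wrong way. I would resolve this by analysing the function $f(m)=\lfloor (n-m)/k\rfloor+m$ and showing it is non-decreasing in the integer variable $m$. Indeed, incrementing $m$ by $1$ decreases $\lfloor (n-m)/k\rfloor$ by either $0$ or $1$ while adding $1$ to the $+m$ term, so $f(m+1)-f(m)\in\{0,1\}\ge 0$. Consequently $f(|I|)\le f(i)$ whenever $|I|\le i$, which yields $|I(T)|\le f(|I|)\le f(i)=\lfloor (n-i)/k\rfloor+i$ and completes the argument. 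This last monotonicity verification is elementary once stated correctly, so the proof is short; the conceptual content lies entirely in recognizing that vertices outside $I$ contribute to only one tree, allowing the disjointness bound $n-|I|$.
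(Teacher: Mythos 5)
Your argument is correct: the paper states this as an observation without proof, and your pigeonhole argument on the pairwise disjoint sets $I(T_\ell)\setminus I$, followed by the monotonicity check for $m\mapsto\lfloor (n-m)/k\rfloor+m$, is precisely the natural argument the authors evidently had in mind. No gaps.
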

Two sets of vertices $V_1$ and $V_2$ are \emph{balanced} if $||V_1|-|V_2||\le 1$.
We begin by proving that there exists a graph $G$ satisfying $d_c(G)\ge2$ but in which no two disjoint connected dominating sets are balanced.
Let $P_n^{*}$ be the graph constructed by taking one copy of $P_n$, by adding a new vertex $u$ and by adding the edges between $u$ and the vertices of $P_n$. Figure \ref{PnPn} illustrates the graph $P_n^{*}$ for $n=6$.
\begin{figure}[t]
\begin{center}
\begin{tikzpicture}
\draw (0,0) -- (5,0);
\draw (0,0) -- (2.5,1);
\draw (1,0) -- (2.5,1);
\draw (2,0) -- (2.5,1);
\draw (3,0) -- (2.5,1);
\draw (4,0) -- (2.5,1);
\draw (5,0) -- (2.5,1);
\node at (0,0) [circle,draw=black,fill=black,scale=0.7]{};
\node at (1,0) [circle,draw=black,fill=black,scale=0.7]{};
\node at (2,0) [circle,draw=black,fill=black,scale=0.7]{};
\node at (3,0) [circle,draw=black,fill=black,scale=0.7]{};
\node at (4,0) [circle,draw=black,fill=black,scale=0.7]{};
\node at (5,0) [circle,draw=black,fill=black,scale=0.7]{};
\node at (2.5,1) [circle,draw=black,fill=black,scale=0.7]{};
\node at (2.5,1.3) {$u$};
\draw (0+7,0) -- (5+7,0);
\draw (0+7,0) -- (2+7,1);
\draw (1+7,0) -- (2+7,1);
\draw (2+7,0) -- (2+7,1);
\draw (3+7,0) -- (2+7,1);
\draw (4+7,0) -- (2+7,1);
\draw (5+7,0) -- (2+7,1);
\draw (2+7,1) -- (3+7,1);
\draw (0+7,0) -- (3+7,1);
\draw (5+7,0) -- (3+7,1);
\node at (0+7,0) [circle,draw=black,fill=black,scale=0.7]{};
\node at (1+7,0) [circle,draw=black,fill=black,scale=0.7]{};
\node at (2+7,0) [circle,draw=black,fill=black,scale=0.7]{};
\node at (3+7,0) [circle,draw=black,fill=black,scale=0.7]{};
\node at (4+7,0) [circle,draw=black,fill=black,scale=0.7]{};
\node at (5+7,0) [circle,draw=black,fill=black,scale=0.7]{};
\node at (2+7,1) [circle,draw=black,fill=black,scale=0.7]{};
\node at (3+7,1) [circle,draw=black,fill=black,scale=0.7]{};
\node at (2+7,1.3) {$u$};
\node at (3+7,1.3) {$v$};
\end{tikzpicture}
\end{center}
\caption{The graph $P_6^{*}$ (on the left) and the graph $P_6^{-}$ (on the right).}
\label{PnPn}
\end{figure}
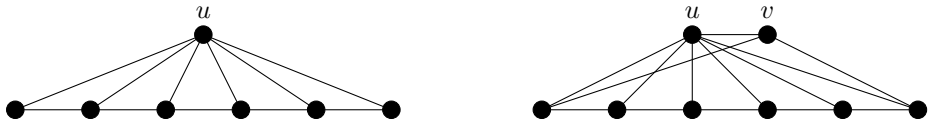

\begin{prop}
Let $n\ge 5$.
For any two disjoint connected dominating sets $D_1$ and $D_2$ in $P_n^{*}$, $||D_1|-|D_2||\ge n-5$.
\end{prop}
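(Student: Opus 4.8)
The plan is to exploit the special role of the apex vertex $u$, which is adjacent to every vertex of the underlying path. Write $v_1,\ldots,v_n$ for the vertices of $P_n$ in path order, so that the only edges of $P_n^{*}$ are the path edges $v_iv_{i+1}$ together with the edges $uv_i$, $1\le i\le n$. The whole argument rests on two elementary facts: a subset of $\{v_1,\ldots,v_n\}$ that is connected in $P_n^{*}$ but avoids $u$ must be a contiguous subpath (since without $u$ the only available edges are path edges), and such a subpath dominates every $v_i$ only if it covers all path vertices except possibly the two endpoints.

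First I would show that \emph{exactly} one of $D_1,D_2$ contains $u$. Disjointness gives that $u$ lies in at most one of them. If $u$ belonged to neither, then $D_1,D_2\subseteq\{v_1,\ldots,v_n\}$, so by the first fact each $D_\ell$ is a contiguous interval $\{v_{a_\ell},\ldots,v_{b_\ell}\}$; a path vertex is dominated by such an interval only if it or a path-neighbour lies in it, so covering all of $v_1,\ldots,v_n$ forces $a_\ell\le 2$ and $b_\ell\ge n-1$, i.e.\ $\{v_2,\ldots,v_{n-1}\}\subseteq D_\ell$ for $\ell=1,2$. Since $n\ge 5$ makes $\{v_2,\ldots,v_{n-1}\}$ nonempty, this contradicts $D_1\cap D_2=\emptyset$. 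Hence $u$ lies in exactly one set, and I relabel so that $u\in D_1$ (the conclusion is symmetric in $D_1,D_2$).

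Next I would bound the two cardinalities. Because $u\notin D_2$, the same reasoning shows $D_2$ is a contiguous interval that must contain $\{v_2,\ldots,v_{n-1}\}$, whence $|D_2|\ge n-2$. On the other hand $D_1$ consists of $u$ together with some path vertices; since $\{v_2,\ldots,v_{n-1}\}\subseteq D_2$ and $D_1\cap D_2=\emptyset$, the only path vertices available to $D_1$ are $v_1$ and $v_n$, so $|D_1|\le 3$. Combining these, $|D_2|-|D_1|\ge (n-2)-3=n-5$, and as $n\ge 5$ gives $|D_2|\ge|D_1|$, this yields $\big||D_1|-|D_2|\big|\ge n-5$.

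I do not anticipate a real obstacle: the proof is a short case analysis driven entirely by the two structural facts above. The only point needing care is ruling out that $u$ is absent from both sets, which is precisely where the hypothesis $n\ge 5$ enters (it guarantees $\{v_2,\ldots,v_{n-1}\}\neq\emptyset$). Tightness of the bound is witnessed by $D_1=\{u,v_1,v_n\}$ and $D_2=\{v_2,\ldots,v_{n-1}\}$, which also confirms that the extremal configuration underlying the estimate is actually attained.
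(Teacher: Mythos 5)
Your proof is correct and follows essentially the same route as the paper's: use disjointness to place $u$ outside one of the two sets, observe that a connected dominating set avoiding $u$ must be a contiguous subpath containing all path vertices except the extremities (hence has size at least $n-2$), and conclude that the other set has size at most $3$. The extra step verifying that $u$ lies in exactly one of the sets is a harmless elaboration not needed for the bound.
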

\begin{proof}
Suppose without loss of generality that $u\notin D_1$. Since $D_1$ should be connected, it should contain consecutive vertices of $P_n$. Moreover, since $D_1$ should be dominating, it should contain every vertex of $P_n$, except its extremities. Thus, $|D_1|\ge n-2$ and consequently $|D_2|\le 3$. Therefore, we have $||D_1|-|D_2||\ge n-5$.
\end{proof}

Note that the graph $P_n^{*}$ does not contain two completely independent spanning trees. Thus, it could be true that every graph containing two completely independent spanning trees contains two completely independent spanning trees $T_1$ and $T_2$ such that $||I(T_1)|-|I(T_2)||\le 1$. However, the following proposition illustrates that it is not the case. Let $P_n^{+}$ be the graph obtained by taking one copy of $P^{*}_n$, by adding a new vertex $v$ and by adding the edge $uv$ and the edges between $v$ and the extremities of $P_n$, $u$ being the vertex of maximal degree in $P_n^{*}$, $P_n$ being the induced path of $n$ vertices in $P_n^{*}$ obtained by removing $u$. Figure \ref{PnPn} illustrates the graph $P_n^{+}$ for $n=6$.
\begin{prop}
Let $n\ge 3$.
For any two completely independent spanning trees $T_1$ and $T_2$ in $P_n^{+}$, $||I(T_1)|-|I(T_2)||\ge n-2$.
\end{prop}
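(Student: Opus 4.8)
The plan is to exploit a tight edge count. Write $w_1,\dots,w_n$ for the vertices of the path, $u$ for the universal vertex of $P_n^{*}$, and $v$ for the added vertex, so that $P_n^{+}$ has $n+2$ vertices. Each $w_i$ and $v$ has degree $3$, while $u$ has degree $n+1$, which gives $|E(P_n^{+})| = (n-1)+n+1+2 = 2(n+1)$. Since two edge-disjoint spanning trees of a graph on $n+2$ vertices together use exactly $2(n+1)$ edges, the trees $T_1$ and $T_2$ must \emph{partition} $E(P_n^{+})$; hence $\deg_{T_1}(x)+\deg_{T_2}(x)=\deg_{P_n^{+}}(x)$ for every vertex $x$. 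This identity is the observation that drives the whole argument, converting the completely-independent condition into a rigid degree constraint at every vertex.

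First I would show that every vertex is an inner vertex of exactly one tree, so that $|I(T_1)|+|I(T_2)|=n+2$. For a degree-$3$ vertex $x$ the identity above, together with the fact that both tree-degrees are at least $1$ (the trees span), forces $(\deg_{T_1}(x),\deg_{T_2}(x))\in\{(1,2),(2,1)\}$, so $x$ is inner in precisely one tree. For $u$: if it were a leaf of both trees its degrees would sum to $2\neq n+1$, so it is inner in at least one tree, and since $T_1,T_2$ are internally vertex-disjoint it is inner in at most one. With $|I(T_1)|+|I(T_2)|=n+2$ in hand, the goal $||I(T_1)|-|I(T_2)||\ge n-2$ reduces to showing that \emph{one} of the two trees has at most $2$ inner vertices.

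Second, I would assume without loss of generality that $u\in I(T_1)$; then $u$ is a leaf of $T_2$, so $\deg_{T_2}(u)=1$ and therefore $\deg_{T_1}(u)=n$. Now I analyze $T_1-u$: deleting a degree-$d$ vertex from a tree leaves exactly $d$ components, so $T_1-u$ is a forest on $n+1$ vertices with $n$ components, hence with exactly one edge. Consequently $T_1$ has a unique edge $zy$ not incident to $u$, where $z$ is the single $P_n^{+}$-neighbor of $u$ that $T_1$ does not join to $u$ and $y\in N_{T_1}(u)$. Every neighbor of $u$ other than $y$ is isolated in $T_1-u$ and so has degree $1$ in $T_1$, as does $z$; thus the only inner vertices of $T_1$ are $u$ and $y$, giving $|I(T_1)|=2$. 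Then $|I(T_2)|=n$ and $||I(T_1)|-|I(T_2)||=n-2$, which proves the bound and shows it is tight.

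The only mild obstacles I expect are bookkeeping rather than conceptual: making airtight that the trees span (so all tree-degrees are $\ge 1$) and that the edge count is \emph{exactly} $2(n+1)$, which is what forces the edge-partition and hence the degree identity; and verifying that the unique non-$u$ edge $zy$ produces exactly one new inner vertex (namely $y$, of tree-degree $2$) rather than zero or two. Both checks are short, and all the content sits in the edge-partition identity.
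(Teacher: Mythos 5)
Your proof is correct, and it takes a genuinely different route from the paper's. The paper argues structurally via the CIST-partition characterization: assuming without loss of generality that $u\notin I(T_1)$, it uses the facts that $I(T_1)$ must induce a connected subgraph and be a dominating set to rule out all candidate partitions except $I(T_2)=\{u,v\}$ and $I(T_1)=V(P_n^{+})\setminus\{u,v\}$. You instead exploit the tight edge count $|E(P_n^{+})|=2(n+1)=2(|V|-1)$, which forces the two edge-disjoint spanning trees to partition the edge set and hence pins down $\deg_{T_1}(x)+\deg_{T_2}(x)$ at every vertex; from this you deduce that every vertex is inner in exactly one tree (so $|I(T_1)|+|I(T_2)|=n+2$) and that the tree in which $u$ is inner is a near-star with exactly two inner vertices. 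Your argument is more elementary and self-contained (it does not invoke the partition theorem or any dominating-set reasoning), it yields the exact value $n-2$ rather than just the lower bound, and its degree-counting engine would apply to any graph with exactly $2(|V|-1)$ edges; what it does not recover, and the paper's version does, is the explicit identification of the small inner-vertex set as $\{u,v\}$. All the individual steps check out: the degree sum at each degree-$3$ vertex is $3$ with both summands at least $1$, the vertex $u$ cannot be a leaf of both trees since $n+1>2$, and deleting the degree-$n$ vertex $u$ from $T_1$ leaves a forest on $n+1$ vertices with $n$ components and hence a single edge, whose analysis correctly gives $|I(T_1)|=2$.
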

\begin{proof}
First, observe that there exist two completely independent spanning trees in $P_n^{+}$ since $\{u,v\}$ and $V(P_n^{+})\setminus \{u,v\}$ is a $0$-CIST-partition.
Now, suppose there exist two completely independent spanning trees $T_1$ and $T_2$ and suppose without loss of generality that $u\notin I(T_1)$. Since the graph induced by the vertices of $I(T_1)$ should be connected, it should contain consecutive vertices of $P_n$. Moreover, $I(T_1)$ should be dominating set. Since either $\{u\}$ and the subsets of $V(P_n^{+})\setminus \{u\}$ or $\{u,v\}$ and the proper subsets of $V(P_n^{+})\setminus \{u,v\}$ do not form a $0$-CIST partition, we have $I(T_1)=V(P_n^{+})\setminus \{u,v\}$ and $I(T_2)=\{u,v\}$. Therefore, we have $||I(T_1)|-|I(T_2)||\ge n-2$.
\end{proof}
Even if there exist graphs only containing two non-balanced disjoint connected dominating sets, it could be interesting to find classes of graphs for which there always exist two disjoint connected dominating sets which are balanced. For example, the class of graphs with minimum degree at least $|V(G)|/2$, is such a class \cite{HO2016}.
\section{$(i,j)$-disjoint spanning trees in some simple classes of graphs}
\subsection{Square of graphs}
The square of a graph $G$, denoted by $G^2$, is the graph obtained from $G$ by adding edges between every two vertices $u$ and $v$ of $G$ with $d_G(u,v)=2$.
Araki \cite{AR2014} has studied the square of graphs and has proven that there exists a tree $T$ such that there are no two completely independent spanning trees in $T^2$ and that in the square of every $2$-connected graph, there are two completely independent spanning trees. Moreover, the family of trees such there are no two completely independent spanning trees in $T^2$ has been determined. We begin this section by proving that there exist two $(0,1)$-disjoint spanning trees in the square of every graph.

\begin{prop}
Let $G$ be graph.
There exist two $(0,1)$-disjoint spanning trees in $G^2$.
\end{prop}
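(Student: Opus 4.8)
The plan is to build a $1$-CIST-partition of $G^2$ into two sets and then invoke Theorem~\ref{theopart1} with $k=2$ and $\ell=1$, which immediately yields the two $(0,1)$-disjoint spanning trees. (I treat $G$ as having order at least $2$; for the single edge $P_2$ one may simply take $T_1=T_2=P_2$, which are $(0,1)$-disjoint.) First I would fix an arbitrary spanning tree $T$ of $G$ and root it at a vertex $r$, writing $p(v)$ for the parent of a non-root vertex $v$. I would then partition $V(G)$ according to the parity of the depth in $T$: let $V_1$ be the set of vertices at even depth and $V_2$ the set of vertices at odd depth. Since $r\in V_1$ and the children of $r$ lie in $V_2$, both parts are nonempty whenever $G$ has order at least $2$.

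The first thing to check is that $G^2[V_1]$ and $G^2[V_2]$ are connected (condition~i) of an $\ell$-CIST-partition). The key observation is that whenever $v$ has a grandparent $g=p(p(v))$ in $T$, we have $1\le d_G(v,g)\le d_T(v,g)=2$, so $vg\in E(G^2)$. Hence every vertex of $V_1$ at depth at least $2$ is joined in $G^2$ to its shallower grandparent, which is again in $V_1$; following these grandparent links reaches $r$, so $G^2[V_1]$ is connected. For $V_2$, the same grandparent links connect every odd-depth vertex down to a vertex of depth $1$, while any two vertices of depth $1$ are children of $r$ and hence at $T$-distance $2$, so they are pairwise adjacent in $G^2$; thus $G^2[V_2]$ is connected as well.

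Conditions~ii) and~iii) are then handled simultaneously by a single remark: every edge of $T$ joins a vertex to its parent, i.e. two vertices of opposite parity, so $E(T)\subseteq E(B(V_1,V_2))$. Consequently $B(V_1,V_2)$ contains a spanning tree of $G$ and is therefore connected and spanning; in particular it has no isolated vertex (condition~ii), and it consists of a single connected component. That component is either a tree (giving $c_{1,2}=1$) or contains a cycle (giving $c_{1,2}=0$), so in all cases $\sum_{1\le i<j\le 2}c_{i,j}=c_{1,2}\le 1=\ell$ (condition~iii). This last point, the bounding of the number of tree-components, is the only step that looks delicate, but it dissolves once one notices that $B(V_1,V_2)$ is forced to be connected by the presence of $T$. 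With all three conditions verified, $V_1,V_2$ form a $1$-CIST-partition of $G^2$, and Theorem~\ref{theopart1} produces two $(0,1)$-disjoint spanning trees of $G^2$.
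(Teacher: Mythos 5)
Your proof is correct and follows essentially the same route as the paper: both take the bipartition of a spanning tree $T$ of $G$ (your even/odd depth classes are exactly that bipartition), verify it is a $1$-CIST-partition of $G^2$, and invoke Theorem~\ref{theopart1}. You simply spell out the connectivity of $G^2[V_1]$, $G^2[V_2]$ and of $B(V_1,V_2)$ that the paper leaves implicit.
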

\begin{proof}
Let $T$ be a spanning tree of $G$ let $V_1$ and $V_2$ be a bipartition of $T$.
The sets $V_1$ and $V_2$ form an $1$-CIST-partition of $G^2$ since both $G[V_1]$ and $G[V_2]$ are connected in $G^2$ and since $B(V_1,V_2)$ is a connected graph (which can be a tree in the case $G$ is a tree).
Thus, by Theorem \ref{theopart1}, there exist two $(0,1)$-disjoint spanning trees in $G^2$.
\end{proof}
We finish the section by determining which square of graph contains two completely independent spanning trees (the case of trees has already been treated \cite{AR2014}).

\begin{prop}
Let $G$ be a connected graph which is not a tree. There exist two completely independent spanning trees in $G^2$.
\end{prop}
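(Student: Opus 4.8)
The plan is to use the equivalence established in Theorem \ref{theopart1}: since two completely independent spanning trees are exactly $(0,0)$-disjoint spanning trees, it suffices to exhibit a $0$-CIST-partition of $G^2$ into two sets $V_1,V_2$. Recall that this requires $G^2[V_1]$ and $G^2[V_2]$ to be connected, $B(V_1,V_2)$ (taken in $G^2$) to have no isolated vertex, and $c_{1,2}=0$, i.e.\ no connected component of $B(V_1,V_2)$ is a tree. My construction is to take an arbitrary spanning tree $T$ of $G$, root it, and let $V_1,V_2$ be the bipartition of $V(G)$ according to the parity of the depth in $T$.

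First I would check that, with this choice, three of the ingredients come for free. For $u,u'\in V_1$ the tree path between them has even length and alternates between $V_1$ and $V_2$, so the successive $V_1$-vertices along it are at $T$-distance $2$, hence adjacent in $G^2$; this shows $G^2[V_1]$ is connected, and symmetrically $G^2[V_2]$. Moreover every tree edge joins a parent to a child, i.e.\ two vertices of opposite parity, so $E(T)\subseteq B(V_1,V_2)$. As $T$ is spanning, $B(V_1,V_2)$ is connected and has no isolated vertex. Since $B(V_1,V_2)$ is connected, the condition $c_{1,2}=0$ is equivalent to $B(V_1,V_2)$ \emph{not} being a tree, i.e.\ to the existence of \emph{one} edge of $G^2$ crossing the bipartition and lying outside $E(T)$. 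Thus the entire proposition reduces to producing such an extra crossing edge.

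The next step is to produce that edge. If $G$ is bipartite, then the depth-parity bipartition of $T$ coincides with the (unique) bipartition of the connected graph $G$, so \emph{every} edge of $G$ crosses; as $G$ is not a tree it has a non-tree edge, which furnishes the required extra crossing edge. If $G$ is not bipartite, some edge $xy$ is monochromatic, say $x,y\in V_1$, and it is necessarily a non-tree edge. Here I exploit the square: a distance-$2$ shortcut through $x$ yields a crossing $G^2$-edge. Concretely, if $x$ has a tree-child $c$ (so $c\in V_2$), then $d_G(y,c)\le d_G(y,x)+d_G(x,c)=2$, so $yc\in E(G^2)$ crosses the bipartition, and $yc\notin E(T)$ because the parent of $c$ is $x\neq y$; this is the desired edge. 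The same works using a tree-neighbor $w$ of $x$, giving the crossing edge $wy$, unless $w$ happens to be a tree-neighbor of $y$ as well.

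The main obstacle is the degenerate configuration in which \emph{every} such shortcut is itself a tree edge. Tracing this through forces $x$ and $y$ to be leaves of $T$ sharing their unique tree-neighbor $w$, and, iterating the argument over all monochromatic edges, forces $w$ to dominate the remaining vertices within distance $2$; this collapses $G^2$ to the complete graph $K_n$. For $n\ge 4$ one finishes by invoking the existence of two completely independent spanning trees in complete graphs (established later for $K_n$), so the construction succeeds in every case except the genuinely exceptional $K_3$, which must be kept in mind as the one small graph where $G^2=K_3$ admits no two edge-disjoint spanning trees. I therefore expect the delicate part of the write-up to be precisely this case analysis showing that the failure of the shortcut argument forces $G^2$ to be complete.
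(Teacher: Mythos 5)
Your reduction of the problem to exhibiting a single crossing edge of $G^2$ outside $E(T)$ is sound, the bipartite case is correct, and you are right that $C_3$ is a genuine exception to the statement as written ($C_3^2=K_3$ has too few edges for two edge-disjoint spanning trees). The gap is in your degenerate case. Failure of all shortcuts does force $x$ and $y$ to be leaves of $T$ hanging from a common vertex $w$, but the claim that this ``collapses $G^2$ to $K_n$'' is false, and with it the whole fallback. Take $G$ to be a triangle on $x,y,w$ with a pendant path $w,a,b$, and take the spanning tree $T$ with edges $xw$, $yw$, $wa$, $ab$. The parity bipartition is $\{w,b\}$ versus $\{x,y,a\}$; the only monochromatic edge of $G$ is $xy$, with $x,y$ leaves at $w$ as you predict. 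Every crossing pair is either joined by a tree edge or at distance $3$ (namely $xb$ and $yb$), so $B(V_1,V_2)$ computed in $G^2$ is exactly $T$: no extra crossing edge exists, yet $G^2\neq K_5$ since $xb\notin E(G^2)$. So with an arbitrary spanning tree your construction genuinely fails (it only yields a $1$-CIST-partition), even though the proposition holds for this $G$ (for instance $\{w,x\}$, $\{y,a,b\}$ is a $0$-CIST-partition of $G^2$). The error in the ``tracing through'' is that $w$ dominating everything within distance $2$ bounds the diameter of $G$ by $4$, not by $2$.

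The paper sidesteps exactly this by not taking $T$ arbitrary: it fixes an induced cycle $C$, deletes an edge $uw$ of $C$, takes $T$ to be a spanning tree of $G-uw$, and then reads off the extra crossing edge from the parity of $|C|$ (using, for odd $C$, a neighbour $v$ of $C$ lying outside $C$). In other words, the tree is built around a prescribed non-tree edge so that the required chord of $B(V_1,V_2)$ is guaranteed in advance. If you want to salvage your write-up, replace ``arbitrary spanning tree'' by such a choice; the degenerate configuration cannot be argued away, since it really occurs for graphs whose square is far from complete.
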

\begin{proof}
Since $G$ is not a tree, there exists an induced cycle $C$ in $G$. Let $u$ be a vertex of $C$ which has a neighbor $v$ not belonging to $C$. If such vertex does not exist, then $G$ is cycle and $G^2$ contains two completely independent spanning trees \cite{AR2014}. Let $uw$ be an edge of $C$, let $T$ be a spanning tree of $G-uw$ and let $V_1$ and $V_2$ be a bipartition of $T$. Remark that  both $G[V_1]$ and $G[V_2]$ are connected and that every edge of $T$ belongs to $B(V_1,V_2)$. Our goal is to prove that there is one more edge in $B(V_1,V_2)$, i.e, that $B(V_1,V_2)$ is connected and is not a tree. First, if $C$ is of even length, then $u\in V_1$ and $w\in V_2$ (or $u\in V_2$ and $w\in V_1$, by symmetry) and $uw\in E(B(V_1,V_2))\setminus E(T)$. Second if $C$ is of odd length, then $u\in V_1$, $w\in V_1$ and $v\in V_2$ (or $v\in V_1$, $u\in V_2$ and $w\in V_2$, by symmetry) and $vw\in E(B(V_1,V_2))\setminus E(T)$. Thus, by Theorem \ref{theopart1}, there exist two completely independent spanning trees in $G^2$.
\end{proof}
Note that the square of a star (a tree of diameter at most 2) is a clique and can contain an arbitrary large number of completely independent spanning trees (this number depends on the degree of the central vertex). Thus, it could be interesting to determine which square of graph contains $k$ completely independent spanning trees for $k>2$.

\subsection{$k$-connected interval graph}
We begin by recalling the definition of a path-decomposition of a graph $G$.
\begin{de}
Let $G$ be a graph. A sequence of subsets $X_1,\ldots, X_\ell$ of vertices of $G$ is a path-decomposition of $G$ if the two following properties are satisfied:
\begin{enumerate}
\item[i)] for each edge $e$ of $G$, there exists an integer $i$ such that both extremities of $e$ belong to the subset $X_i$; 
\item[ii)] for every three integers $1\le i \le j \le k\le \ell$, $X_i \cap X_k \subseteq X_j$.
\end{enumerate}
\end{de}
An \emph{interval graph} is the intersection graph of a family of intervals of the real line. We recall that an interval graph has a path-decomposition $X_1,\ldots, X_\ell$ for which each $X_i$, $1\le i\le \ell$, forms a maximal clique in $G$. 
We also recall that for a $k$-connected interval graph $G$ with path-decomposition $X_1,\ldots, X_\ell$, we have $|X_i\cap X_{i+1}|\ge k$, for every integer $i$, $1\le i<\ell$ (otherwise $X_i\cap X_{i+1}$ would be a cut set of order less than $k$). The following property is true for $k$-connected interval graphs.
\begin{theo}
Let $k\ge 2$ be an integer.
Every $k$-connected interval graph $G$ satisfies $d_c(G)\ge k$. 
\end{theo}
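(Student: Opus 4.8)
The plan is to exploit the path-decomposition of $G$ into maximal cliques $X_1,\ldots,X_\ell$ and to reduce connected domination to a purely combinatorial transversal problem on the \emph{separators} $S_i=X_i\cap X_{i+1}$, $1\le i\le \ell-1$. Recall that $|S_i|\ge k$ for a $k$-connected interval graph, and that, by property ii) of a path-decomposition, the set of indices of cliques containing a fixed vertex $v$ is an interval of $\{1,\ldots,\ell\}$; consequently the set of indices $i$ with $v\in S_i$ is also an interval of $\{1,\ldots,\ell-1\}$. If $\ell=1$ then $G$ is a complete graph of order at least $k+1$ and $d_c(G)\ge k$ holds trivially, so I would assume $\ell\ge 2$.

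First I would establish the reduction lemma: any $D\subseteq V(G)$ with $D\cap S_i\neq\emptyset$ for every $i$ is a connected dominating set. For domination, each vertex $w$ lies in some clique $X_c$; the separator $S_c$ (or $S_{\ell-1}$ when $c=\ell$) is contained in $X_c$, so a vertex of $D$ in that separator lies in $X_c$ and is therefore adjacent to $w$. For connectivity, fixing one vertex $d_i\in D\cap S_i$ for each $i$ yields a \emph{spine} $d_1,\ldots,d_{\ell-1}$ in which $d_i$ and $d_{i+1}$ both belong to $X_{i+1}$ and are hence adjacent; moreover every other vertex of $D$ lies in some $X_c$ and is adjacent to the corresponding spine vertex by the same clique argument, so $G[D]$ is connected.

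Then I would build $k$ pairwise disjoint transversals $D_1,\ldots,D_k$ of the family $(S_i)$ by a left-to-right greedy procedure. Since $|S_1|\ge k$, I start by giving each $D_t$ a distinct \emph{current representative} $v_t\in S_1$. Passing from $S_i$ to $S_{i+1}$, I keep $v_t$ if $v_t\in S_{i+1}$, and otherwise add to $D_t$ a fresh vertex of $S_{i+1}$, which becomes the new $v_t$. The invariant is that after processing $S_i$ each $D_t$ meets $S_i$ through its current representative and the $D_t$ remain pairwise disjoint; at the end each $D_t$ meets every $S_i$, so by the reduction lemma each $D_t$ is a connected dominating set and $d_c(G)\ge k$ follows.

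The main obstacle is the counting that guarantees enough fresh vertices are always available. Suppose $r$ of the $k$ current representatives lie in $S_i\cap S_{i+1}$; these are exactly the sets that reuse, so $k-r$ sets need a fresh vertex of $S_{i+1}$. The crucial point, where the interval structure is essential, is that a previously discarded representative cannot belong to $S_{i+1}$: a vertex discarded at some step $h$ left the separator sequence at index $h$, yet the indices of separators containing it form an interval, so it cannot reappear at the later index $i+1$. Hence the only already-used vertices meeting $S_{i+1}$ are precisely the $r$ reused (distinct, by disjointness) representatives, leaving $|S_{i+1}|-r\ge k-r$ vertices free, which is exactly enough to assign distinct new representatives to the $k-r$ remaining sets while preserving disjointness. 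An induction on $i$ then completes the construction and the proof.
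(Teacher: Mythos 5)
Your proof is correct and follows essentially the same route as the paper: both take the clique path-decomposition, observe that each separator $X_i\cap X_{i+1}$ has size at least $k$, and build the $k$ sets by a left-to-right greedy choice of representatives in each separator, reusing a representative when it persists into the next separator. Your write-up is in fact more complete than the paper's, since you explicitly prove the transversal-implies-connected-dominating-set lemma and justify the availability of fresh vertices via the interval property of path-decompositions, both of which the paper leaves implicit.
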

\begin{proof}Let $G$ be a $k$-connected interval graph.
Let $X_1,\ldots, X_{\ell}$ be a path-decomposition of $G$, for which every $X_i$ forms a maximal clique.
If $\ell=1$, then $G$ is a $k$-connected complete graph, i.e., $G=K_n$ for $n\ge k$. Thus $G$ satisfies $d_c(G)=n\ge k$.
Hence, suppose $\ell\ge2$.
Our goal is to construct disjoint connected dominating sets $D_1,\ldots, D_k$ by setting $D_i=\cup_{1\le j\le \ell-1} \{x_i^j\}$, for $1\le i\le k$.

By hypothesis, $|X_1\cap X_2|\ge k$, and there exist $k$ different vertices $x_1^1,\ldots, x_k^1 \in X_1\cap X_2$ forming $k$ disjoint connected dominating sets on the graph $G[X_1\cup X_2]$. We set $D_i=\{x_i^1\}$, for every integer $i$, $1\le i\le k$.
Suppose $\ell>n$ and that, by induction, that we have already determined $x_i^{1},\ldots, x_i^{n-1}$, for every integer $i$, $1\le i\le k$ and that $D_1,\ldots, D_k$ are disjoint connected dominating sets on the graph $G[X_1\cup \ldots \cup X_{n}]$, for $D_i=\cup_{1\le j\le n-1} \{x_i^j\}$. Now our goal is to construct disjoint connected dominating sets on the graph $G[X_1\cup \ldots \cup X_{n+1}]$.

Let $i$ be an integer, $1\le i\le k$. If $X_{n}\cap X_{n+1}\cap \{x_i^{j}\}\neq\emptyset $, then we set $x_i^{n}= x_i^{n-1    }$, otherwise we set to $x_i^{n}$ a vertex not in $(X_{n}\cap X_{n+1}\cap\{x_1^{n},\ldots,x_k^{n}\})\setminus \{x_i^{n-1}\} $.
Such a vertex exists since otherwise it would imply that $|X_{n}\cap X_{n+1}|<k$.
Finally, the sets $D_1=\cup_{1\le j\le n} \{x_1^j\},\ldots,D_k=\cup_{1\le j\le n} \{x_k^j\}$ are disjoint connected dominating sets on the graph $G[X_1\cup \ldots \cup X_{n+1}]$. Consequently, by induction, we can construct $k$ disjoint connected dominating sets on the graph $G$.
\end{proof}
The previous theorem can not be generalized to chordal graphs since there exist $k$-connected chordal graphs, for $k\ge 2$, which do not contain two disjoint connected dominating sets \cite{FE2011}.
\subsection{Complete graphs}
By $\text{dst}_{i,j}(G)$ we denote the maximum number of $(i,j)$-disjoint spanning trees in $G$.
Remark that there are $n$ disjoint connected dominating sets in $K_n$ and that there are $\lfloor n/2 \rfloor$ completely independent spanning trees in $K_n$ \cite{MA2015}.
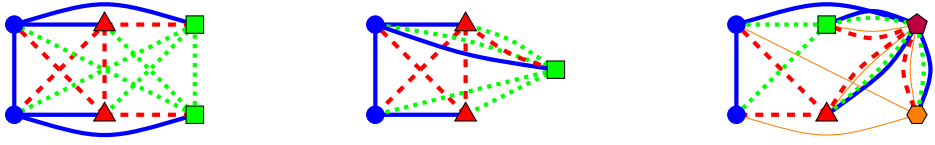
\begin{figure}[t]
\begin{center}
\begin{tikzpicture}[scale=1.2]
\draw[ultra thick,color=blue] (0,0) -- (0,1);
\draw[ultra thick,color=blue] (0,0) -- (1,0);
\draw[ultra thick,color=blue] (0,1) -- (1,1);
\draw[ultra thick,color=blue] (0,0) .. controls (1,-0.3) .. (2,0);
\draw[ultra thick,color=blue] (0,1) .. controls (1,1.3) .. (2,1);
\draw[ultra thick, style=dashed,color=red] (1,0) -- (1,1);
\draw[ultra thick, style=dashed,color=red] (0,0) -- (1,1);
\draw[ultra thick, style=dashed,color=red] (1,0) -- (0,1);
\draw[ultra thick, style=dashed,color=red] (1,0) -- (2,0);
\draw[ultra thick, style=dashed,color=red] (1,1) -- (2,1);
\draw[ultra thick, style=dotted,color=green] (2,0) -- (2,1);
\draw[ultra thick, style=dotted,color=green] (2,0) -- (1,1);
\draw[ultra thick, style=dotted,color=green] (2,0) -- (0,1);
\draw[ultra thick, style=dotted,color=green] (2,1) -- (1,0);
\draw[ultra thick, style=dotted,color=green] (2,1) -- (0,0);
\node at (0,0) [circle,draw=blue,fill=blue,scale=0.7]{};
\node at (1,0) [regular polygon, regular polygon sides=3,draw=black,fill=red,scale=0.5]{};
\node at (2,0) [regular polygon, regular polygon sides=4,draw=black,fill=green,scale=0.7]{};
\node at (0,1) [circle,draw=blue,fill=blue,scale=0.7]{};
\node at (1,1) [regular polygon, regular polygon sides=3,draw=black,fill=red,scale=0.5]{};
\node at (2,1) [regular polygon, regular polygon sides=4,draw=black,fill=green,scale=0.7]{};

\draw[ultra thick,color=blue] (0+4,0) -- (0+4,1);
\draw[ultra thick,color=blue] (0+4,0) -- (1+4,0);
\draw[ultra thick,color=blue] (0+4,1) -- (1+4,1);
\draw[ultra thick, style=dashed,color=red] (1+4,0) -- (1+4,1);
\draw[ultra thick, style=dashed,color=red] (0+4,0) -- (1+4,1);
\draw[ultra thick, style=dashed,color=red] (1+4,0) -- (0+4,1);
\draw[ultra thick, style=dotted,color=green] (2+4,0.5) -- (0+4,0);
\draw[ultra thick, style=dotted,color=green] (2+4,0.5) -- (1+4,0);
\draw[ultra thick, style=dotted,color=green] (2+4,0.5) .. controls (1+4,0.85).. (0+4,1);
\draw[ultra thick, style=dotted,color=green] (2+4,0.5) .. controls (1.5+4,0.85).. (1+4,1);
\draw[ultra thick,color=blue] (2+4,0.5) .. controls (1+4,0.65).. (0+4,1);
\draw[ultra thick, style=dashed,color=red] (2+4,0.5) .. controls (1.5+4,0.65).. (1+4,1);
\node at (0+4,0) [circle,draw=blue,fill=blue,scale=0.7]{};
\node at (1+4,0) [regular polygon, regular polygon sides=3,draw=black,fill=red,scale=0.5]{};
\node at (2+4,0.5) [regular polygon, regular polygon sides=4,draw=black,fill=green,scale=0.7]{};
\node at (0+4,1) [circle,draw=blue,fill=blue,scale=0.7]{};
\node at (1+4,1) [regular polygon, regular polygon sides=3,draw=black,fill=red,scale=0.5]{};

\draw[ultra thick,color=blue] (0+8,0) -- (0+8,1);
\draw[ultra thick,color=blue] (2+8,1) .. controls (1+8,1.3) .. (0+8,1);
\draw[ultra thick, style=dashed,color=red] (1+8,0) -- (0+8,0);
\draw[ultra thick, style=dashed,color=red] (1+8,0) -- (0+8,1);
\draw[ultra thick, style=dotted,color=green] (1+8,1) -- (0+8,0);
\draw[ultra thick, style=dotted,color=green] (1+8,1) -- (0+8,1);
\draw[color=orange] (2+8,0) .. controls (1+8,-0.3) .. (0+8,0);
\draw[color=orange] (2+8,0) -- (0+8,1);

\draw[ultra thick,color=blue] (2+8,1) .. controls (1.5+8,1.2).. (1+8,1);
\draw[ultra thick, style=dotted,color=green] (2+8,1) .. controls (1.5+8,1.1).. (1+8,1);
\draw[color=orange] (2+8,1) .. controls (1.5+8,0.9).. (1+8,1);
\draw[ultra thick, style=dashed,color=red] (2+8,1) .. controls (1.5+8,0.8).. (1+8,1);
\draw[ultra thick,color=blue] (2+8,1) .. controls (2.2+8,0.5).. (2+8,0);
\draw[ultra thick, style=dotted,color=green] (2+8,1) .. controls (2.1+8,0.5).. (2+8,0);
\draw[color=orange] (2+8,1) .. controls (1.9+8,0.5).. (2+8,0);
\draw[ultra thick, style=dashed,color=red] (2+8,1) .. controls (1.8+8,0.5).. (2+8,0);
\draw[ultra thick,color=blue] (2+8,1) .. controls (1.7+8,0.5).. (1+8,0);
\draw[ultra thick, style=dotted,color=green] (2+8,1) .. controls (1.6+8,0.5).. (1+8,0);
\draw[color=orange] (2+8,1) .. controls (1.4+8,0.5).. (1+8,0);
\draw[ultra thick, style=dashed,color=red] (2+8,1) .. controls (1.3+8,0.5).. (1+8,0);

\node at (0+8,0) [circle,draw=blue,fill=blue,scale=0.7]{};
\node at (1+8,0) [regular polygon, regular polygon sides=3,draw=black,fill=red,scale=0.5]{};
\node at (2+8,0) [regular polygon, regular polygon sides=6,draw=black,fill=orange,scale=0.7]{};
\node at (0+8,1) [circle,draw=blue,fill=blue,scale=0.7]{};
\node at (1+8,1) [regular polygon, regular polygon sides=4,draw=black,fill=green,scale=0.7]{};
\node at (2+8,1)  [regular polygon, regular polygon sides=5,draw=black,fill=purple,scale=0.7]{};
\end{tikzpicture}
\end{center}
\caption{Three completely independent spanning trees of $K_6$ (on the left), three $(0,2)$-disjoint spanning trees of $K_5$ (on the middle) and four $(1,3)$-disjoint spanning trees of $K_6$ (on the right) (thin line: edge of $T_4$; hexagon: $I(T_4)$; pentagon: $I(T_1)\cap I(T_2)\cap I(T_3)\cap I(T_4)$).}
\label{descliques}
\end{figure}

We give the following intermediate result about $(0,\ell)$-disjoint spanning trees.
\begin{prop}
Let $n$ be an integer. We have $\text{dst}_{0,\ell}(K_n)=\lfloor n/2\rfloor+ \min(\lfloor \ell/(n-1) +1_{\text{odd}}(n) /2 \rfloor, \lceil n/2 \rceil)$, where $1_{\text{odd}}(n)=1$ if $n$ is odd and $0$ otherwise.
\end{prop}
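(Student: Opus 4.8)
The plan is to reduce everything to the partition characterization of Theorem \ref{theopart1}: $K_n$ admits $k$ $(0,\ell)$-disjoint spanning trees if and only if it admits an $\ell$-CIST-partition into $k$ sets, so $\text{dst}_{0,\ell}(K_n)$ is the largest $k$ for which such a partition exists. I would first observe that in $K_n$ the two ``qualitative'' conditions of an $\ell$-CIST-partition are automatic: $G[V_i]$ is always connected, and since $B(V_i,V_j)$ is the complete bipartite graph $K_{|V_i|,|V_j|}$, it is connected and has no isolated vertex whenever $V_i,V_j\neq\emptyset$. The only real constraint is condition iii), and the key computation is that $K_{a,b}$ is a tree exactly when $a=1$ or $b=1$; hence the single component $B(V_i,V_j)$ contributes $c_{i,j}=1$ if at least one of $V_i,V_j$ is a singleton, and $c_{i,j}=0$ otherwise.

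Next I would translate the cost $\sum_{i<j}c_{i,j}$ into a count of singletons. If the partition into $k$ parts has $p$ singleton parts, then $\sum_{i<j}c_{i,j}$ is the number of pairs of parts meeting a singleton, namely $\binom{k}{2}-\binom{k-p}{2}$, which depends only on $p$ and increases with $p$. Thus, for fixed $k$, condition iii) is easiest to satisfy by using as few singletons as possible. Since $k$ nonempty parts covering $n$ vertices with $p$ singletons force $n-p\ge 2(k-p)$, the minimum is $p_{\min}=\max(0,2k-n)$, attainable by taking $p_{\min}$ singletons and $n-k$ parts of size $2$ (or, when $2k\le n$, all parts of size $\ge 2$). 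A short computation then gives the minimal cost $\binom{k}{2}-\binom{k-p_{\min}}{2}=\tfrac12(n-1)\max(0,2k-n)$.

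This yields both directions simultaneously. For the upper bound, any $\ell$-CIST-partition into $k$ parts has $p\ge p_{\min}$, hence $\ell\ge \tfrac12(n-1)\max(0,2k-n)$, while trivially $k\le n$ since the parts are nonempty; together these give $k\le\min\bigl(n,\ \lfloor \tfrac n2+\tfrac{\ell}{n-1}\rfloor\bigr)$. For the lower bound, for any $k$ not exceeding this bound the explicit partition with $p_{\min}$ singletons and the rest of size $\ge 2$ has cost $\le\ell$, so Theorem \ref{theopart1} produces the required trees. Hence $\text{dst}_{0,\ell}(K_n)=\min\bigl(n,\ \lfloor \tfrac n2+\tfrac{\ell}{n-1}\rfloor\bigr)$.

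I would close with the arithmetic identity $\min\bigl(n,\ \lfloor \tfrac n2+\tfrac{\ell}{n-1}\rfloor\bigr)=\lfloor n/2\rfloor+\min\bigl(\lfloor \tfrac{\ell}{n-1}+\tfrac{1_{\text{odd}}(n)}{2}\rfloor,\ \lceil n/2\rceil\bigr)$, obtained by writing $\tfrac n2=\lfloor n/2\rfloor+\tfrac{1_{\text{odd}}(n)}{2}$ and $n=\lfloor n/2\rfloor+\lceil n/2\rceil$ and pulling the integer $\lfloor n/2\rfloor$ out of both floors. The only places demanding care are this floor/ceiling bookkeeping with the odd-parity indicator and checking that the extremal partition is realizable for every relevant $k\le n$ (here $n\ge 3$ is used implicitly, so that each part is nonempty and every spanning tree has an inner vertex); I expect the parity arithmetic to be the main fiddly step rather than a genuine obstacle, since the graph-theoretic content is entirely carried by Theorem \ref{theopart1}.
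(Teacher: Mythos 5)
Your proof is correct, but it takes a genuinely different route from the paper's. The paper proves the upper bound by edge counting (Corollary \ref{prop444}: $|E(K_n)|\ge k(n-1)-\ell$ forces $k\le n/2+\ell/(n-1)$ up to parity bookkeeping) and the lower bound by an explicit construction mixing $2i$ spanning stars with $n/2-i$ two-centred trees taken from the $k$-tree construction of Matsushita et al., handling $n$ even and $n$ odd as separate cases throughout. You instead push both directions through the partition characterization of Theorem \ref{theopart1}: in $K_n$ conditions i) and ii) are vacuous, $B(V_i,V_j)=K_{|V_i|,|V_j|}$ is a single component which is a tree exactly when some part is a singleton, so the cost $\sum_{i<j}c_{i,j}=\binom{k}{2}-\binom{k-p}{2}$ depends only on the number $p$ of singleton parts, is increasing in $p$, and is minimized at $p_{\min}=\max(0,2k-n)$ with value $\tfrac12(n-1)\max(0,2k-n)$ --- all of which I have checked. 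This unifies the two bounds into a single optimization, defers the parity split to the final floor arithmetic, identifies the extremal partitions, and incidentally makes the $k\le n$ cap transparent; what it gives up is the concreteness of the paper's trees (though the proof of Theorem \ref{theopart1} is itself constructive) and it leans on that theorem where the paper's upper bound needs only the elementary edge count. Your closing identity $\min\bigl(n,\lfloor n/2+\ell/(n-1)\rfloor\bigr)=\lfloor n/2\rfloor+\min\bigl(\lfloor\ell/(n-1)+1_{\text{odd}}(n)/2\rfloor,\lceil n/2\rceil\bigr)$ is right, and your implicit appeal to $n\ge 3$ is no worse than the paper's own unstated hypothesis on $n$.
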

\begin{proof}
First, suppose $n$ is even. Let $i=\lfloor \ell/(n-1) \rfloor$. We begin by proving that $\text{dst}_{0,\ell}(K_n)< n/2 + i+1$ for $0\le i< n/2$. Suppose that there are $n/2+i+1$ $(0,\ell)$-disjoint spanning trees. 
By Corollary \ref{prop444}, we have $|E(K_n)|\ge (n/2+i+1)(n-1)-\ell$. Observe that $(n/2+i+1)(n-1)-\ell=n(n+1)/2+i(n-1)-1-\ell$. Since $|E(K_n)|= n (n-1)/2$, we have $\ell\ge  i(n-1)+n-1=(i+1)(n-1)$, contradicting the definition of $i$.
We are going to prove that we can construct $n/2+i$ $(0,\ell)$-disjoint spanning trees in $K_n$, for $ 1\le i\le n/2$.
We construct two kinds of spanning trees. First, we construct $2i$ spanning trees $T_1,\ldots, T_{2i}$ which are spanning stars. Second, we construct $n/2-i$ spanning trees in $K_n$ each tree with two inner vertices, as in \cite{MA2015} (with disjoint inner vertices). The left part of Figure \ref{descliques} illustrates this construction for $K_6$. There are $2i (2i-1)/2$ common edges between the spanning stars and $2i (n/2-i)$ common edges between the inner vertices of $T_1,\ldots, T_{2i}$ and the remaining vertices. Thus, there are $i(n-1)$ common edges and by definition $\ell\ge i(n-1)$.

Second, suppose $n$ is odd. Let $i=\lfloor (\ell/(n-1)+1/2 \rfloor$. We begin by proving that $\text{dst}_{0,\ell}(K_n)< (n-1)/2 + i+1$ for $0\le i< (n+1)/2$. Suppose that there are $(n-1)/2+i+1$ $(0,\ell)$-disjoint spanning trees. By Corollary \ref{prop444}, we have $|E(K_n)|\ge ((n-1)/2+i+1)(n-1)-\ell$. Observe that $((n-1)/2+i+1)(n-1)-\ell=n(n-1)/2+i(n-1)+(n-1)/2-\ell$. Since $|E(K_n)|= n (n-1)/2$, we have $\ell\ge  i(n-1)+(n-1)/2$, contradicting the definition of $i$.
We begin by constructing $(n-1)/2+i$ $(0,\ell)$-disjoint spanning trees in $K_n$, for $1\le i\le (n+1)/2$. We construct two kinds of spanning trees. First, we construct $2i-1$ spanning trees $T_1,\ldots, T_{2i-1}$ which are spanning stars. Second, we construct $(n-1)/2-i+1$ spanning trees in $K_n$, each tree having two inner vertices, following the construction described in \cite{MA2015} (with disjoint inner vertices). There are $(2i-1) (2i-2)/2$ common edges between the spanning stars and $(2i-1) ((n-1)/2-i)$ common edges between the inner vertices of $T_1,\ldots, T_{2i-1}$ and the remaining vertices. Thus, there are $(2i-1)(i-1)+(2i-1)((n-1)/2-i+1)= i(n-1)-(n-1)/2$ common edges and by definition $\ell\ge i(n-1)-(n-1)/2$.
\end{proof}
The middle part of Figure \ref{descliques} depicts three $(0,2)$-disjoint spanning trees in $K_5$.
\begin{prop}
Let $n$ be a positive integer. For $1\le \ell<(n-1)$, we have $\text{dst}_{1,\ell}(K_n)\le \lfloor n/2 \rfloor+ \lfloor \ell/2-1_{\text{even}}(n) /2 \rfloor$, where $1_{\text{even}}(n)=1$ if $n$ is even, and $0$ otherwise. Moreover, if $\ell\ge (n-1)$, then $\text{dst}_{1,\ell}(K_n)$ is not finite.
\end{prop}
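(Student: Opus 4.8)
The plan is to dispatch the infinite assertion first and then prove the bound in the reduced form $2k\le n-1+\ell$, where $k=\text{dst}_{1,\ell}(K_n)$. For $\ell\ge n-1$ I would fix a vertex $r$ and let every $T_i$ be the spanning star centered at $r$: each copy has $r$ as its only inner vertex, so $|I(T_1,\ldots,T_k)|=1$, and all copies use the same $n-1$ edges, so $|E(T_1,\ldots,T_k)|=n-1\le\ell$; thus arbitrarily many copies are $(1,\ell)$-disjoint and $\text{dst}_{1,\ell}(K_n)$ is infinite. For the bound, a short parity check shows that $\lfloor n/2\rfloor+\lfloor \ell/2-1_{\text{even}}(n)/2\rfloor=\lfloor (n+\ell-1)/2\rfloor$ for both parities of $n$, so it suffices to establish $2k\le n-1+\ell$, the parity of $n+\ell$ entering only at the final rounding.

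Now fix $(1,\ell)$-disjoint trees $T_1,\ldots,T_k$. If no vertex is inner in two of them they are $(0,\ell)$-disjoint and, since $\ell<n-1$, the previous proposition gives a count not exceeding $\lfloor(n+\ell-1)/2\rfloor$, which is what we want; so I may assume a unique vertex $r$ with $I(T_1,\ldots,T_k)=\{r\}$. I would then record two facts. (a) No edge is an inner edge of two trees: if $uv$ were inner in $T_s$ and $T_t$ then $\{u,v\}\subseteq I(T_s)\cap I(T_t)\subseteq I(T_1,\ldots,T_k)$, forcing $|I(T_1,\ldots,T_k)|\ge 2$; in particular each vertex other than $r$ is inner in at most one tree, so by the argument of Proposition \ref{megaprop}.i) every edge missing $r$ lies in at most two of the trees. (b) Writing $p_\ell=|I(T_\ell)|$ and $a=|\{\ell:\ r\in I(T_\ell)\}|$, the sets $I(T_\ell)\setminus\{r\}$ are pairwise disjoint subsets of $V(K_n)\setminus\{r\}$, so $\sum_\ell |I(T_\ell)\setminus\{r\}|\le n-1$; since $|I(T_\ell)\setminus\{r\}|=p_\ell-[r\in I(T_\ell)]$, this reads $\sum_\ell p_\ell\le (n-1)+a$.

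To convert this into a bound on $k$ I would first show $\sum_\ell p_\ell\ge 2k$, i.e. that an extremal family has no spanning star. The point is that stars are edge-expensive: two stars with distinct centers $c,c'$ share the edge $cc'$, so $s$ star trees already force $\binom{s}{2}$ distinct shared edges, and as $\binom{s}{2}\le\ell<n-1$ only few stars can occur; combined with the edge budget one argues they can be removed from a maximum family, leaving $p_\ell\ge 2$ for all $\ell$ and hence $2k\le\sum_\ell p_\ell\le (n-1)+a$. It then remains to bound $a$ by $\ell$. Here I would use that any two trees having $r$ as an inner vertex are forced to share an edge (a consequence of fact (a) together with the fact that the inner set of each such tree must dominate $V(K_n)$). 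Modelling ``$T_s$ and $T_t$ share an edge'' as a covering of the $\binom{a}{2}$ pairs of these trees by shared edges of type $rw$ (which can cover all pairs through one tree) or type $ww'$ (which covers a single pair) leads to at least $a-1$ shared edges, hence $a\le\ell+1$.

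Combining $2k\le (n-1)+a$ with $a\le\ell+1$ gives $2k\le n+\ell$, which is the claim except by one unit when $n+\ell$ is even. The hard part, and precisely the step that produces the correction $1_{\text{even}}(n)$, is to rule out the boundary configuration $a=\ell+1$ in that parity: there all inequalities are tight ($p_\ell=2$, every non-$r$ vertex is a private inner vertex, and the $a-1$ collisions are realised by edges $rw$ shared across all $a$ trees), and one must show this is geometrically impossible. The obstruction is already visible in $K_4$ with $\ell=2$: forcing an edge $rw$ into every tree makes $w$ a leaf hanging off $r$ in all trees but one, and if too many private inner vertices are forced to hang off $r$ then some tree can no longer keep its own private vertex inner, contradicting $p_\ell=2$. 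Making this tightness analysis quantitative, so that it sharpens $a\le\ell+1$ to $a\le\ell$ exactly when $n+\ell$ is even, is the main obstacle; everything else is the bookkeeping above.
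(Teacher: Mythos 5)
Your handling of the case $\ell\ge n-1$ is correct and coincides with the paper's (repeat the spanning star centred at a fixed vertex; its centre is the unique shared inner vertex and its $n-1$ edges fit the budget). For the main inequality you should be aware of a mismatch: although the proposition is written with ``$\le$'', the paper's own proof is purely a \emph{construction} of $\lfloor n/2\rfloor+i$ trees (a sub-clique $B$ carrying completely independent spanning trees as in the $k$-tree paper of Matsushita et al., extended through a hub vertex $u$, plus double stars through $u$), i.e.\ it establishes achievability of the value, not the impossibility of exceeding it; the remark following the proposition even concedes that the only upper bound the authors extract (from the edge-counting Proposition 5.1) is not tight. So you set out to prove the one direction that the paper does not actually prove, and your route (counting inner vertices and shared edges around the unique doubly-inner vertex $r$) is genuinely different from anything in the paper.

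As a proof of the upper bound, though, your argument has real gaps. First, ``they can be removed from a maximum family'' does not give $\sum_\ell p_\ell\ge 2k$: deleting the $s$ spanning stars changes $k$, and what you actually get is $\sum_\ell p_\ell\ge 2k-s$, so $s$ must be carried through the count and bounded against $\ell$ together with $a$; you never do this, and the interaction between star--star shared edges, star--nonstar shared edges and the $a-1$ collisions among the trees containing $r$ is exactly where double counting can occur. Second, the claim that any two trees with $r$ inner must share an edge is asserted but not proved (the domination argument is vacuous in $K_n$, where every singleton dominates; the correct reason is the structure of the trees around $r$ and the fact that no edge is inner in two trees, and it needs to be written out for inner sets of size greater than $2$). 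Third, and most importantly, you yourself flag that the step separating $2k\le n+\ell$ from the required $2k\le n+\ell-1$ --- the parity correction that produces the $1_{\text{even}}(n)$ term --- is unresolved. Since that is precisely the content that distinguishes the stated bound from the trivial-looking one, the proposal as it stands does not prove the inequality.
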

\begin{proof}
Observe that a connected dominating set of $K_n$ can contain only one vertex. Thus, if $\ell\ge (n-1)$, then $\text{dst}_{1,\ell}(K_n)$ is not finite.

First suppose $n$ is even. Let $i= \lfloor \ell/2 -1/2\rfloor$. We prove that we can construct $n/2 + i$ $(1,\ell)$-disjoint spanning trees in $K_n$, for $1\le i\le n/2-1$. 
Let $B$ be an induced $K_{n-2(i+1)}$ of $G=K_n$.
We begin by creating $n/2-i-1$ completely independent spanning trees $T_1,\ldots, T_{n/2-i-1}$ in $B$, as in \cite{MA2015}.
Let $u$ be a vertex of $G- B$. We are going to extend these trees in order they span the whole graph $G$. To each tree $T_k$, $1 \le k\le n/2-i-1$, we add the vertex $u$ and an edge incident to $u$ and to a vertex of $I(T_k)\cap B$. We finally add to $T_k$ the edges incident to $u$ and to every vertex of $G-B$.
We now construct the remaining trees $T'_1,\ldots, T'_{2i+1}$ as follows. Each tree $T'_k$, $1 \le k\le 2i+1$, has two inner vertices: $u$ and a vertex $u_k$ of $G-(B\cup \{u\})$ different for each tree. Each tree $T'_{k}$ also contains the edges incident to $u$ and to every vertex of $G-B$ and the edges incident to $u_k$ and to every vertex of $B$.
It is easy to verify that the trees $T_1,\ldots, T_{n/2-i-1},T'_1,\ldots, T'_{2i+1}$ have only one common vertex (the vertex $u$) and $2i+1=\ell$ common edges (the edges incident to $u$ in $G-B$).

Second, suppose $n$ is odd. Let $i= \lfloor \ell/2 \rfloor$. We prove that we can construct $(n-1)/2 + i$ $(1,\ell)$-disjoint spanning trees in $K_n$, for $1\le i\le (n-1)/2$. 
Let $B$ be an induced $K_{n-2i-1}$ of $K_n$.
We begin by creating $(n-1)/2-i$ completely spanning trees $T_1,\ldots, T_{(n-1)/2-i}$ in $B$, as in \cite{MA2015} and extend them to the whole graph $G$ as for the case $n$ even.
We construct the trees $T'_1,\ldots, T'_{2i}$ similarly as in the case $n$ even.
It is easy to verify that the trees $T_1,\ldots, T_{(n-1)/2-i},T'_1,\ldots, T'_{2i}$ have only one common vertex (the vertex $u$) and $2i$ common edges (the edges incident to $u$ in $G-B$).
\end{proof}
The right part of Figure \ref{descliques} depicts four $(1,3)$-disjoint spanning trees in $K_6$.
Note that we can obtain a lower bound on the number of $(1,\ell)$-disjoint trees in $K_n$ by using Proposition \ref{prop44}.
However, in this case, we do not obtain a tight bound. Moreover, Proposition \ref{prop44} implies that $\text{dst}_{i,0}(K_n)=\lfloor n/2 \rfloor$ for every positive integer $i$.
\subsection{Cylinders}
Let $n_1$ and $n_2$ be positive integers with $n_1\ge 3$ and $n_2\ge 3$. 
Let $V(C(n_1,n_2))=\{(i,j)|\ 0\le i<n_1, 0\le j<n_2 \}$ and $E(C(n_1,n_2))=\{(i,j)\ (i',j'')|\ i=i', j=j'\pm 1 \lor j=j', |i-i'|=1\pmod{n_1} \}$. 

\begin{theo}
There exist two $(0,n_1- 2)$-disjoint spanning trees in the cylinder $C(n_1,n_2)$.

\end{theo}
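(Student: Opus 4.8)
The plan is to apply Theorem \ref{theopart1} with $k=2$ and $\ell=n_1-2$: it suffices to exhibit an $(n_1-2)$-CIST-partition of $C(n_1,n_2)$ into two sets $V_1,V_2$. Recall such a partition must satisfy (i) $G[V_1],G[V_2]$ connected, (ii) $B(V_1,V_2)$ has no isolated vertex — equivalently both $V_1,V_2$ are connected \emph{dominating} sets — and (iii) the number of components of $B(V_1,V_2)$ that are trees is at most $n_1-2$. Before constructing it I would observe that $n_1-2$ is the best value possible: since $|E(C(n_1,n_2))|=n_1(2n_2-1)$, any two spanning trees use $2(n_1n_2-1)$ edge-slots and hence share at least $2(n_1n_2-1)-n_1(2n_2-1)=n_1-2$ edges. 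So the goal is to meet this bound exactly.

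Writing $(i,j)$ with $i\in\mathbb{Z}_{n_1}$ the cyclic coordinate and $0\le j<n_2$ the path coordinate, I would use the partition in which angle $0$ always goes to $V_1$, angle $1$ always goes to $V_2$, and every other angle $2\le i\le n_1-1$ goes to $V_1$ on even rings and to $V_2$ on odd rings:
\[
V_1=\{(0,j)\}\cup\{(i,j):2\le i\le n_1-1,\ j\text{ even}\},\qquad V_2=\{(1,j)\}\cup\{(i,j):2\le i\le n_1-1,\ j\text{ odd}\}.
\]
Then I would verify (i) and (ii). For (i): the angle-$0$ spoke lies entirely in $V_1$ and serves as a backbone; on each even ring the $V_1$-vertices are the whole cycle minus angle $1$, i.e.\ a connected arc containing $(0,j)$, while on odd rings the only $V_1$-vertex is $(0,j)$, so $V_1$ is connected — and symmetrically $V_2$ via the angle-$1$ backbone. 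For (ii): each buffer vertex $(i,j)$ with $i\ge 2$ flips class along its spoke, hence is adjacent to the opposite class at $(i,j\pm1)$; and the spine vertices $(0,j),(1,j)$ are adjacent to each other and lie in opposite classes. Thus both parts are connected dominating sets.

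The crux is condition (iii). The key point is that $B(V_1,V_2)$ is actually a spanning \emph{forest}: the cross edges are exactly the rung $(0,j)(1,j)$ in every ring, the rung $(1,j)(2,j)$ on even rings and $(n_1-1,j)(0,j)$ on odd rings, and every spoke $(i,j)(i,j+1)$ with $2\le i\le n_1-1$ (these flip class, so are always cross edges). Reading off the components, each buffer angle contributes a vertical path; the two extreme buffer angles $2$ and $n_1-1$ attach through the rungs to the angle-$1$ and angle-$0$ spines, forming two trees, while the $n_1-4$ interior buffer angles $3,\dots,n_1-2$ remain isolated path-components — all acyclic, giving exactly $(n_1-4)+2=n_1-2$ tree-components. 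An Euler-characteristic check ($2n_2+(n_1-2)(n_2-1)=n_1n_2-(n_1-2)$ edges on $n_1n_2$ vertices) confirms a forest with $n_1-2$ components. The main obstacle to anticipate is keeping (i) and (ii) compatible: domination forces the two parts to interleave finely (ruling out block or thick-band partitions, whose interiors would be undominated), while connectivity forces each part to share a backbone, and it is precisely the angle-$0$/angle-$1$ twin-spine design that reconciles these. Finally the degenerate case $n_1=3$ (where angles $2$ and $n_1-1$ coincide, so the two end-trees merge into one) should be checked separately, but it still yields $n_1-2=1$ tree-component.
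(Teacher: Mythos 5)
Your proof is correct, but it takes a genuinely different route from the paper. The paper simply writes down the two spanning trees explicitly, listing $E(T_1)$ and $E(T_2)$ as unions of ring and spoke edges (spoke $0$ as the backbone of $T_1$ with combs on even rings, spoke $n_1-1$ as the backbone of $T_2$ with combs on odd rings), and leaves the reader to check that these are spanning trees sharing at most $n_1-2$ edges. You instead exhibit an $(n_1-2)$-CIST-partition (twin spines at angles $0$ and $1$, buffer angles alternating with the parity of the ring index) and invoke Theorem \ref{theopart1}. I checked your verification of the three partition conditions: the connectivity and domination arguments are fine, and the component analysis of $B(V_1,V_2)$ is right --- the cross edges are exactly the rungs $(0,j)(1,j)$, the rungs $(1,j)(2,j)$ for $j$ even and $(n_1-1,j)(0,j)$ for $j$ odd, and all spoke edges at angles $2,\dots,n_1-1$, giving $n_1n_2-(n_1-2)$ edges on $n_1n_2$ vertices with no cycles, hence exactly $n_1-2$ tree components (including the degenerate $n_1=3$ and $n_1=4$ cases, which you correctly flag). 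What your approach buys is that the bound on shared edges falls out structurally from the component count rather than from inspecting explicit edge lists, and your Euler-characteristic cross-check plus the matching lower bound $|E(T_1)\cap E(T_2)|\ge 2(n_1n_2-1)-n_1(2n_2-1)=n_1-2$ (which is exactly the paper's remark after the theorem, via Corollary \ref{prop444}) shows the construction is optimal; what the paper's approach buys is a concrete pair of trees without routing through the equivalence theorem.
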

\begin{proof}
We describe these two trees by giving their edge sets:

$E(T_1)=\{(0,j) (0,j+1)|\ j\in\{0,\ldots, n_2-2\} \}\cup \{(i,j) (i+1,j)|\ i\in \{0,\ldots,n_1 -2\},\  j\in\{0,2,4,\ldots, 2 \lfloor (n_2-1)/2 \rfloor \} \} \cup \{(i,j) (i,j+1)|\  i\in \{1,\ldots,n_1 -2\},\ j\in\{0,2,4,\ldots, 2 \lfloor (n_2-1)/2 \rfloor \} \}\cup \{(0,j) (n_1-1,j)|\ j\in\{1,3,5,\ldots, 2 \lfloor (n_2-2)/2 \rfloor+1 \} \}$.

$E(T_2)=\{(n_1-1,j) (n_1-1,j+1)|\ j\in\{0,\ldots, n_2-2\}\}\cup \{(i,j) (i+1,j)|\ i\in \{0,\ldots ,n_1-2\},\ j\in\{1,3,5,\ldots, 2 \lfloor (n_2-2)/2 \rfloor+1 \}  \} \cup \{(i,j) (i,j+1)|\  i\in \{1,\ldots,n_1 -2\},\ j\in\{1,3,5,\ldots, 2 \lfloor (n_2-2)/2 \rfloor+1 \} \} \cup \{(0,j) (n_1-1,j)|\ j\in\{1,3,5,\ldots, 2 \lfloor (n_2-1)/2 \rfloor \} \}\cup \{(i,0) (i,1)|\  i\in \{1,\ldots,n_1 -2\}\}$.
\end{proof}
\begin{figure}[t]
\begin{center}
\begin{tikzpicture}
\draw[ultra thick,color=blue] (0,0) -- (0,4);
\draw[ultra thick,color=blue] (2,0) -- (2,4);
\draw[ultra thick,color=blue] (4,0) -- (4,4);
\draw[ultra thick,color=blue] (1,0) .. controls (0.5,2) .. (1,4);
\draw[ultra thick,color=blue] (3,0) .. controls (2.5,2) .. (3,4);
\draw[ultra thick,color=blue] (5,0) .. controls (4.5,2) .. (5,4);
\draw[ultra thick,color=blue] (0,0) -- (5,0);

\draw[ultra thick,color=blue] (2,1) -- (3,1);
\draw[ultra thick,color=blue] (2,2) -- (3,2);
\draw[ultra thick,color=blue] (2,3) -- (3,3);

\draw[ultra thick,color=blue] (4,1) -- (5,1);
\draw[ultra thick,color=blue] (4,2) -- (5,2);
\draw[ultra thick,color=blue] (4,3) -- (5,3);

\draw[ultra thick, style=dashed,color=red] (1,0) -- (1,4);
\draw[ultra thick, style=dashed,color=red] (3,0) -- (3,4);
\draw[ultra thick, style=dashed,color=red] (5,0) -- (5,4);
\draw[ultra thick, style=dashed,color=red] (0,0) .. controls (-0.5,2) .. (0,4);
\draw[ultra thick, style=dashed,color=red] (2,0) .. controls (1.5,2) .. (2,4);
\draw[ultra thick, style=dashed,color=red] (4,0) .. controls (3.5,2) .. (4,4);
\draw[ultra thick, style=dashed,color=red] (0,4) -- (5,4);

\draw[ultra thick,color=blue] (0,1) .. controls (0.5,1.1).. (1,1);
\draw[ultra thick,color=blue] (0,2) .. controls (0.5,2.1).. (1,2);
\draw[ultra thick, color=blue] (0,3) .. controls (0.5,3.1).. (1,3);
\draw[ultra thick, style=dashed,color=red] (0,1) .. controls (0.5,0.9).. (1,1);
\draw[ultra thick, style=dashed,color=red] (0,2) .. controls (0.5,1.9).. (1,2);
\draw[ultra thick, style=dashed,color=red] (0,3) .. controls (0.5,2.9).. (1,3);

\draw[ultra thick, style=dashed,color=red] (1,1) -- (2,1);
\draw[ultra thick, style=dashed,color=red] (1,2) -- (2,2);
\draw[ultra thick, style=dashed,color=red] (1,3) -- (2,3);

\draw[ultra thick, style=dashed,color=red] (3,1) -- (4,1);
\draw[ultra thick, style=dashed,color=red] (3,2) -- (4,2);
\draw[ultra thick, style=dashed,color=red] (3,3) -- (4,3);

\node at (0,0) [circle,draw=black,fill=blue,scale=0.7]{};
\node at (0,1) [circle,draw=black,fill=blue,scale=0.7]{};
\node at (0,2) [circle,draw=black,fill=blue,scale=0.7]{};
\node at (0,3) [circle,draw=black,fill=blue,scale=0.7]{};
\node at (1,0) [circle,draw=black,fill=blue,scale=0.7]{};
\node at (2,0) [circle,draw=black,fill=blue,scale=0.7]{};
\node at (2,1) [circle,draw=black,fill=blue,scale=0.7]{};
\node at (2,2) [circle,draw=black,fill=blue,scale=0.7]{};
\node at (2,3) [circle,draw=black,fill=blue,scale=0.7]{};
\node at (3,0) [circle,draw=black,fill=blue,scale=0.7]{};
\node at (4,0) [circle,draw=black,fill=blue,scale=0.7]{};
\node at (4,1) [circle,draw=black,fill=blue,scale=0.7]{};
\node at (4,2) [circle,draw=black,fill=blue,scale=0.7]{};
\node at (4,3) [circle,draw=black,fill=blue,scale=0.7]{};
\node at (5,0) [circle,draw=black,fill=blue,scale=0.7]{};
\node at (0,4) [regular polygon, regular polygon sides=3,draw=black,fill=red,scale=0.5]{};
\node at (1,1) [regular polygon, regular polygon sides=3,draw=black,fill=red,scale=0.5]{};
\node at (1,2) [regular polygon, regular polygon sides=3,draw=black,fill=red,scale=0.5]{};
\node at (1,3) [regular polygon, regular polygon sides=3,draw=black,fill=red,scale=0.5]{};
\node at (1,4) [regular polygon, regular polygon sides=3,draw=black,fill=red,scale=0.5]{};
\node at (2,4) [regular polygon, regular polygon sides=3,draw=black,fill=red,scale=0.5]{};
\node at (3,1) [regular polygon, regular polygon sides=3,draw=black,fill=red,scale=0.5]{};
\node at (3,2) [regular polygon, regular polygon sides=3,draw=black,fill=red,scale=0.5]{};
\node at (3,3) [regular polygon, regular polygon sides=3,draw=black,fill=red,scale=0.5]{};
\node at (3,4) [regular polygon, regular polygon sides=3,draw=black,fill=red,scale=0.5]{};
\node at (4,4) [regular polygon, regular polygon sides=3,draw=black,fill=red,scale=0.5]{};
\node at (5,1) [regular polygon, regular polygon sides=3,draw=black,fill=red,scale=0.5]{};
\node at (5,2) [regular polygon, regular polygon sides=3,draw=black,fill=red,scale=0.5]{};
\node at (5,3) [regular polygon, regular polygon sides=3,draw=black,fill=red,scale=0.5]{};
\node at (5,4) [regular polygon, regular polygon sides=3,draw=black,fill=red,scale=0.5]{};
\end{tikzpicture}
\end{center}
\caption{Two $(0,3)$-disjoint spanning trees of $C(5, 6)$.}
\label{cone}
\end{figure}
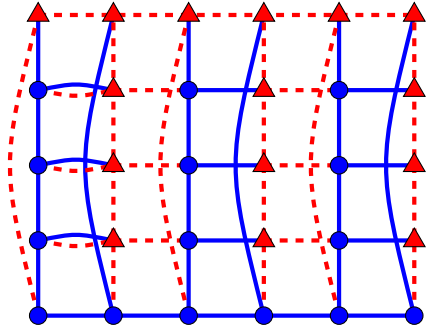
Observe that $C(n_1,n_2)$ contains $2 n_1 n_2 -n_1$ edges. Hence, by Corollary \ref{prop444}, we can conclude that there does not exist two $(0,m)$-disjoint spanning trees in $C(n_1,n_2)$, for $m< n_1-2$.

\subsection{square grids}
Let $n_1$ and $n_2$ be positive integers with $n_1\ge 3$ and $n_2\ge 3$. 
Let $V(G(n_1,n_2))=\{(i,j)|\ 0\le i<n_1, 0\le j<n_2 \}$ and $E(G(n_1,n_2))=\{(i,j)\ (i',j'')|\ i=i'\pm1, j=j' \lor i=i', i= j'\pm1 \}$. 

In two papers \cite{FU2003,LI2010}, the trees with a maximum number of leaves in $G(n_1,n_2)$ have been determined. In particular, Fujie \cite{FU2003} has shown that a spanning tree of $G(n_1,n_2)$ has at least $\lceil n_1 n_2 /3\rceil$ inner vertices.
Hartnell and Rall \cite{HAR2001} have proven that there do not exist two disjoint connected dominating sets in $G(n_1,n_2)$, except if $n_1\le 2$ or $n_2\le 2$. However, this is not the case for $1$-rooted connected dominating set.
We finish this paper by giving a construction of two $1$-rooted connected dominating sets in $G(n_1,n_2)$ for $n_1\ge n_1$ and $n_2\ge 3$. In Figure \ref{CDSgrille}, we exhibit two trees induced by two $1$-rooted connected dominating sets in $G(7,13)$. In this example, we have minimized the number of common edges.
\begin{theo}
There exist two $1$-rooted connected dominating sets in the grid $G(n_1,n_2)$, for every $n_1\ge 3$ and $n_2\ge 3$.
\end{theo}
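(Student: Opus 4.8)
The plan is to give an explicit construction, in the spirit of the proof for the cylinder $C(n_1,n_2)$ above, producing two connected dominating sets $D_1,D_2$ (equivalently, prescribing the inner-vertex sets of two spanning trees) that meet in a single vertex. The guiding idea is to let $D_1$ be built around the even rows and $D_2$ around the odd rows: each family $\{(i,j): i \text{ even}\}$ and $\{(i,j): i \text{ odd}\}$ already dominates the whole grid, so the only real work is (a) connecting the rows of the same parity into a single connected set, and (b) keeping the two resulting sets almost disjoint.

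First I would connect consecutive even rows (and, separately, consecutive odd rows) not by an interior vertical spine but by a single bridge vertex placed alternately at the left boundary column $j=0$ and the right boundary column $j=n_2-1$, so that $D_1$ becomes a serpentine sweeping the even rows and $D_2$ a serpentine sweeping the odd rows. The bridge that $D_1$ uses to pass from one even row to the next lies in an odd row, hence a priori in $D_2$; to avoid this I would declare each bridge vertex to belong only to its owner and delete it from the other set. The second step is then to check that every deleted or boundary vertex is still dominated: a bridge vertex removed from $D_2$ keeps a horizontal neighbour in its odd row inside $D_2$, and symmetrically for $D_1$, while the top and bottom rows are dominated vertically by the adjacent row of the correct parity. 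One also verifies that each set stays connected after these deletions, since a deleted vertex is always an endpoint of a row-path.

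Because Hartnell and Rall have shown that $G(n_1,n_2)$ has no two disjoint connected dominating sets for $n_1,n_2\ge 3$, the two serpentines cannot be made completely disjoint, so the construction must be arranged so that exactly one coincidence survives: I would place all bridges so that the even-serpentine and the odd-serpentine are forced to share a single vertex, a root $r$ located at the corner or turn where both constructions meet, and check $|D_1\cap D_2|=1$ directly. The argument then splits into the four parity cases of $(n_1,n_2)$, which only change where the serpentines turn and where $r$ sits; the $G(7,13)$ picture is the representative case.

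The hard part will be steps (b)--(c): guaranteeing that each of $D_1$ and $D_2$ dominates every vertex of the grid while their intersection is a single vertex. The naive alternative, namely even rows with one interior vertical spine versus odd rows with another spine, fails precisely here, because an interior spine crosses the opposite parity's rows in about $n_1/2$ places and each crossing is a forced common vertex; pushing all the connections to the two boundary columns and making the bridge vertices exclusive is what collapses the overlap from $\Theta(n_1)$ down to the single unavoidable root. Keeping track of exactly which boundary bridge vertices are reassigned, and confirming that no reassignment destroys domination or connectivity of either set, is the bookkeeping that the explicit edge and vertex description (or, equivalently, the $\ell$-rooted partition it induces) has to make airtight.
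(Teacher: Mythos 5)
Your plan has a genuine gap, and it is located exactly where you predicted the ``hard part'' would be; unfortunately the obstruction is not mere bookkeeping but kills the even-row/odd-row serpentine outright. Consider the bottom row, which your scheme assigns to $D_1$. A vertex $(0,c)$ has all of its neighbours inside row $0$ except for the single vertex $(1,c)$ above it, so for $D_2$ to dominate $(0,c)$ you need $(1,c)\in D_2$ for \emph{every} column $c$ (short of reassigning vertices of row $0$ itself, which you do not propose). Hence $D_1$ may not steal any vertex of row $1$ -- yet $D_1$ must pass through row $1$ to connect row $0$ to row $2$, so the bridge it places there is forced to be a \emph{shared} vertex, not an exclusively owned one. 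The symmetric argument at the top of the grid (the extreme row owned by one set, the row beneath it owned by the other, with the other set needing to tunnel through) forces a second shared vertex. So the construction as described cannot be $1$-rooted: it needs at least two common vertices. The same phenomenon already shows up at the corners -- $(0,0)$ has only the two neighbours $(0,1)$ and $(1,0)$, so placing a $D_1$-bridge at $(1,0)$ and declaring it exclusive, as your ``alternately at the left and right boundary columns'' rule does, leaves $(0,0)$ undominated by $D_2$. Moving the bridges to interior columns does not help either: removing an interior vertex of row $1$ from $D_2$ splits that row into two pieces, each of which then needs its own tunnel through row $2$, cascading the number of stolen (or shared) vertices upward rather than down to one.

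The paper avoids this with a different topology. For $n_1=3$ it takes $D_1$ to be the middle row and $D_2$ its complement plus one vertex of the middle row. For $n_1\ge 4$ it takes $D_1$ to be a single \emph{spiral} -- nested rectangular arcs joined at alternating ends, starting from the outer boundary row and winding inward -- and sets $D_2=V(G(n_1,n_2))\setminus D_1$ together with exactly one additional vertex. The complement of such a spiral is itself an interleaved spiral, so it is automatically connected and dominating, and because the two spirals run parallel at distance one everywhere, each dominates the other's territory without any further shared vertices; only the single root where the two spines must cross is common. If you want to keep a parity-based construction you would have to abandon the rule that whole rows belong to one set and instead let the two sets interleave along columns as well, which is essentially what the spiral does.
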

\begin{figure}
\begin{center}
\begin{tikzpicture}[scale=1]

\draw[ultra thick,color=blue] (0,0) -- (12,0);
\draw[ultra thick, style=dashed,color=red] (0,1) -- (12,1);
\draw[ultra thick,color=blue] (1,2) -- (9,2);
\draw[ultra thick, style=dashed,color=red] (2,3) -- (8,3);
\draw[ultra thick,color=blue] (3,4) -- (9,4);
\draw[ultra thick, style=dashed,color=red] (2,5) -- (10,5);
\draw[ultra thick,color=blue] (1,6) -- (11,6);
\draw[ultra thick, style=dashed,color=red] (0,1) -- (0,6);
\draw[ultra thick,color=blue] (1,2) -- (1,6);
\draw[ultra thick, style=dashed,color=red] (2,3) -- (2,5);
\draw[ultra thick,color=blue] (9,2) -- (9,4);
\draw[ultra thick, style=dashed,color=red] (10,1) -- (10,5);
\draw[ultra thick,color=blue] (11,1) -- (11,6);
\draw[ultra thick, style=dashed,color=red] (12,1) -- (12,6);
\draw[ultra thick, style=dashed,color=red]  (0,0) .. controls (-0.1,0.5) .. (0,1);
\draw[ultra thick, style=dashed,color=red]  (1,0) .. controls (0.9,0.5) .. (1,1);
\draw[ultra thick, style=dashed,color=red]  (2,0) .. controls (1.9,0.5) ..  (2,1);
\draw[ultra thick, style=dashed,color=red] (3,0) .. controls (2.9,0.5) .. (3,1);
\draw[ultra thick, style=dashed,color=red] (4,0) .. controls (3.9,0.5) .. (4,1);
\draw[ultra thick, style=dashed,color=red] (5,0) .. controls (4.9,0.5) .. (5,1);
\draw[ultra thick, style=dashed,color=red] (6,0) .. controls (5.9,0.5) .. (6,1);
\draw[ultra thick, style=dashed,color=red] (7,0) .. controls (6.9,0.5) .. (7,1);
\draw[ultra thick, style=dashed,color=red] (8,0) .. controls (7.9,0.5) .. (8,1);
\draw[ultra thick, style=dashed,color=red] (9,0) .. controls (8.9,0.5) .. (9,1);
\draw[ultra thick, style=dashed,color=red] (10,0) .. controls (9.9,0.5) ..  (10,1);
\draw[ultra thick, style=dashed,color=red] (11,0) .. controls (10.9,0.5) ..  (11,1);
\draw[ultra thick, style=dashed,color=red]  (12,0) .. controls (11.9,0.5) ..  (12,1);
\draw[ultra thick,color=blue]  (0,0) .. controls (0.1,0.5) .. (0,1);
\draw[ultra thick,color=blue]  (1,0) .. controls (1.1,0.5) .. (1,1);
\draw[ultra thick,color=blue] (2,0) .. controls (2.1,0.5) ..  (2,1);
\draw[ultra thick,color=blue] (3,0) .. controls (3.1,0.5) .. (3,1);
\draw[ultra thick,color=blue] (4,0) .. controls (4.1,0.5) .. (4,1);
\draw[ultra thick,color=blue] (5,0) .. controls (5.1,0.5) .. (5,1);
\draw[ultra thick,color=blue] (6,0) .. controls (6.1,0.5) .. (6,1);
\draw[ultra thick,color=blue] (7,0) .. controls (7.1,0.5) .. (7,1);
\draw[ultra thick,color=blue] (8,0) .. controls (8.1,0.5) .. (8,1);
\draw[ultra thick,color=blue] (9,0) .. controls (9.1,0.5) .. (9,1);
\draw[ultra thick,color=blue] (10,0) .. controls (10.1,0.5) ..  (10,1);
\draw[ultra thick,color=blue] (11,0) .. controls (11.1,0.5) ..  (11,1);
\draw[ultra thick,color=blue] (12,0) .. controls (12.1,0.5) ..  (12,1);

\draw[ultra thick, style=dashed,color=red] (1,1) -- (1,2);
\draw[ultra thick, style=dashed,color=red] (2,1) -- (2,2);
\draw[ultra thick, style=dashed,color=red] (3,1) -- (3,2);
\draw[ultra thick, style=dashed,color=red] (4,1) -- (4,2);
\draw[ultra thick, style=dashed,color=red] (5,1) -- (5,2);
\draw[ultra thick, style=dashed,color=red] (6,1) -- (6,2);
\draw[ultra thick, style=dashed,color=red] (7,1) -- (7,2);
\draw[ultra thick, style=dashed,color=red] (8,1) -- (8,2);
\draw[ultra thick, style=dashed,color=red] (9,1) -- (9,2);
\draw[ultra thick,color=blue](2,2) -- (2,3);
\draw[ultra thick,color=blue] (3,2) -- (3,3);
\draw[ultra thick,color=blue] (4,2) -- (4,3);
\draw[ultra thick,color=blue] (5,2) -- (5,3);
\draw[ultra thick,color=blue] (6,2) -- (6,3);
\draw[ultra thick,color=blue] (7,2) -- (7,3);
\draw[ultra thick,color=blue] (8,2) -- (8,3);
\draw[ultra thick, style=dashed,color=red] (2,3) -- (2,4);
\draw[ultra thick, style=dashed,color=red] (3,3) -- (3,4);
\draw[ultra thick, style=dashed,color=red] (4,3) -- (4,4);
\draw[ultra thick, style=dashed,color=red] (5,3) -- (5,4);
\draw[ultra thick, style=dashed,color=red] (6,3) -- (6,4);
\draw[ultra thick, style=dashed,color=red] (7,3) -- (7,4);
\draw[ultra thick, style=dashed,color=red] (8,3) -- (8,4);
\draw[ultra thick,color=blue] (3,4) -- (3,5);
\draw[ultra thick,color=blue] (4,4) -- (4,5);
\draw[ultra thick,color=blue] (5,4) -- (5,5);
\draw[ultra thick,color=blue] (6,4) -- (6,5);
\draw[ultra thick,color=blue] (7,4) -- (7,5);
\draw[ultra thick,color=blue] (8,4) -- (8,5);
\draw[ultra thick,color=blue] (9,4) -- (9,5);
\draw[ultra thick, style=dashed,color=red] (2,5) .. controls (1.9,5.5) .. (2,6);
\draw[ultra thick,color=blue] (2,5) .. controls (2.1,5.5) .. (2,6);
\draw[ultra thick, style=dashed,color=red] (3,5) -- (3,6);
\draw[ultra thick, style=dashed,color=red] (4,5) -- (4,6);
\draw[ultra thick, style=dashed,color=red] (5,5) -- (5,6);
\draw[ultra thick, style=dashed,color=red] (6,5) -- (6,6);
\draw[ultra thick, style=dashed,color=red] (7,5) -- (7,6);
\draw[ultra thick, style=dashed,color=red] (8,5) -- (8,6);
\draw[ultra thick, style=dashed,color=red] (9,5) -- (9,6);
\draw[ultra thick,color=blue] (10,5) .. controls (10.1,5.5) .. (10,6);
\draw[ultra thick, style=dashed,color=red] (10,5) .. controls (9.9,5.5) .. (10,6);

\draw[ultra thick,color=blue] (0,2) -- (1,2);
\draw[ultra thick,color=blue] (0,3) -- (1,3);
\draw[ultra thick,color=blue] (0,4) -- (1,4);
\draw[ultra thick,color=blue]  (0,5) -- (1,5);
\draw[ultra thick, style=dashed,color=red] (0,6) .. controls (0.5,6.1) .. (1,6);
\draw[ultra thick,color=blue] (0,6) .. controls (0.5,5.9) .. (1,6);

\draw[ultra thick, style=dashed,color=red] (1,3) -- (2,3);
\draw[ultra thick, style=dashed,color=red] (1,4) -- (2,4);
\draw[ultra thick, style=dashed,color=red]  (1,5) -- (2,5);
\draw[ultra thick,color=blue] (2,4) -- (3,4);
\draw[ultra thick, style=dashed,color=red]  (8,3) -- (9,3);
\draw[ultra thick,color=blue] (9,2) -- (10,2);
\draw[ultra thick,color=blue] (9,3) -- (10,3);
\draw[ultra thick, style=dashed,color=red] (9,4) -- (10,4);
\draw[ultra thick, style=dashed,color=red] (10,2) -- (11,2);
\draw[ultra thick, style=dashed,color=red] (10,3) -- (11,3);
\draw[ultra thick,color=blue] (10,4) -- (11,4);
\draw[ultra thick, style=dashed,color=red] (10,5) -- (11,5);
\draw[ultra thick,color=blue] (11,2) -- (12,2);
\draw[ultra thick,color=blue] (11,3) -- (12,3);
\draw[ultra thick, style=dashed,color=red] (11,4) .. controls (11.5,4.1) .. (12,4);
\draw[ultra thick,color=blue] (11,4) .. controls (11.5,3.9) .. (12,4);

\draw[ultra thick,color=blue] (11,5) -- (12,5);
\draw[ultra thick, style=dashed,color=red] (11,6) .. controls (11.5,6.1) .. (12,6);
\draw[ultra thick,color=blue] (11,6) .. controls (11.5,5.9) .. (12,6);

\node at (0,0) [circle,draw=black,fill=blue,scale=0.7]{};
\node at (1,0) [circle,draw=black,fill=blue,scale=0.7]{};
\node at (2,0) [circle,draw=black,fill=blue,scale=0.7]{};
\node at (3,0) [circle,draw=black,fill=blue,scale=0.7]{};
\node at (4,0) [circle,draw=black,fill=blue,scale=0.7]{};
\node at (5,0) [circle,draw=black,fill=blue,scale=0.7]{};
\node at (6,0) [circle,draw=black,fill=blue,scale=0.7]{};
\node at (7,0) [circle,draw=black,fill=blue,scale=0.7]{};
\node at (8,0) [circle,draw=black,fill=blue,scale=0.7]{};
\node at (9,0) [circle,draw=black,fill=blue,scale=0.7]{};
\node at (10,0) [circle,draw=black,fill=blue,scale=0.7]{};
\node at (11,0) [circle,draw=black,fill=blue,scale=0.7]{};
\node at (12,0) [circle,draw=black,fill=blue,scale=0.7]{};
\node at (0,1) [regular polygon, regular polygon sides=3,draw=black,fill=red,scale=0.5]{};
\node at (1,1) [regular polygon, regular polygon sides=3,draw=black,fill=red,scale=0.5]{};
\node at (2,1) [regular polygon, regular polygon sides=3,draw=black,fill=red,scale=0.5]{};
\node at (3,1) [regular polygon, regular polygon sides=3,draw=black,fill=red,scale=0.5]{};
\node at (4,1) [regular polygon, regular polygon sides=3,draw=black,fill=red,scale=0.5]{};
\node at (5,1) [regular polygon, regular polygon sides=3,draw=black,fill=red,scale=0.5]{};
\node at (6,1) [regular polygon, regular polygon sides=3,draw=black,fill=red,scale=0.5]{};
\node at (7,1) [regular polygon, regular polygon sides=3,draw=black,fill=red,scale=0.5]{};
\node at (8,1) [regular polygon, regular polygon sides=3,draw=black,fill=red,scale=0.5]{};
\node at (9,1) [regular polygon, regular polygon sides=3,draw=black,fill=red,scale=0.5]{};
\node at (10,1) [regular polygon, regular polygon sides=3,draw=black,fill=red,scale=0.5]{};
\node at (11,1)  [regular polygon, regular polygon sides=4,draw=black,fill=purple,scale=0.7]{};
\node at (12,1) [regular polygon, regular polygon sides=3,draw=black,fill=red,scale=0.5]{};
\node at (0,2) [regular polygon, regular polygon sides=3,draw=black,fill=red,scale=0.5]{};
\node at (1,2) [circle,draw=black,fill=blue,scale=0.7]{};
\node at (2,2) [circle,draw=black,fill=blue,scale=0.7]{};
\node at (3,2) [circle,draw=black,fill=blue,scale=0.7]{};
\node at (4,2) [circle,draw=black,fill=blue,scale=0.7]{};
\node at (5,2) [circle,draw=black,fill=blue,scale=0.7]{};
\node at (6,2) [circle,draw=black,fill=blue,scale=0.7]{};
\node at (7,2) [circle,draw=black,fill=blue,scale=0.7]{};
\node at (8,2) [circle,draw=black,fill=blue,scale=0.7]{};
\node at (9,2) [circle,draw=black,fill=blue,scale=0.7]{};
\node at (10,2) [regular polygon, regular polygon sides=3,draw=black,fill=red,scale=0.5]{};
\node at (11,2) [circle,draw=black,fill=blue,scale=0.7]{};
\node at (12,2) [regular polygon, regular polygon sides=3,draw=black,fill=red,scale=0.5]{};
\node at (0,3) [regular polygon, regular polygon sides=3,draw=black,fill=red,scale=0.5]{};
\node at (1,3) [circle,draw=black,fill=blue,scale=0.7]{};
\node at (2,3) [regular polygon, regular polygon sides=3,draw=black,fill=red,scale=0.5]{};
\node at (3,3) [regular polygon, regular polygon sides=3,draw=black,fill=red,scale=0.5]{};
\node at (4,3) [regular polygon, regular polygon sides=3,draw=black,fill=red,scale=0.5]{};
\node at (5,3) [regular polygon, regular polygon sides=3,draw=black,fill=red,scale=0.5]{};
\node at (6,3) [regular polygon, regular polygon sides=3,draw=black,fill=red,scale=0.5]{};
\node at (7,3) [regular polygon, regular polygon sides=3,draw=black,fill=red,scale=0.5]{};
\node at (8,3) [regular polygon, regular polygon sides=3,draw=black,fill=red,scale=0.5]{};
\node at (9,3) [circle,draw=black,fill=blue,scale=0.7]{};
\node at (10,3) [regular polygon, regular polygon sides=3,draw=black,fill=red,scale=0.5]{};
\node at (11,3) [circle,draw=black,fill=blue,scale=0.7]{};
\node at (12,3) [regular polygon, regular polygon sides=3,draw=black,fill=red,scale=0.5]{};
\node at (0,4) [regular polygon, regular polygon sides=3,draw=black,fill=red,scale=0.5]{};
\node at (1,4) [circle,draw=black,fill=blue,scale=0.7]{};
\node at (2,4) [regular polygon, regular polygon sides=3,draw=black,fill=red,scale=0.5]{};
\node at (3,4) [circle,draw=black,fill=blue,scale=0.7]{};
\node at (4,4) [circle,draw=black,fill=blue,scale=0.7]{};
\node at (5,4) [circle,draw=black,fill=blue,scale=0.7]{};
\node at (6,4) [circle,draw=black,fill=blue,scale=0.7]{};
\node at (7,4) [circle,draw=black,fill=blue,scale=0.7]{};
\node at (8,4) [circle,draw=black,fill=blue,scale=0.7]{};
\node at (9,4) [circle,draw=black,fill=blue,scale=0.7]{};
\node at (10,4) [regular polygon, regular polygon sides=3,draw=black,fill=red,scale=0.5]{};
\node at (11,4) [circle,draw=black,fill=blue,scale=0.7]{};
\node at (12,4) [regular polygon, regular polygon sides=3,draw=black,fill=red,scale=0.5]{};
\node at (0,5) [regular polygon, regular polygon sides=3,draw=black,fill=red,scale=0.5]{};
\node at (1,5) [circle,draw=black,fill=blue,scale=0.7]{};
\node at (2,5) [regular polygon, regular polygon sides=3,draw=black,fill=red,scale=0.5]{};
\node at (2,5) [regular polygon, regular polygon sides=3,draw=black,fill=red,scale=0.5]{};
\node at (3,5) [regular polygon, regular polygon sides=3,draw=black,fill=red,scale=0.5]{};
\node at (4,5) [regular polygon, regular polygon sides=3,draw=black,fill=red,scale=0.5]{};
\node at (5,5) [regular polygon, regular polygon sides=3,draw=black,fill=red,scale=0.5]{};
\node at (6,5) [regular polygon, regular polygon sides=3,draw=black,fill=red,scale=0.5]{};
\node at (7,5) [regular polygon, regular polygon sides=3,draw=black,fill=red,scale=0.5]{};
\node at (8,5) [regular polygon, regular polygon sides=3,draw=black,fill=red,scale=0.5]{};
\node at (9,5) [regular polygon, regular polygon sides=3,draw=black,fill=red,scale=0.5]{};
\node at (10,5) [regular polygon, regular polygon sides=3,draw=black,fill=red,scale=0.5]{};
\node at (11,5) [circle,draw=black,fill=blue,scale=0.7]{};
\node at (12,5) [regular polygon, regular polygon sides=3,draw=black,fill=red,scale=0.5]{};
\node at (0,6) [regular polygon, regular polygon sides=3,draw=black,fill=red,scale=0.5]{};
\node at (1,6) [circle,draw=black,fill=blue,scale=0.7]{};
\node at (2,6) [circle,draw=black,fill=blue,scale=0.7]{};
\node at (3,6) [circle,draw=black,fill=blue,scale=0.7]{};
\node at (4,6) [circle,draw=black,fill=blue,scale=0.7]{};
\node at (5,6) [circle,draw=black,fill=blue,scale=0.7]{};
\node at (6,6) [circle,draw=black,fill=blue,scale=0.7]{};
\node at (7,6) [circle,draw=black,fill=blue,scale=0.7]{};
\node at (8,6) [circle,draw=black,fill=blue,scale=0.7]{};
\node at (9,6) [circle,draw=black,fill=blue,scale=0.7]{};
\node at (10,6) [circle,draw=black,fill=blue,scale=0.7]{};
\node at (11,6) [circle,draw=black,fill=blue,scale=0.7]{};
\node at (12,6) [regular polygon, regular polygon sides=3,draw=black,fill=red,scale=0.5]{};

\end{tikzpicture}
\end{center}
\caption{Two $(1,18)$-disjoint spanning trees of $G(7, 13)$.}
\label{CDSgrille}
\end{figure}
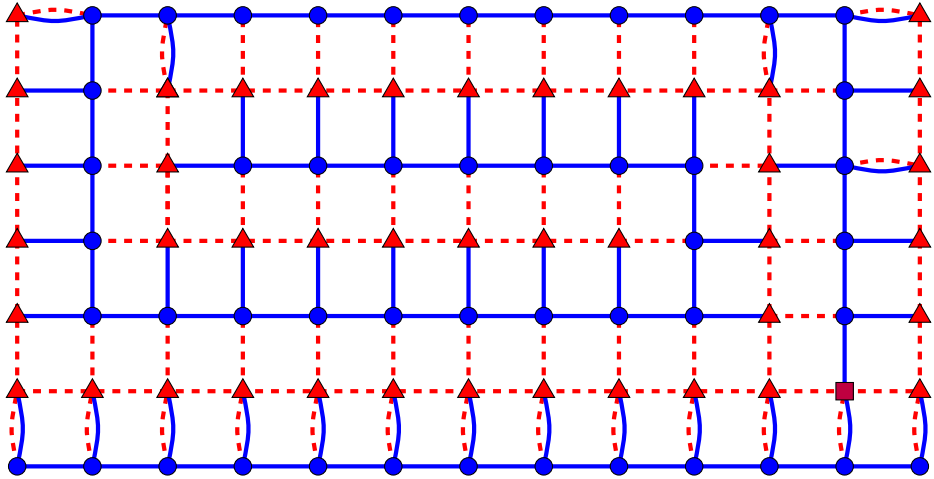
\begin{proof}
Suppose without loss of generality that $n_1\le n_2$. If $n_1=3$, then one can easily construct two $1$-rooted connected dominating sets by setting $D_1=\{(1,j)|j\in \{0,\ldots,n_2-1\}\}$ and by setting $D_2=V(G(3,n_2))\setminus D_1\cup \{(1,0)\}$.

Now suppose that $n_1\ge 4$. We construct $D_1$ as follows: \newline
$D_1=\{(0,i)|\ i\in\{ 0,\ldots,n_2-1\} \} \cup \{(n_1-1-2i,j)|\ i\in \{0,\ldots, \lfloor (n_1+1)/4 \rfloor \},\ 2i+1\le j\le n_2-2-2i \}$ $\cup \{(2+2i,j)|\ i\in \{0,\ldots, \lfloor (n_1-1)/4 \rfloor\},\ 2i+1 \le j\le n_2-4-2i \} \cup \{(i, 1+2j)|\ 2j+2\le i \le n_1-1-2j,\ j\in \{0,\ldots, \lfloor n_1/4 \rfloor \}\cup \{(i, n_2-2-2j)|\ 2j\le i\le n_1-1-2j,\ j\in \{0,\ldots, \lfloor (n_1+2)/4 \rfloor \}$.\newline
The set $D_2$ is $V(G(n_1,n_2))\setminus D_1 \cup \{(1,n_2-2)\}$. Note that $D_1\cap D_2=\{(1,n_2-2)\}$.

Figure \ref{CDSgrille} illustrates this construction, with circle vertices corresponding to $D_1$ and triangles to $D_2$ (the square vertex being both in $D_1$ and $D_2$).
\end{proof}

\section*{Open questions}

In this introducing paper about $(i,j)$-disjoint spanning trees, we tried to cover a large number of issues. However, there still remains a lot of interesting properties to be found about this notion. We finish this paper by giving some open questions:
\begin{enumerate}
\item Does there exist $k$ $( log_k(n), 0)$-disjoint spanning trees in every $k$-connected graph of order $n$?
\item Determine conditions in order to guarantee the existence of $k$ completely independent spanning trees, $k\ge 3$, in the square of graphs.
\item Determine conditions on chordal graphs in order to guarantee the existence of two disjoint connected dominating set.
\item Determine $\text{dst}_{i,j}(K_n)$ for the remaining cases.
\item Determine $\text{dst}_{i,j}$ for the complete $k$-partite graphs.
\item Determine the minimum number of common edges $m$ in order to have two $(1,m)$-disjoint spanning trees in the square grid.
\end{enumerate}

\end{document}